\documentclass[journal]{IEEEtran}
\usepackage{amsmath,amsfonts,amssymb,amsthm,mathtools,mathrsfs}
\usepackage{algorithm}
\usepackage{algpseudocode}
\usepackage{array}
\usepackage[caption=false,font=normalsize,labelfont=sf,textfont=sf]{subfig}
\usepackage{textcomp}
\usepackage{stfloats}
\usepackage{url}
\usepackage{multirow}
\usepackage{verbatim}
\usepackage{graphicx}
\usepackage{cite}
\usepackage{enumitem}
\usepackage{tabularx, booktabs, ragged2e}
\newcolumntype{L}{>{\RaggedRight\arraybackslash}X}
\usepackage{makecell}
\usepackage{etoolbox}
\usepackage{chngcntr}
\usepackage{pgfplots}
\usepgfplotslibrary{groupplots}
\pgfplotsset{compat=1.18}
\usetikzlibrary{calc}
\usepackage{tikz}
\usetikzlibrary{arrows.meta}
\usetikzlibrary{mindmap, shadows, shapes.geometric, positioning, calc, fit}
\usepackage[hidelinks]{hyperref}

\newcommand{\Cn}{\text{Cn}}

\newcommand{\E}{\mathbb{E}}

\newcommand{\Qdist}{\mathcal Q_{m,\kappa}^{\mathrm{dist}}}

\newtheorem{assumption}{Assumption}[section]
\newtheorem{theorem}{Theorem}[section]
\newtheorem{lemma}{Lemma}[section]
\newtheorem{proposition}{Proposition}[section]
\newtheorem{corollary}{Corollary}[section]
\newtheorem{definition}{Definition}[section]
\newtheorem{remark}{Remark}[section]
\newtheorem{example}{Example}[section]

\numberwithin{equation}{section}

\makeatletter
\renewenvironment{proof}[1][\proofname]{\par
  \pushQED{\qed}%
  \normalfont \topsep6\p@\@plus6\p@\relax
  \trivlist
  \item[\hskip\labelsep
        \itshape
    #1\@addpunct{.}]\ignorespaces
}{%
  \popQED\endtrivlist\@endpefalse
}
\makeatother

\DeclareMathOperator{\Enc}{Enc}

\DeclareMathOperator{\Dec}{Dec} 

\begin{document}

\title{Proof-Valid Caching under Premise Erasures: Local Structural Limits and Shared-Workload Gains}

\author{Jianfeng~Xu$^{1}$%
\thanks{$^{1}$Koguan School of Law, China Institute for Smart Justice,
School of Computer Science, Shanghai Jiao Tong University,
Shanghai 200030, China. Email: xujf@sjtu.edu.cn}%
}

\maketitle

\begin{abstract}
We study reliable query recovery under independent premise erasures in
semantically transparent caching systems, where every cached object must be
a logical consequence of the premise base.  Recovery succeeds only when the
query remains derivable from surviving premises and the cache.  Under a
deterministic canonical-witness regime, we prove a query-local projection
theorem and an exact residual-leaf law: recovery fails exactly when an
erased base leaf retains a cache-free path to the query.  Single-query
design becomes weighted partial path interception.
For shared workloads, we introduce semantic modules and derive exact
reliability laws under joint and maximal-error criteria.  Under the joint
criterion, the shared-module cache is exactly optimal under exact module
routing and homogeneous costs, whereas optimal selection in general
derivation DAGs is NP-complete at depth two.  Against a coded benchmark
recovering workload-relevant leaf payloads, MDS parity caching is optimal
up to one packet.  Leaf-only transparency incurs a first-order overhead
factor \(1/\varepsilon\); in the saturated shared regime, semantic modules
reduce it to \(\rho/(s\varepsilon)\), where \(s\) is the number of leaves
protected by one module and \(\rho\) is the module-to-leaf cost ratio.
A dedicated finite-blocklength and workload-level numerical study then
verifies the exact laws: a Datalog witness checks the recovery semantics
end to end, Monte Carlo estimates with Wilson \(95\%\) confidence intervals
match the residual-leaf and MDS laws up to scale \(n=10^{5}\), and
shared-module ensembles exhibit the predicted transparent-storage gains.
Finally, the same exact formulas admit two coding-theoretic readings: they
yield a stochastic-erasure analogue of function protection and a
derivation-level distributional analogue of the maximal-recoverability
principle.
\end{abstract}

\begin{IEEEkeywords}
Proof-valid recovery, semantic transparency, premise erasures, derivation
DAGs, erasure coding benchmarks, workload sharing, computational complexity
of caching.
\end{IEEEkeywords}

\section{Introduction}
\label{sec:introduction}

\IEEEPARstart{R}{eliable} storage under erasures is a classical problem in
information theory: data are lost, and redundancy is added so that the lost
content can be reconstructed.  This paper studies a constrained form of that
problem.  The constraint is \emph{semantic transparency}: every stored object
must be an explicit logical consequence of the source data, so that it can be
inspected, reused, and verified as a premise or a derived fact rather than
serving as an opaque parity string.  Such constraints arise in deductive
databases \cite{abiteboul1995foundations,ullman1988database}, provenance-aware
query processing
\cite{green2007provenance,cheney2009provenance,sikos2020provenance}, and
auditable reasoning services \cite{raza2025responsible,liu2026algebraic},
where a cached object is worth only as much as the explanation it carries.

Transparency restricts the admissible redundancy class.  An unrestricted
erasure code may store any function of the source data; a transparent cache
may store only objects derivable from it.  The admissible class therefore
shrinks from arbitrary codes to derivable objects.  The shrinkage has a
price: the parity symbols that attain the erasure-channel limit are generally
not derivable.  The restricted class also has its own resources.  A single
cached consequence can protect many premise-to-query derivation paths at
once, and it can serve many queries when these queries share canonical
sub-derivations.  This paper makes both effects exact.  We characterize the
structural cost of the transparency constraint, and the storage gains of
shared semantic structure, in a canonical regime that admits a complete
analysis.

\subsection{Problem Statement and Main Message}
\label{subsec:problem-statement}

The model is as follows.  A finite base \(B\) of premise objects is given,
and \(\Cn(\cdot)\) denotes deductive closure.  Each premise is erased
independently with probability \(\varepsilon\); let
\(\widetilde B\subseteq B\) denote the surviving subset.  For a query \(q\),
a cache \(S(q)\) is committed before the erasure realization.  The cache is
semantically transparent if \(S(q)\subseteq\Cn(B)\), and recovery is
successful exactly when \(q\in\Cn(\widetilde B\cup S(q))\).  The design
problem is to minimize the storage cost of the cache, subject to recovery
succeeding with probability at least \(1-\delta\).

Three cache classes are compared throughout the paper.  Unrestricted coded
caches may store arbitrary functions of the premise payloads.
Semantic-aware transparent caches may store any derivable object.
Leaf-only transparent caches may store only raw premises.  The classes are nested, so their optimal
costs are ordered.  Writing \(\sigma_{\mathrm{code}}^{*}\),
\(\sigma_{\mathrm{sem}}^{*}\), and \(\sigma_{\mathrm{leaf}}^{*}\) for the
minimum achievable storage of each class under a common reliability target,
\begin{equation}
  \sigma_{\mathrm{code}}^*
  \;\le\;
  \sigma_{\mathrm{sem}}^*
  \;\le\;
  \sigma_{\mathrm{leaf}}^*.
\label{eq:intro-three-way}
\end{equation}
Both inequalities hold by class inclusion.  Their quantitative content is the
subject of this paper: how large the first gap must be, and when the second
gap is strict.  For the first question, we analyze an explicit coded outer
benchmark in which the cache must restore all workload-relevant leaf
payloads; systematic MDS parity caching is optimal for this benchmark up to
one parity packet.  For the second question, we give explicit sufficient
conditions under which semantic structure strictly reduces storage below the
leaf-only optimum, and we compute both sides exactly in the regimes where
derivation structure is shared.

Our exact results use structure on the logical side.  We work in a canonical
regime: every derived object has one designated parent tuple, and every query
admits a finite hereditary derivation DAG.  The regime is restrictive.  We
view it as the exact core of a broader problem, and
Section~\ref{sec:discussion} discusses relaxations.  
Within this regime the main results form a chain.  A single-query kernel
gives an exact failure law: recovery fails if and only if some erased base
leaf retains a cache-free derivation path to the query.  At the workload
level, shared sub-derivations become reusable semantic modules, and under
exact module routing with homogeneous leaf and module costs the
joint-criterion optimum is computed exactly.  Against the explicit coded
outer benchmark, both transparent architectures admit exact first-order
overhead laws.  Finally, the same exact laws admit two coding-theoretic
readings: they yield an exactly solvable function-protection family under
stochastic erasures and a derivation-level distributional analogue of
maximal recoverability for derivation-structured content.

From an information-theoretic viewpoint, the paper studies redundancy under
an admissibility constraint: transparent storage is allowed to keep only
derivable objects, whereas coded storage may use arbitrary functions of the
payloads.  The resulting comparison is therefore not merely logical or
combinatorial.  It is a reliability-versus-redundancy problem under a
structural constraint, with exact stochastic-erasure laws on the
transparent side and an explicit coded outer benchmark on the unrestricted
side.

\subsection{Related Work}
\label{subsec:related-work}

\paragraph*{Erasure coding and finite-blocklength information theory}
The coded side of the paper rests on the classical theory of the
packet-erasure channel
\cite{shannon1948mathematical,gallager1968information,cover2006elements} and
its finite-blocklength refinements \cite{polyanskiy2010channel}.
Maximum-distance-separable (MDS) codes attain the erasure capacity
\cite{singleton1964maximum,macwilliams1977theory}, a role reaffirmed by
recent coding theorems for generalized Reed--Solomon ensembles
\cite{zheng2026coding}, and strong-converse results
characterize the exponential decay of the success probability above capacity
\cite{hamad2024strong,takeuchi2025tight}.  We import these tools to build an
explicit outer benchmark; the novelty here is not the benchmark itself but
the exact comparison of transparent architectures against it.

\paragraph*{Caching theory}
Belady's algorithm and its variants give optimal offline demand paging
\cite{belady1966study,megiddo2003arc}.  Coded caching exploits content
placement to create multicast gains
\cite{maddah2014fundamental,wan2021fundamental,lu2026demand}, and
semantic-aware caching in databases reuses query results at the level of
semantic regions \cite{dar1996semantic}.  In all of these lines, cached
content is opaque to the logic that produced it.  Our model adds a
derivability constraint: a cached object must remain an explicit consequence
of the premise base.  The constraint changes both the reliability law and the
optimal placement structure.

\paragraph*{Function protection and maximal recoverability}
Function-correcting codes protect a prescribed function value against \(t\)
worst-case symbol errors.  Their optimal redundancy is characterized through
irregular-distance codes and the DRM--FDM
sandwich~\cite{lenz2023function,rajput2026function}; recent work gives exact
or nearly tight redundancy laws for structured function families and
finite-field regimes~\cite{premlal2025function,zhang2025optimal,ly2025redundancy,verma2026function}.
Maximally recoverable codes correct every erasure pattern compatible with
the imposed locality constraints, with recent exact pattern
characterizations under locality and
availability~\cite{gopi2020maximally,martinez2026maximally}.  Both theories
are worst case and symbol level; random-erasure reliability for practical
LRC designs is typically evaluated empirically~\cite{kadekodi2023wide}.
Section~\ref{sec:coding-theoretic} uses neither theory as an assumption:
instead, it extracts the derivation-structured stochastic-erasure
counterparts that the present exact laws solve, while
Section~\ref{sec:numerical} provides the finite-blocklength numerical
validation of the same comparison.

\paragraph*{Deductive databases, provenance, and knowledge compilation}
The logical infrastructure of this paper---deductive closure, canonical
derivations, and derivation DAGs---is classical
\cite{abiteboul1995foundations,ullman1988database,dantsin2001complexity,green2007provenance,cheney2009provenance}.
View maintenance asks how cached consequences evolve as data are updated
\cite{gupta1997materialized}, and knowledge compilation seeks compact
representations of derivable knowledge
\cite{darwiche2002knowledge,brachman2004knowledge,eiter2009answer,krotzsch2025modern}.
Neither line addresses which derivations survive the loss of premises, or at
what storage cost the surviving ones can be guaranteed.

\paragraph*{Semantic communication and LLM caching}
Semantic and goal-oriented communication transmits task-relevant information
rather than raw bits
\cite{yang2023semantic,xin2024semantic,yu2023semantic}, and a related systems
line caches prompt--response pairs of large language models to serve similar
new prompts \cite{bang2023gptcache,gill2024meancache,li2024scalm}.  Both
paradigms are approximate by design; the present model is their exact
counterpart, in which a cache hit is a proof and its reliability under
premise loss is computed in closed form.

To our knowledge, no prior work
derives exact storage--reliability laws for proof-valid caching under premise
erasures, or quantifies the overhead of semantic transparency against
unrestricted erasure coding.

\subsection{Main contributions}
\label{subsec:contributions}

\begin{itemize}[leftmargin=5mm]
\item \emph{An exact single-query law.}
Under the canonical regime, recovery fails if and only if an erased base leaf
retains a cache-free derivation path to the query.  Hence the recovery
probability is \((1-\varepsilon)^{|D_{\mathrm{exp}}|}\), where
\(D_{\mathrm{exp}}\) is the set of exposed residual leaves.  Optimal
transparent storage reduces to a weighted partial path-interception problem;
for a query with \(\kappa\) relevant leaves of \(c_B\) bits each, the
leaf-only optimum has the closed form
\(\sigma_{\mathrm{leaf}}^*=(\kappa-N^*(\varepsilon,\delta))^+c_B\).

\item \emph{Exact workload-level gains from shared semantic modules.}
For workloads whose queries share canonical sub-derivations, we introduce
reusable semantic modules.  Under the joint criterion, the shared-module
construction is exactly optimal under exact module routing and homogeneous
leaf and module costs, and it strictly improves on the best leaf-only
transparent cache under checkable sufficient conditions.  The tractable
regime is sharp: optimal selection in general derivation DAGs is
NP-complete already at depth two.

\item \emph{A coded outer benchmark with exact finite-blocklength content.}
For the explicit payload-recovery benchmark, we prove systematic MDS parity
caching optimal up to one parity packet, with reliability slack
\((2^{c_B}-1)^{-1}\), and we establish a strong converse with the exact
Kullback--Leibler exponent \(D(\varepsilon-\gamma\|\varepsilon)\) under the
alphabet-growth regime of Section~\ref{sec:coded-outer}.  Against this
benchmark, leaf-only transparent storage incurs a first-order overhead factor
\(1/\varepsilon\).  The optimal shared-module cache reduces the factor to
\(\rho/(s\varepsilon)\) in the saturated shared regime, where \(s\) is the
number of leaves protected by one module and \(\rho\) is the module-to-leaf
cost ratio.

\item \emph{A full finite-blocklength and workload-level numerical validation.}
A dedicated numerical section verifies the theory end to end.  A hand-checkable
Datalog witness validates the recovery semantics locally; Monte Carlo
estimates with Wilson \(95\%\) confidence intervals match both exact
stochastic laws---the residual-leaf law on the transparent side and the
binomial-quantile law on the coded side---up to scale \(n=10^{5}\); and
shared-workload ensembles quantify the transparent-storage gains predicted by
the exact module-routing theory.

\item \emph{Coding-theoretic counterparts of the transparent laws.}
The same exact formulas admit two coding-theoretic readings.  They yield a
derivation-structured stochastic-erasure analogue of function protection and
a derivation-level distributional analogue of the maximal-recoverability
principle for derivation-structured content.
\end{itemize}

A complement rounds out the paper: an envelope law for non-unique
AND--OR derivations, developed in
Subsection~\ref{subsec:envelope-nonunique} and
Appendix~\ref{app:envelope}.

\subsection{Organization}
\label{subsec:organization}

Section~\ref{sec:model} defines the deductive model, the canonical regime,
and the cache classes.  Section~\ref{sec:single-query-kernel} proves the
single-query kernel and the exact transparent-storage optimum.
Section~\ref{sec:semantic-scenario} develops the shared-workload theory:
the workload reliability laws, the exact optimality of semantic modules,
and the complexity boundary.
Section~\ref{sec:coded-outer} constructs the coded outer benchmark and
proves the MDS near-optimality, the strong converse, and the overhead
laws.  Section~\ref{sec:numerical} presents a full finite-blocklength and
workload-level numerical study, including end-to-end Datalog verification,
Monte Carlo checks, and exact evaluations of the three-way storage curves.
Section~\ref{sec:coding-theoretic} then develops the function-protection
and maximal-recoverability counterparts of the transparent laws.
Section~\ref{sec:discussion} discusses the scope, limitations, and
conclusion.  Appendix~\ref{app:envelope} collects the envelope-law
consequences.

\section{Model and Preliminaries}
\label{sec:model}

This section fixes the model in two layers, kept separate throughout the
paper: the exact structural regime of Assumptions~\ref{assump:det-local}
and~\ref{assump:hereditary-dag}, and an auxiliary benchmark representation
of the premise payloads (Subsection~\ref{subsec:coded-benchmark-layer})
that is not a claimed factorization of the surviving premise set.  The
deductive formalization follows standard finite deductive-database and
logic-programming semantics
\cite{abiteboul1995foundations,ullman1988database,dantsin2001complexity,green2007provenance};
the coded benchmark is calibrated against classical erasure-channel and
finite-blocklength limits
\cite{shannon1948mathematical,cover2006elements,polyanskiy2010channel}.

\subsection{Deductive Setting}
\label{subsec:deductive-setting}

Let \(B=\{b_1,\ldots,b_m\}\), \(m:=|B|\ge 2\), be a finite premise base,
and let \(\Cn(\Gamma)\) denote the deductive closure of a finite set
\(\Gamma\).  Fix a finite query family \(\mathcal Q\subseteq\Cn(B)\).
The premise identifiers, the deductive rules, and the canonical parent map
introduced below are fixed system metadata, known to the encoder and the
decoder.  Metadata is not a free copy of the payloads: after an erasure
realization, a premise object can enter a derivation only if it survives or
has been stored explicitly.

For every finite object \(X\), let \(\langle X\rangle\) denote a fixed
self-delimiting encoding of \(X\).  The explicit-object cost function
\(c(\cdot)\) introduced below is measured in a fixed storage unit: raw bits,
in which case \(c(v)=\lvert\langle v\rangle\rvert\), or packet-equivalent
units after normalization.  It is kept as a general positive real-valued
function, so that heterogeneous object sizes and normalized module-cost ratios
are admitted.

The following two assumptions specify the exact canonical regime studied in
the paper.

\begin{assumption}[Effective deterministic local semantics]
\label{assump:det-local}
The operator \(\Cn\) is a finitary, extensive, idempotent, and monotone
closure operator on finite sets, and membership in \(\Cn(\Gamma)\) is
decidable (effectiveness).  There exist a finite relevant region
\[
  \mathcal V_{\mathrm{rel}}(B)\subseteq\Cn(B),
  \qquad
  B\cup\mathcal Q\subseteq\mathcal V_{\mathrm{rel}}(B),
\]
and a computable designated-parent map
\[
  \operatorname{par}:
  \mathcal V_{\mathrm{rel}}(B)\setminus B
  \longrightarrow
  \mathcal V_{\mathrm{rel}}(B)^{<\infty},
\]
where \(\mathcal V_{\mathrm{rel}}(B)^{<\infty}\) denotes the set of finite
tuples over \(\mathcal V_{\mathrm{rel}}(B)\).  The parent relation is
well-founded: backward expansion along \(\operatorname{par}\) from any
\(v\in\mathcal V_{\mathrm{rel}}(B)\) terminates at vertices in \(B\) after
finitely many steps, so the local law \eqref{eq:deterministic-local-law}
below is an unambiguous recursion rather than a fixpoint condition.
For every
\(v\in\mathcal V_{\mathrm{rel}}(B)\setminus B\), write
\[
  \operatorname{par}(v)
  =
  (u_1,\ldots,u_{a(v)}),
  \qquad a(v)\ge 1,
\]
where the entries are pairwise distinct.  For every finite
\(\Gamma\subseteq\Cn(B)\), the following laws hold:
\begin{align}
  b\in\Cn(\Gamma) &\Longleftrightarrow b\in\Gamma, \qquad b\in B, \label{eq:base-irreducibility}\\
  v\in\Cn(\Gamma) &\Longleftrightarrow 
    \bigl[v\in\Gamma \text{ or } \{u_1,\ldots,u_{a(v)}\}\subseteq\Cn(\Gamma)\bigr], \notag\\
  &\qquad\qquad v\in\mathcal V_{\mathrm{rel}}(B)\setminus B. \label{eq:deterministic-local-law}
\end{align}
Thus a relevant derived object is available either because it is stored
explicitly or because all members of its designated parent tuple are
derivable, whereas a base premise is available only when it is explicitly
present.
\end{assumption}

\begin{assumption}[Finite leaf-hereditary canonical derivation DAG]
\label{assump:hereditary-dag}
For every \(q\in\mathcal Q\), repeatedly expanding from \(q\) along the
designated parent map \(\operatorname{par}(\cdot)\) terminates after finitely
many steps at vertices in \(B\) and yields a finite acyclic graph
\[
  G(q,B)=\bigl(V(G(q,B)),E(G(q,B))\bigr).
\]
Edges are oriented from prerequisites to consequences, so \(q\) is the
distinguished sink and
\[
  V_0(G(q,B))
  :=
  V(G(q,B))\cap B
\]
is the set of source vertices in the edge orientation and terminal leaves
under backward expansion.

For every non-leaf vertex
\(v\in V(G(q,B))\setminus B\), if
\[
  \operatorname{par}(v)=(u_1,\ldots,u_{a(v)}),
\]
then the in-neighborhood of \(v\) in \(G(q,B)\) is exactly the underlying set
\[
  \{u_1,\ldots,u_{a(v)}\}.
\]
Whenever graph-theoretic membership or adjacency is considered, the ordered
tuple \(\operatorname{par}(v)\) is identified with its underlying set; its
ordering is retained only as part of the canonical description.

For every \(v\in V(G(q,B))\), let \(G_q[v]\) denote the predecessor-induced
sub-DAG of \(G(q,B)\) rooted backward at \(v\), and let \(G(v,B)\) denote the
DAG obtained by restarting the same canonical expansion from \(v\).  Then
\begin{equation}
  G_q[v]=G(v,B).
\label{eq:hereditary-subdag}
\end{equation}
When \(v\in B\), both graphs are understood as the singleton graph on \(v\).
In particular,
\begin{equation}
  V_0(G_q[v])=V_0(G(v,B)).
\label{eq:hereditary-leaves}
\end{equation}
\end{assumption}

\begin{example}[A canonical witness from a Datalog program]
\label{ex:minimal-canonical}
Let
\[
  B=\{a_{1},s_{1}\}
\]
consist of two base facts of a sandbox access-control program, and let the
rules
\[
  r_{1}^{1}\leftarrow a_{1},s_{1},
  \qquad
  q_{1}\leftarrow r_{1}^{1}
\]
fix the designated parent tuples
\[
  \operatorname{par}(r_{1}^{1})=(a_{1},s_{1}),
  \qquad
  \operatorname{par}(q_{1})=(r_{1}^{1}).
\]
The canonical DAG \(G(q_{1},B)\) of Fig.~\ref{fig:datalog-witness} has vertices
\(\{a_{1},s_{1},r_{1}^{1},q_{1}\}\) and leaves
\[
  V_{0}(G(q_{1},B))=\{a_{1},s_{1}\},
\]
and the hereditary identity \eqref{eq:hereditary-subdag} holds: the
predecessor sub-DAG of \(G(q_{1},B)\) at \(r_{1}^{1}\) is exactly
\(G(r_{1}^{1},B)\).  By the local law \eqref{eq:deterministic-local-law},
\(q_{1}\) is derivable from a surviving set \(\widetilde B\) if and only if
both leaves survive.  If the internal consequence \(r_{1}^{1}\) is cached,
then \(q_{1}\) is recovered under every erasure outcome; if only \(a_{1}\)
is cached, recovery still requires the survival of \(s_{1}\).  These two
contrasts are the smallest instance of the projection principle and the
residual-leaf law, and Fig.~\ref{fig:datalog-witness} displays them.
\end{example}

\begin{figure}[t]
\centering
\begin{tikzpicture}[
  font=\scriptsize,
  leaf/.style={circle,draw,thick,fill=blue!10,inner sep=1.2pt,minimum size=16pt},
  int/.style={rectangle,draw,thick,rounded corners=2pt,fill=orange!20,inner sep=2.5pt},
  query/.style={rectangle,draw,thick,rounded corners=2pt,double,inner sep=2.2pt},
  rl/.style={font=\tiny,text=black!70},
  reg/.style={draw=black!55,dashed,rounded corners=3pt},
  emark/.style={font=\tiny,text=red!75!black},
  chipbox/.style={draw=black!40,rounded corners=2pt,inner sep=3pt},
  ctext/.style={font=\tiny,anchor=west,inner sep=0pt,align=left},
  >=stealth]
\node[leaf] (a1) at (3.20,0.3) {\(a_{1}\)};
\node[leaf] (s1) at (4.50,0.3) {\(s_{1}\)};
\node[int]  (r1) at (3.85,1.15) {\(r_{1}^{1}\)};
\node[query] (q1) at (6.45,1.15) {\(q_{1}\)};
\draw[->,thick] (a1) -- (r1);
\draw[->,thick] (s1) -- (r1);
\draw[->,thick] (r1) -- (q1);
\node[rl, anchor=south] (rlr) at (r1.north) {\(r_{1}^{1}\leftarrow a_{1},s_{1}\)};
\node[rl, anchor=south] (rlq) at (q1.north) {\(q_{1}\leftarrow r_{1}^{1}\)};
\node[emark, anchor=east] (em1) at ($(a1.west)+(-0.6mm,0)$) {\(\varepsilon\,\times\)};
\node[emark, anchor=west] (em2) at ($(s1.east)+(0.6mm,0)$) {\(\times\,\varepsilon\)};
\node[reg, inner sep=2.5pt, fit=(a1)(s1)(em1)(em2)] (bbox) {};
\node[font=\tiny, text=black!70, anchor=north] (blab) at (bbox.south)
  {base premises \(B=\{a_{1},s_{1}\}\)};
\node[reg, inner sep=3.5pt, fit=(blab)(bbox)(r1)(rlr)] (gbox) {};
\node[font=\tiny, text=black!70, anchor=south west] at (gbox.north west)
  {\(G(r_{1}^{1},B)\): predecessor sub-DAG at \(r_{1}^{1}\)};
\node[int, inner sep=1.6pt, minimum size=0pt, font=\tiny] (c1i) at (0.95,-1.25) {\(r_{1}^{1}\)};
\node[ctext, text width=3.05cm] (c1t) at ($(c1i.east)+(1.2mm,0)$)
  {cached \(\Rightarrow\) \(q_{1}\) recovered under every erasure outcome};
\node[chipbox, fit=(c1i)(c1t)] {};
\node[leaf, inner sep=0.8pt, minimum size=9pt, font=\tiny] (c2i) at (5.05,-1.25) {\(a_{1}\)};
\node[ctext, text width=3.05cm] (c2t) at ($(c2i.east)+(1.2mm,0)$)
  {cached alone \(\Rightarrow\) \(s_{1}\) must still survive: reliability \(1-\varepsilon\)};
\node[chipbox, fit=(c2i)(c2t)] {};
\matrix[anchor=north, row sep=2.5pt, column sep=13pt, every node/.style={inner sep=0pt}]
  at (4.72,-2.0)
{
  \node[leaf, minimum size=8pt, inner sep=0.6pt,
        label={[font=\tiny,text=black,label distance=1.2mm]right:base premise \(\in B\)}] {}; &
  \node[font=\tiny, text=red!75!black,
        label={[font=\tiny,text=black,label distance=1.2mm]right:erased w.p.\ \(\varepsilon\)}] {\(\times\,\varepsilon\)}; &
  \node[query, minimum size=7pt, inner sep=0.9pt,
        label={[font=\tiny,text=black,label distance=1.2mm]right:query}] {}; \\
  \node[int, minimum size=6.5pt, inner sep=1pt,
        label={[font=\tiny,text=black,label distance=1.2mm]right:internal consequence (module root)}] {}; &
  \node[reg, minimum width=9pt, minimum height=6.5pt,
        label={[font=\tiny,text=black,label distance=1.2mm]right:hereditary sub-DAG}] {}; & \\
};
\end{tikzpicture}
\caption{The canonical DAG \(G(q_{1},B)\) of
Example~\ref{ex:minimal-canonical}.  The Datalog rules annotate the derived
nodes they produce, so each edge reads as a designated-parent relation.
Circles are base premises, each erased independently with probability
\(\varepsilon\) (red marks); rounded rectangles are derived nodes; the
doubled rectangle is the query.  The inner dashed region is the base set
\(B\); the outer one is the predecessor sub-DAG \(G(r_{1}^{1},B)\) of the
hereditary identity \eqref{eq:hereditary-subdag}.  The module root
\(r_{1}^{1}\) postdominates both premises, and the lower chips contrast the
two cache placements of the example: caching \(r_{1}^{1}\) recovers
\(q_{1}\) under every erasure outcome, whereas caching only \(a_{1}\)
leaves \(s_{1}\) as the residual leaf, with reliability \(1-\varepsilon\).}
\label{fig:datalog-witness}
\end{figure}

\begin{remark}[Typicality and interpretation of the canonical regime]
\label{rem:exact-scope}
Assumptions~\ref{assump:det-local} and~\ref{assump:hereditary-dag} are not
universal laws of reasoning systems; they describe the regime obtained after
one admissible derivation, proof skeleton, rule instance, or query plan has
been fixed, as in acyclic single-rule Horn fragments, compiled deterministic query
plans, and provenance systems with a designated derivation.  Within this
regime a cached consequence has an unambiguous meaning, whereas a base
premise remains usable only if it survives the erasures or is itself stored.
\end{remark}

\begin{remark}[Why these assumptions matter, and how they may be relaxed]
\label{rem:assumption-roles}
Assumption~\ref{assump:det-local} supplies the local availability
recursion, the source of direct cache availability and of the all-parent
failure recursion.  Assumption~\ref{assump:hereditary-dag} supplies
finiteness, acyclicity, and rooted hereditary structure, which enable
induction on the canonical DAG and make rooted semantic modules independent
of the ambient query.  The hereditary identity \eqref{eq:hereditary-subdag}
is stronger than a purely single-query analysis requires; it becomes
essential once internal consequences are reused across a workload.  The
natural relaxation---several admissible parent tuples per conclusion,
replacing the canonical DAG by an AND--OR provenance graph---is carried out
in Subsection~\ref{subsec:envelope-nonunique}; if heredity fails, a rooted
module may still help, but its meaning must then be indexed by the ambient
query.
\end{remark}

\subsection{Premise Erasures and Transparent Cache Classes}
\label{subsec:erasure-cache-model}

Each premise is independently erased with probability
\(\varepsilon\in(0,1)\).  Let
\[
  Z_i\sim\operatorname{Bernoulli}(1-\varepsilon),
  \qquad i=1,\ldots,m,
\]
be independent survival indicators and define the surviving premise set
\[
  \widetilde B
  :=
  \{b_i\in B:Z_i=1\}.
\]
All probabilities below refer to this erasure process unless stated
otherwise.

A query-dependent transparent cache policy is a mapping
\[
  S:\mathcal Q\longrightarrow
  \bigl\{\text{finite subsets of }\Cn(B)\bigr\},
  \qquad q\longmapsto S(q),
\]
chosen before the erasure realization.  Transparency means that every stored
object is an explicit consequence in \(\Cn(B)\), not an arbitrary parity
string or an opaque coded symbol.  For a fixed query \(q\), we often
abbreviate \(S(q)\) by \(S\).

Recovery of \(q\) succeeds exactly on the event
\[
  q\in\Cn\bigl(\widetilde B\cup S(q)\bigr).
\]
The cache is \((\varepsilon,\delta)\)-reliable for \(q\) if
\[
  \Pr_{\varepsilon}\!\left[
    q\in\Cn\bigl(\widetilde B\cup S(q)\bigr)
  \right]
  \ge 1-\delta,
  \qquad \delta\in(0,1).
\]

Fix an explicit-object cost function
\[
  c:\Cn(B)\longrightarrow\mathbb R_{>0},
\]
where \(c(v)\) is the storage cost of the object \(v\) itself in that unit,
including its self-delimiting representation.  Identifiers in
the system metadata do not count as stored copies of the corresponding
payloads.  For a finite set \(S\subseteq\Cn(B)\), define
\(\ell_c(S):=\sum_{v\in S}c(v)\).

A transparent cache may in principle contain the requested answer itself.
Such \emph{direct-answer caching} is reported as a separate baseline.  The
main structural results instead use the answer-excluding class
\(q\notin S(q)\) for a single query and \(S_W\cap W=\varnothing\) for a
workload \(W\).  This convention keeps ordinary result caching separate
from storage gains obtained by reusing internal derivation structure.

For a workload
\[
  W=\{q_1,\ldots,q_L\}\subseteq\mathcal Q,
\]
a common transparent cache \(S_W\subseteq\Cn(B)\) is \emph{jointly}
\((\varepsilon,\delta)\)-reliable if
\begin{equation}
  \Pr_{\varepsilon}\!\left[
    \bigcap_{\ell=1}^{L}
    \left\{
      q_\ell\in\Cn(\widetilde B\cup S_W)
    \right\}
  \right]
  \ge 1-\delta,
\label{eq:model-joint-criterion}
\end{equation}
and it is \emph{maximal-error} \((\varepsilon,\delta)\)-reliable if
\begin{equation}
  \Pr_{\varepsilon}\!\left[
    q_\ell\in\Cn(\widetilde B\cup S_W)
  \right]
  \ge 1-\delta,
  \qquad \ell=1,\ldots,L.
\label{eq:model-max-criterion}
\end{equation}
The joint criterion constrains one intersection event; the maximal-error
criterion constrains each marginal recovery event separately.  Joint
reliability therefore implies maximal-error reliability for any fixed cache.

Three workload-level cache classes will be compared.

\begin{itemize}[leftmargin=5mm]
\item A \emph{leaf-only transparent cache} is restricted to \(S_W\subseteq B\).

\item A \emph{semantic-aware transparent cache} may additionally store internal
consequences in \(\Cn(B)\), including roots of shared canonical sub-DAGs.

\item An \emph{unrestricted coded cache} may store arbitrary binary functions
of the premise payloads.
\end{itemize}

Under a fixed workload, a fixed answer-exclusion convention, and a fixed
reliability criterion, the corresponding optimal storage costs are denoted by
\[
  \sigma_{\mathrm{leaf}}^*(W,\varepsilon,\delta),\qquad
  \sigma_{\mathrm{sem}}^*(W,\varepsilon,\delta),\qquad
  \sigma_{\mathrm{code}}^*(W,\varepsilon,\delta),
\]
respectively.  These are the workload-level optima of
\eqref{eq:intro-three-way}, and they inherit its class-inclusion ordering:
every leaf-only cache is transparent, and every transparent cache is a
degenerate coded cache, since each stored consequence is a computable
function of the premise payloads.

When the distinction between the joint and maximal-error workload criteria
must be made explicit, we write
\[
  \sigma_{\bullet}^{*,\mathrm{joint}}(W,\varepsilon,\delta)
  \qquad\text{and}\qquad
  \sigma_{\bullet}^{*,\max}(W,\varepsilon,\delta),
\]
respectively, where
\[
  \bullet\in\{\mathrm{leaf},\mathrm{sem},\mathrm{code}\}.
\]
When the reliability criterion is fixed by context, we revert to the shorter
notation used above.

\subsection{Standing Notation and the Survival Threshold}
\label{subsec:standing-notation}

For \(x\in\mathbb R\), write \(x^+:=\max\{x,0\}\).

For a fixed query \(q\in\mathcal Q\), we will frequently abbreviate
\(G_q:=G(q,B)\), \(V_0(G_q):=V(G_q)\cap B\), and
\(\kappa_q:=|V_0(G_q)|\); for a workload
\(W=\{q_1,\ldots,q_L\}\subseteq\mathcal Q\), we also write
\(G_\ell:=G(q_\ell,B)\) and \(D_\ell:=V_0(G_\ell)\).

A recurring reliability threshold is
\begin{equation}
  \begin{split}
    N^*(\varepsilon,\delta)
    :=& \max\Bigl\{  n\in\mathbb Z_{\ge 0}: \\
    & (1-\varepsilon)^n \ge 1-\delta \Bigr\}
    = \left\lfloor \frac{\log(1-\delta)}{\log(1-\varepsilon)} \right\rfloor.
  \end{split}
\label{eq:model-N-star}
\end{equation}
This is the largest integer \(n\) for which the simultaneous survival of
\(n\) independently erased premises still has probability at least
\(1-\delta\).  Its operational meaning is fixed by the following baseline
law.

\begin{proposition}[Native survival law]
\label{prop:native-survival}
Under Assumptions~\ref{assump:det-local} and~\ref{assump:hereditary-dag}, for
every \(q\in\mathcal Q\setminus B\),
\begin{equation}
  \Pr_{\varepsilon}\!\left[q\in\Cn(\widetilde B)\right]
  =
  (1-\varepsilon)^{\kappa_q}.
\label{eq:native-survival}
\end{equation}
Consequently, recovery without any cache is \((\varepsilon,\delta)\)-reliable
if and only if
\begin{equation}
  \kappa_q\le N^*(\varepsilon,\delta).
\label{eq:nstar-threshold}
\end{equation}
\end{proposition}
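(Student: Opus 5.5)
The plan is to show the deterministic identity
\[
  \{q\in\Cn(\widetilde B)\}=\{V_0(G_q)\subseteq\widetilde B\},
\]
and then compute the probability from independence of the survival indicators.

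The identity is a special case of a slightly more general claim: for every \(v\in V(G_q)\),
\[
  v\in\Cn(\widetilde B)\iff V_0(G_q[v])\subseteq\widetilde B.
\]
I will prove this by well-founded induction along \(\operatorname{par}\), which terminates by Assumption~\ref{assump:det-local}; equivalently, induct on the height of \(v\) in the finite acyclic DAG \(G_q\).

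\emph{Base case} (\(v\in B\)). Law \eqref{eq:base-irreducibility}, applied with \(\Gamma=\widetilde B\subseteq B\subseteq\Cn(B)\), gives \(v\in\Cn(\widetilde B)\iff v\in\widetilde B\). Since \(G_q[v]\) is the singleton graph on \(v\), its leaf set is \(\{v\}\), which matches.

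\emph{Inductive step} (\(v\notin B\)). Then \(v\notin\widetilde B\), because \(\widetilde B\subseteq B\). So \eqref{eq:deterministic-local-law} reduces to
\[
  v\in\Cn(\widetilde B)\iff \text{every parent } u_i \text{ of } v \text{ lies in } \Cn(\widetilde B).
\]
The parents lie in \(\mathcal V_{\mathrm{rel}}(B)\), and they are exactly the in-neighbours of \(v\) in \(G_q\) by Assumption~\ref{assump:hereditary-dag}. By the induction hypothesis, each condition \(u_i\in\Cn(\widetilde B)\) is equivalent to \(V_0(G_q[u_i])\subseteq\widetilde B\).

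It remains to check the leaf decomposition
\[
  V_0(G_q[v])=\bigcup_i V_0(G_q[u_i]).
\]
This holds because every backward path from \(v\) to a leaf passes through some parent, and \(v\) is not itself a leaf. Here the hereditary identity \eqref{eq:hereditary-subdag} guarantees that \(G_q[u_i]\) is well defined and coincides with \(G(u_i,B)\), so no ambient-query subtlety arises. Taking \(v=q\), where \(q\notin B\), gives the identity.

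For the probability, \(V_0(G_q)\) consists of \(\kappa_q\) distinct premises with independent survival indicators \(Z_i\), so
\[
  \Pr_\varepsilon\!\left[V_0(G_q)\subseteq\widetilde B\right]=\prod_{b_i\in V_0(G_q)}\Pr[Z_i=1]=(1-\varepsilon)^{\kappa_q}.
\]
This is \eqref{eq:native-survival}.

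For the threshold statement, reliability means \((1-\varepsilon)^{\kappa_q}\ge1-\delta\). The map \(n\mapsto(1-\varepsilon)^n\) is strictly decreasing, since \(\varepsilon\in(0,1)\). Hence the set of admissible \(n\) in \eqref{eq:model-N-star} is a down-set of \(\mathbb Z_{\ge0}\); it contains \(0\) and is bounded, so its maximum \(N^*\) exists. Therefore \(\kappa_q\) is admissible iff \(\kappa_q\le N^*(\varepsilon,\delta)\). The floor formula follows by taking logarithms, noting that \(\log(1-\varepsilon)<0\) reverses the inequality.

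The main obstacle is the inductive step, specifically justifying the leaf decomposition and that the induction ranges over the right objects: parents of vertices of \(G_q\) stay in \(V(G_q)\), and the recursion is well-founded. I would settle both directly from the construction of \(G_q\) by backward expansion. The probabilistic and threshold parts are routine.
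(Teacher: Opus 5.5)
Your proposal is correct and takes essentially the same route as the paper. You prove $v\in\Cn(\widetilde B)\iff V_0(G_q[v])\subseteq\widetilde B$ by induction along the finite acyclic DAG using \eqref{eq:base-irreducibility} and \eqref{eq:deterministic-local-law}, specialize to $v=q$, and then use independence and the definition of $N^*$; you also make explicit the leaf decomposition $V_0(G_q[v])=\bigcup_i V_0(G_q[u_i])$ and the monotonicity behind the threshold, which the paper leaves implicit (your appeal to the hereditary identity is harmless but not needed, since $G_q[u_i]$ is defined directly as a predecessor sub-DAG of $G_q$).
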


\begin{proof}
For a base vertex \(v\in B\), \eqref{eq:base-irreducibility} gives
\(v\in\Cn(\widetilde B)\Leftrightarrow v\in\widetilde B\).
For a non-leaf vertex \(v\in V(G_q)\setminus B\),
\eqref{eq:deterministic-local-law} applied recursively downward along the
finite acyclic graph \(G_q[v]\) gives
\[
  v\in\Cn(\widetilde B)
  \quad\Longleftrightarrow\quad
  V_0(G_q[v])\subseteq\widetilde B.
\]
Applying this at \(v=q\) yields
\(q\in\Cn(\widetilde B)\Leftrightarrow V_0(G_q)\subseteq\widetilde B\), and
\eqref{eq:native-survival} follows from independence of the survival
indicators since \(|V_0(G_q)|=\kappa_q\).  The threshold form
\eqref{eq:nstar-threshold} is then immediate from \eqref{eq:model-N-star}.
\end{proof}

\subsection{Auxiliary Coded-Benchmark Layer}
\label{subsec:coded-benchmark-layer}

The transparent cache classes above restrict what may be stored.  To price
these restrictions, we compare them against an unrestricted coded reference
system.  Throughout this benchmark layer, every relevant base
premise carries a \(c_B\)-bit payload, regarded as one symbol of a payload
alphabet \(\mathcal X\) of size \(2^{c_B}\).  Stored benchmark strings are
not subject to erasure.

\begin{definition}[Auxiliary leaf-payload erasure channel]
\label{def:leaf-payload-channel}
Let \(A\subseteq B\) be a finite family of \(n:=|A|\) premises with payloads
\(X^n=(X_b)_{b\in A}\in\mathcal X^n\).  The auxiliary observation is
\[
  Y^n=(Y_b)_{b\in A},
  \qquad
  Y_b=
  \begin{cases}
    X_b, & Z_b=1,\\
    \bot, & Z_b=0,
  \end{cases}
\]
so that, conditionally on the payloads, each coordinate is independently
revealed with probability \(1-\varepsilon\) and erased with probability
\(\varepsilon\).
\end{definition}

\begin{definition}[Coded outer-benchmark task]
\label{def:coded-benchmark-task}
For a family \(A\subseteq B\) as above, a \emph{coded benchmark scheme}
\((\Enc,\Dec,\sigma)\) consists of computable maps
\[
  \Enc:\mathcal X^{n}\longrightarrow\{0,1\}^{\sigma},
  \;
  \Dec:\bigl(\mathcal X\cup\{\bot\}\bigr)^{n}\times\{0,1\}^{\sigma}
  \longrightarrow\mathcal X^{n}.
\]
The cache string \(T=\Enc(X^{n})\) is produced before the erasure realization.
The scheme is \(\delta\)-reliable for \(A\) if, for every payload realization,
\[
  \Pr\!\left[\Dec(Y^{n},T)=X^{n}\right]\ge 1-\delta,
\]
with probability taken over the erasure process.  The minimum admissible
\(\sigma\) is denoted \(\bar\sigma_{\mathrm{code}}(A,\varepsilon,\delta)\).
\end{definition}

\begin{remark}[The benchmark is auxiliary and strictly stronger than query recovery]
\label{rem:benchmark-honesty}
Two features of Definitions~\ref{def:leaf-payload-channel}
and~\ref{def:coded-benchmark-task} are kept explicit throughout the paper.
First, the channel \(Y^{n}\) retains the erased-or-revealed dependent
payloads and discards the rest of the surviving premise set \(\widetilde B\).
It is an auxiliary Shannon-style benchmark in the classical erasure-coding
sense \cite{cover2006elements,polyanskiy2010channel}, not a claimed
conditional-independence decomposition of \(\widetilde B\).
Second, the benchmark reconstructs \emph{all} dependent payloads.  This is
strictly stronger than answering any particular query: every query whose leaf
set lies in \(A\) can be evaluated once the payloads of \(A\) have been
reconstructed.  Hence, for the workload dependency union
\(D_{\mathrm{dep}}(W)\) defined in Section~\ref{sec:coded-outer},
\[
  \sigma_{\mathrm{code}}^{*}(W,\varepsilon,\delta)
  \;\le\;
  \bar\sigma_{\mathrm{code}}
  \bigl(D_{\mathrm{dep}}(W),\varepsilon,\delta\bigr),
\]
so the benchmark is an explicit, architecture-independent \emph{outer}
reference point.  The benchmark overstates what the coded side must achieve,
so benchmark-relative overhead numbers \emph{lower-bound} the price of
semantic transparency: the true coded optimum can only lie lower, and the
true transparency price can only be larger.
\end{remark}

\begin{remark}[Exactness convention and claim boundary]
\label{rem:exactness-convention}
Throughout the paper, a statement is called \emph{exact} only if it is a
closed-form identity or a finite-dimensional optimization that holds for
every instance of the stated regime, without uncontrolled approximation.
Three boundaries are fixed once and for all.  First, every exact statement
about transparent caches is made under Assumptions~\ref{assump:det-local}
and~\ref{assump:hereditary-dag} and under the answer-exclusion convention of
Subsection~\ref{subsec:erasure-cache-model}.  The only relaxation is the
envelope extension of Subsection~\ref{subsec:envelope-nonunique}, and it is
labeled as a relaxation wherever it is used.  Second, every quantitative
statement about coded storage---the MDS parity threshold, its
near-optimality, the strong converse, and the overhead laws---concerns the
explicit benchmark task of Definition~\ref{def:coded-benchmark-task}.  The
benchmark optimum \(\bar\sigma_{\mathrm{code}}\) upper-bounds the
unrestricted coded optimum \(\sigma_{\mathrm{code}}^{*}\), which is not
identified in this paper.  By Remark~\ref{rem:benchmark-honesty},
benchmark-relative overhead statements therefore lower-bound the true price
of transparency.  Third, asymptotic statements always carry their regime:
\(\kappa\to\infty\) at fixed \((\varepsilon,\delta)\) for the overhead laws,
or the alphabet-growth regime of Section~\ref{sec:coded-outer} for the
strong-converse exponent.  All other statements are finite-parameter
statements in the sense above.
\end{remark}

\section{Exact Single-Query Structural Kernel}
\label{sec:single-query-kernel}

This section establishes the exact single-query recovery law.  It is the
structural kernel for the workload and coded analyses of
Sections~\ref{sec:semantic-scenario}--\ref{sec:coded-outer}.  Fix a non-base
query \(q\in\mathcal Q\setminus B\) and retain the standing notation
\(G_q\), \(V_0(G_q)\), and \(\kappa_q\) of
Subsection~\ref{subsec:standing-notation}.
All statements in this section are deterministic in the erasure realization
\(\widetilde B\subseteq B\).  Probability enters only after the exact
recovery event has been identified.

\subsection{Query-Local Projection and Failure Paths}
\label{subsec:query-local-projection}

\begin{definition}[Query-local projection and exposed leaves]
\label{def:query-local-residual}
For a finite transparent cache \(S\subseteq\Cn(B)\), define its projection onto
the canonical query DAG by
\begin{equation}
  U(q,S):=S\cap V(G_q).
\label{eq:query-local-projection}
\end{equation}
For each \(b\in V_0(G_q)\), let \(\mathcal P_q(b)\) denote the set of all
directed paths from \(b\) to \(q\) in \(G_q\), each path being identified
with its vertex set, endpoints included.

A path \(P\in\mathcal P_q(b)\) is \emph{cache-free} if
\[
  V(P)\cap U(q,S)=\varnothing.
\]
The residual exposed-leaf set is
\begin{equation}
  \begin{split}
    D_{\mathrm{exp}}(q,&S)
    := \bigl\{ b\in V_0(G_q) :  \\
    & \exists P\in\mathcal P_q(b) \text{ such that }
      V(P)\cap U(q,S)=\varnothing \bigr\}.
  \end{split}
\label{eq:def-exposed-leaves}
\end{equation}
Its complement
\begin{equation}
  D_{\mathrm{blk}}(q,S)
  :=
  V_0(G_q)\setminus D_{\mathrm{exp}}(q,S)
\label{eq:def-blocked-leaves}
\end{equation}
is the set of base leaves whose every path to \(q\) intersects the query-local
cache.
\end{definition}

The path quantifier in Definition~\ref{def:query-local-residual} is
essential.  In a general DAG, a base leaf may reach \(q\) through a cached
internal vertex along one path and bypass it along another.  Such a leaf
remains exposed.

\begin{lemma}[Failure-path characterization]
\label{lem:failure-path}
Under Assumptions~\ref{assump:det-local} and
\ref{assump:hereditary-dag}, let \(S\subseteq\Cn(B)\) be finite,
\(U:=U(q,S)\), and \(v\in V(G_q)\).  Then
\begin{align}
  v\notin\Cn(\widetilde B\cup U)
  \Longleftrightarrow
  &\ \exists b\in V_0(G_q[v])\setminus\widetilde B
  \nonumber\\[-1mm]
  &\ \exists\text{ a directed path \(P:b\leadsto v\) in \(G_q[v]\)}
  \nonumber\\[-1mm]
  &\text{such that }V(P)\cap U=\varnothing.
\label{eq:failure-path}
\end{align}
\end{lemma}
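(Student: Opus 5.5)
We argue by well-founded induction on the finite acyclic graph \(G_q[v]\), proceeding from the leaves upward. By Assumption~\ref{assump:hereditary-dag} we have \(G_q[v]=G(v,B)\), so the predecessor sub-DAG of any parent \(u\) of \(v\) is \(G_q[u]\subseteq G_q[v]\). It has leaf set \(V_0(G_q[u])\subseteq V_0(G_q[v])\), and every path \(b\leadsto u\) in \(G_q[u]\) extends by the edge \(u\to v\) to a path \(b\leadsto v\) in \(G_q[v]\). Before the induction we check one point: \(\widetilde B\cup U\subseteq\Cn(B)\) is finite, so the laws \eqref{eq:base-irreducibility} and \eqref{eq:deterministic-local-law} apply with \(\Gamma=\widetilde B\cup U\). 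These laws are stated only for \(v\in\mathcal V_{\mathrm{rel}}(B)\). We note that \(V(G_q)\subseteq\mathcal V_{\mathrm{rel}}(B)\) because \(\operatorname{par}\) maps into \(\mathcal V_{\mathrm{rel}}(B)\).

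\emph{Base case \(v\in B\).} Here \(G_q[v]\) is the singleton graph, whose only path is the trivial path \(\{v\}\). By \eqref{eq:base-irreducibility}, \(v\notin\Cn(\widetilde B\cup U)\) holds iff \(v\notin\widetilde B\) and \(v\notin U\). The right-hand side of \eqref{eq:failure-path} with \(b=v\) says exactly this: \(v\notin\widetilde B\) and \(V(P)\cap U=\{v\}\cap U=\varnothing\).

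\emph{Inductive step \(v\notin B\).} By \eqref{eq:deterministic-local-law}, \(v\notin\Cn(\widetilde B\cup U)\) holds iff \(v\notin\widetilde B\cup U\) and some parent \(u\in\operatorname{par}(v)\) satisfies \(u\notin\Cn(\widetilde B\cup U)\). Since \(v\notin B\), the condition \(v\notin\widetilde B\cup U\) reduces to \(v\notin U\).
\begin{itemize}
\item (\(\Rightarrow\)) The induction hypothesis at \(u\) gives an erased leaf \(b\in V_0(G_q[u])\) and a \(U\)-free path \(b\leadsto u\). Appending \(v\), which is not in \(U\), yields a \(U\)-free path \(b\leadsto v\) in \(G_q[v]\).
\item (\(\Leftarrow\)) Take such a \(U\)-free path \(P\). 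It contains \(v\), so \(v\notin U\). Because \(v\notin B\) and \(b\in B\), \(P\) has length at least one. Its penultimate vertex \(u\) is an in-neighbour of \(v\), hence \(u\in\operatorname{par}(v)\) by Assumption~\ref{assump:hereditary-dag}. The prefix of \(P\) is a \(U\)-free path \(b\leadsto u\), and it lies in \(G_q[u]\) since every vertex of the prefix is a predecessor of \(u\). Its endpoint \(b\) is an erased leaf of \(G_q[u]\). The induction hypothesis gives \(u\notin\Cn(\widetilde B\cup U)\), and the local law then gives \(v\notin\Cn(\widetilde B\cup U)\).
\end{itemize}

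The main obstacle is bookkeeping rather than depth. First, we must justify that the prefix path lives in \(G_q[u]\) and that \(G_q[u]\) is the correct sub-DAG in which to apply the hypothesis; the hereditary identity \eqref{eq:hereditary-subdag} together with predecessor-closure settles both. Second, we must ensure the induction is well-founded, which follows from finiteness and acyclicity of \(G_q\), e.g.\ by inducting on the length of the longest path into \(v\). We also note that only \(U=S\cap V(G_q)\) matters here. Since the lemma is stated with \(U\) directly, no projection argument is needed at this stage.
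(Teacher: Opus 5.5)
Your proof is correct and follows the paper's argument essentially step for step. It inducts along the finite acyclic sub-DAG, settles the base case by base irreducibility with the length-zero path, and handles the inductive step via the negated local law, appending the edge \((u,v)\) in one direction and peeling off the final edge to a parent in the other; the only differences are cosmetic (you absorb the paper's separate case \(v\in U\) into the two implications, and you explicitly check \(V(G_q)\subseteq\mathcal V_{\mathrm{rel}}(B)\) so the local law applies).
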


\begin{proof}
We argue by induction along a topological ordering of \(G_q[v]\), equivalently
on the predecessor depth of the terminal vertex.

First consider the base case \(v\in B\).  By the base-irreducibility law
\eqref{eq:base-irreducibility},
\(v\in\Cn(\widetilde B\cup U)\Leftrightarrow v\in\widetilde B\cup U\), hence
\(v\notin\Cn(\widetilde B\cup U)\Leftrightarrow v\notin\widetilde B\) and
\(v\notin U\).
Since \(G_q[v]\) is the singleton graph on \(v\), this is equivalent to the
existence of the length-zero path \(P=(v)\), which is cache-free and starts at
the missing base leaf \(b=v\).  This proves \eqref{eq:failure-path} when
\(v\in B\).

Now let \(v\in V(G_q)\setminus B\), and assume the statement holds for all
strict predecessors of \(v\) in \(G_q[v]\).  If \(v\in U\), then
\(v\in\Cn(\widetilde B\cup U)\) by direct availability, so the left-hand side
of \eqref{eq:failure-path} is false.  The right-hand side is also false,
because every directed path ending at \(v\) contains \(v\in U\) and therefore
cannot be cache-free.

Suppose next that \(v\notin U\), and write
\(\operatorname{par}(v)=(u_1,\ldots,u_{a(v)})\).
Since \(v\notin B\), one has \(v\notin\widetilde B\), and therefore
\eqref{eq:deterministic-local-law} gives
\begin{equation}
  \begin{aligned}
    & v\notin\Cn(\widetilde B\cup U) \\
    & \quad\Longleftrightarrow\quad 
      \exists\, i\in\{1,\ldots,a(v)\},\ 
      u_i\notin\Cn(\widetilde B\cup U).
  \end{aligned}
\label{eq:parent-failure-recursion}
\end{equation}
By Assumption~\ref{assump:hereditary-dag}, the in-neighborhood of \(v\) in
\(G_q[v]\) is exactly \(\{u_1,\ldots,u_{a(v)}\}\).  Thus every such \(u_i\) is
a strict predecessor of \(v\), and the induction hypothesis applies to \(u_i\).
If some \(u_i\notin\Cn(\widetilde B\cup U)\), then there exist
\(b\in V_0(G_q[u_i])\setminus\widetilde B\) and a cache-free directed path
\(P':b\leadsto u_i\) in \(G_q[u_i]\).  Appending the edge \((u_i,v)\) yields a
cache-free directed path \(P:b\leadsto v\) in \(G_q[v]\), proving the
right-hand side of \eqref{eq:failure-path}.

Conversely, suppose there exist
\(b\in V_0(G_q[v])\setminus\widetilde B\) and a cache-free directed path
\(P:b\leadsto v\) in \(G_q[v]\).  Since \(v\notin U\), the final edge of \(P\)
must be \((u_i,v)\) for some parent \(u_i\in\operatorname{par}(v)\).  The
prefix of \(P\) from \(b\) to \(u_i\) is again cache-free, so the induction
hypothesis implies \(u_i\notin\Cn(\widetilde B\cup U)\).  Equation
\eqref{eq:parent-failure-recursion} then yields
\(v\notin\Cn(\widetilde B\cup U)\).
\end{proof}

\begin{theorem}[Query-local rigidity and exact residual-leaf law]
\label{thm:exact-residual-law}
Under Assumptions~\ref{assump:det-local} and
\ref{assump:hereditary-dag}, for every finite
\(S\subseteq\Cn(B)\) and every \(\widetilde B\subseteq B\),
\begin{align}
  q\in\Cn(\widetilde B\cup S)
  &\Longleftrightarrow
  q\in\Cn\bigl(\widetilde B\cup U(q,S)\bigr),
\label{eq:query-local-rigidity}\\
  &\Longleftrightarrow
  D_{\mathrm{exp}}(q,S)\subseteq\widetilde B.
\label{eq:exact-residual-event}
\end{align}
Consequently, under independent premise erasures with erasure probability
\(\varepsilon\),
\begin{equation}
  \Pr_{\varepsilon}
  \left[
    q\in\Cn(\widetilde B\cup S)
  \right]
  =
  (1-\varepsilon)^{|D_{\mathrm{exp}}(q,S)|}.
\label{eq:exact-residual-probability}
\end{equation}
\end{theorem}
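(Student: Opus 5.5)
The plan is to prove the two equivalences by a single induction over the canonical DAG and then read off the probability from independence. The key observation is that Lemma~\ref{lem:failure-path}, applied at \(v=q\), already gives \(q\notin\Cn(\widetilde B\cup U)\) if and only if some erased leaf \(b\in V_0(G_q)\setminus\widetilde B\) has a cache-free path to \(q\). Negating this gives exactly \(q\in\Cn(\widetilde B\cup U)\Leftrightarrow D_{\mathrm{exp}}(q,S)\subseteq\widetilde B\). This is the second equivalence \eqref{eq:exact-residual-event}, once the first is known. Here \(G_q[q]=G_q\) is used, so the leaf sets and path sets coincide with those in Definition~\ref{def:query-local-residual}.

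For the rigidity equivalence \eqref{eq:query-local-rigidity}, one direction is free. Since \(U(q,S)\subseteq S\), monotonicity of \(\Cn\) gives \(q\in\Cn(\widetilde B\cup U)\Rightarrow q\in\Cn(\widetilde B\cup S)\).

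For the converse, I would prove the stronger claim that for every \(v\in V(G_q)\),
\[
  v\in\Cn(\widetilde B\cup S)\Longleftrightarrow v\in\Cn(\widetilde B\cup U),
\]
by induction along a topological order of \(G_q\). The local laws are stated for finite \(\Gamma\subseteq\Cn(B)\); both \(\widetilde B\cup S\) and \(\widetilde B\cup U\) qualify.

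\begin{itemize}
\item For a leaf \(v\in B\), \eqref{eq:base-irreducibility} gives \(v\in\Cn(\widetilde B\cup S)\Leftrightarrow v\in\widetilde B\cup S\). Because \(v\in V(G_q)\), the condition \(v\in S\) is equivalent to \(v\in U\), so the two sides agree.
\item For a non-leaf \(v\), \eqref{eq:deterministic-local-law} is applicable because \(v\in V(G_q)\subseteq\mathcal V_{\mathrm{rel}}(B)\). This holds since the canonical expansion of \(q\in\mathcal V_{\mathrm{rel}}(B)\) stays inside the domain and codomain of \(\operatorname{par}\). The law gives \(v\in\Cn(\Gamma)\Leftrightarrow[v\in\Gamma\text{ or all parents }u_i\in\Cn(\Gamma)]\).
\item Again \(v\in\widetilde B\cup S\Leftrightarrow v\in U\), because \(v\notin B\) and \(v\in V(G_q)\).
\item By Assumption~\ref{assump:hereditary-dag} the parents \(u_i\) are in-neighbours of \(v\) in \(G_q\), so they precede \(v\) topologically and the induction hypothesis applies to each.
\end{itemize}

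Hence the two membership conditions coincide at \(v\), and taking \(v=q\) yields \eqref{eq:query-local-rigidity}.

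The main point to be careful about is exactly this bookkeeping. Cached objects outside \(V(G_q)\) never enter the recursion: the local law for relevant vertices refers only to direct membership and to the parent tuple. This is why membership in \(\Cn(\widetilde B\cup S)\) cannot be created by objects outside \(V(G_q)\). Otherwise the argument is routine.

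Finally, \eqref{eq:exact-residual-probability} follows because \(D_{\mathrm{exp}}(q,S)\) is a deterministic subset of \(B\), fixed before the erasure realization. The event \(D_{\mathrm{exp}}(q,S)\subseteq\widetilde B\) is then the intersection of the independent events \(\{Z_b=1\}\) for \(b\in D_{\mathrm{exp}}(q,S)\), and its probability is \((1-\varepsilon)^{|D_{\mathrm{exp}}(q,S)|}\).
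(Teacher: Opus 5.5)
Your proposal is correct and follows the paper's proof essentially step for step: a topological induction over \(G_q\) showing that each vertex has the same availability under \(\widetilde B\cup S\) and \(\widetilde B\cup U(q,S)\), then Lemma~\ref{lem:failure-path} at \(v=q\) for the residual-leaf event, then independence of the survival indicators for the probability. The separate monotonicity argument for the easy direction is redundant once the full vertexwise equivalence is proved, but it does no harm.
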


\begin{proof}
Let \(A_S:=\widetilde B\cup S\) and \(A_U:=\widetilde B\cup U(q,S)\).
We first show, by induction on a topological ordering of \(G_q\), that every
vertex \(v\in V(G_q)\) has the same availability under \(A_S\) and \(A_U\).

If \(v\in V_0(G_q)\), then \eqref{eq:base-irreducibility} implies
\(v\in\Cn(A_S)\Leftrightarrow v\in A_S\) and
\(v\in\Cn(A_U)\Leftrightarrow v\in A_U\).  Because \(v\in V(G_q)\),
\(v\in S\Leftrightarrow v\in U(q,S)\), and hence \(v\in A_S\) if and only if
\(v\in A_U\); therefore \(v\in\Cn(A_S)\Leftrightarrow v\in\Cn(A_U)\) for
every base leaf \(v\in V_0(G_q)\).

Now let \(v\in V(G_q)\setminus B\), and assume the equivalence has already been
proved for all strict predecessors of \(v\) in \(G_q\).  Since \(v\notin B\),
direct membership in \(A_S\) and \(A_U\) is equivalent to direct storage, and
again \(v\in S\Leftrightarrow v\in U(q,S)\).  Write
\(\operatorname{par}(v)=(u_1,\ldots,u_{a(v)})\).  By
\eqref{eq:deterministic-local-law},
\(v\in\Cn(A_S)\Leftrightarrow[v\in S\ \text{or}\
\{u_1,\ldots,u_{a(v)}\}\subseteq\Cn(A_S)]\),
and similarly with \(A_U\) and \(U(q,S)\) in place of \(A_S\) and \(S\).
The direct-storage terms agree, and all parent vertices \(u_i\) belong to
\(G_q\) and satisfy the induction hypothesis; hence
\(v\in\Cn(A_S)\Leftrightarrow v\in\Cn(A_U)\).
Applying this to \(v=q\) proves \eqref{eq:query-local-rigidity}.

For \eqref{eq:exact-residual-event}, apply
Lemma~\ref{lem:failure-path} with terminal vertex \(v=q\).  Since
\(V_0(G_q[q])=V_0(G_q)\), the complement of the recovery event
\(q\in\Cn(\widetilde B\cup U(q,S))\)
occurs exactly when there exists a missing base leaf
\(b\in V_0(G_q)\setminus\widetilde B\) having a cache-free directed path to
\(q\).  By Definition~\ref{def:query-local-residual}, this is equivalent to
\(D_{\mathrm{exp}}(q,S)\nsubseteq\widetilde B\).
Taking complements yields \eqref{eq:exact-residual-event}.

Finally, once \(q\) and \(S\) are fixed, the set \(D_{\mathrm{exp}}(q,S)\) is a
deterministic subset of the distinct base premises.  Under independent erasures,
the event \(D_{\mathrm{exp}}(q,S)\subseteq\widetilde B\) occurs if and only if
all premises in \(D_{\mathrm{exp}}(q,S)\) survive, which has probability
\((1-\varepsilon)^{|D_{\mathrm{exp}}(q,S)|}\).
Together with \eqref{eq:exact-residual-event}, this proves
\eqref{eq:exact-residual-probability}.
\end{proof}

\begin{corollary}[Exact reliability threshold]
\label{cor:exact-reliability-threshold}
Let \(N^*(\varepsilon,\delta)\) be defined by
\eqref{eq:model-N-star}.  A finite transparent cache \(S\) is
\((\varepsilon,\delta)\)-reliable for \(q\) if and only if
\begin{equation}
  |D_{\mathrm{exp}}(q,S)|
  \le
  N^*(\varepsilon,\delta).
\label{eq:exact-reliability-threshold}
\end{equation}
\end{corollary}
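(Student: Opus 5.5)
The argument is a direct combination of Theorem~\ref{thm:exact-residual-law} with the definition \eqref{eq:model-N-star} of \(N^*(\varepsilon,\delta)\); no new structural work is needed. First, I would invoke \eqref{eq:exact-residual-probability} to rewrite the recovery probability of \(q\) under the cache \(S\) as \((1-\varepsilon)^{k}\) with \(k:=|D_{\mathrm{exp}}(q,S)|\). Here \(k\) is a nonnegative integer, bounded by \(\kappa_q\), that is determined by \(q\) and \(S\) alone. By the definition in Subsection~\ref{subsec:erasure-cache-model}, \((\varepsilon,\delta)\)-reliability of \(S\) for \(q\) is therefore exactly the scalar condition \((1-\varepsilon)^{k}\ge 1-\delta\).

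Second, I would convert this condition into the threshold form using the monotonicity of \(n\mapsto(1-\varepsilon)^n\). Since \(\varepsilon\in(0,1)\), this map is strictly decreasing on \(\mathbb Z_{\ge 0}\). Hence the set \(\{n\in\mathbb Z_{\ge0}:(1-\varepsilon)^n\ge 1-\delta\}\) is a down-closed initial segment. It is nonempty, since it contains \(n=0\) because \(1\ge 1-\delta\). It is finite, since \((1-\varepsilon)^n\to 0<1-\delta\). So the segment equals \(\{0,1,\ldots,N^*(\varepsilon,\delta)\}\), and \((1-\varepsilon)^k\ge1-\delta\) holds if and only if \(k\le N^*(\varepsilon,\delta)\). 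This is \eqref{eq:exact-reliability-threshold}.

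The only step that needs care, and it is a mild one, is confirming that the closed form \(\lfloor\log(1-\delta)/\log(1-\varepsilon)\rfloor\) in \eqref{eq:model-N-star} agrees with the maximum. Both logarithms are negative, so dividing the inequality \(n\log(1-\varepsilon)\ge\log(1-\delta)\) by \(\log(1-\varepsilon)\) reverses it to \(n\le\log(1-\delta)/\log(1-\varepsilon)\). Taking the floor over integers then gives the same value. The main obstacle is therefore purely notational: everything substantive is already contained in Theorem~\ref{thm:exact-residual-law}. As a consistency check, taking \(S=\varnothing\) gives \(D_{\mathrm{exp}}(q,\varnothing)=V_0(G_q)\), since every leaf has some path to \(q\), and the corollary then reduces to Proposition~\ref{prop:native-survival}.
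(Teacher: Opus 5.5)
Your proposal is correct and is essentially the paper's own proof: substitute the exact residual probability \eqref{eq:exact-residual-probability} into the reliability condition, then apply the definition \eqref{eq:model-N-star} of \(N^*(\varepsilon,\delta)\). Your monotonicity argument and your check of the floor formula spell out what the paper simply calls immediate.
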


\begin{proof}
By \eqref{eq:exact-residual-probability}, reliability is equivalent to
\[
  (1-\varepsilon)^{|D_{\mathrm{exp}}(q,S)|}\ge 1-\delta.
\]
The claim is therefore immediate from the definition of
\(N^*(\varepsilon,\delta)\).
\end{proof}

\subsection{Path Interception and Query Postdominators}
\label{subsec:postdominator-interpretation}

Edges point from prerequisites toward consequences, and \(q\) is the
distinguished sink.  The relevant graph relation is therefore postdomination:
an internal vertex protects a leaf only when it lies on every path from that
leaf to \(q\).  Equivalently, it is a dominator relation in the
edge-reversed DAG \cite{lengauer1979fast}.

\begin{definition}[Query-postdominated leaf set]
\label{def:query-postdominator}
For an internal vertex
\(v\in V(G_q)\setminus(B\cup\{q\})\), define
\begin{equation}
  \operatorname{Pdom}_q(v)
  \!:=\!
  \bigl\{
    b\in V_0(G_q):
    v\in V(P)\ \forall P\in\mathcal P_q(b)
  \bigr\}.
\label{eq:def-query-postdominator}
\end{equation}
Thus \(v\) postdominates a base leaf \(b\) relative to the exit \(q\) if every
directed path from \(b\) to \(q\) contains \(v\).
\end{definition}

\begin{proposition}[Dependencies protected by an internal cache]
\label{prop:postdominator-protection}
For every internal
\(v\in V(G_q)\setminus(B\cup\{q\})\),
\begin{equation}
  D_{\mathrm{blk}}(q,\{v\})
  =
  \operatorname{Pdom}_q(v).
\label{eq:single-vertex-postdominator}
\end{equation}
More generally, for every finite \(S\subseteq\Cn(B)\),
\begin{equation}
  \begin{split}
    D_{\mathrm{blk}}(q,S)
    =
    &\bigl\{ b\in V_0(G_q) : \\
    & U(q,S)\cap V(P)\neq\varnothing \;\forall P\in\mathcal P_q(b) \bigr\}.
  \end{split}
\label{eq:set-path-interception}
\end{equation}
\end{proposition}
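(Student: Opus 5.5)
The plan is to prove the general identity \eqref{eq:set-path-interception} first and then obtain \eqref{eq:single-vertex-postdominator} as the special case \(S=\{v\}\). Both are purely set-theoretic consequences of Definition~\ref{def:query-local-residual}; no probabilistic or closure-theoretic input is needed.

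For \eqref{eq:set-path-interception}, I would negate the existential quantifier in the definition \eqref{eq:def-exposed-leaves} of \(D_{\mathrm{exp}}(q,S)\). A leaf \(b\in V_0(G_q)\) fails to be exposed exactly when no path \(P\in\mathcal P_q(b)\) satisfies \(V(P)\cap U(q,S)=\varnothing\). Equivalently, every \(P\in\mathcal P_q(b)\) satisfies \(V(P)\cap U(q,S)\neq\varnothing\). Since \(D_{\mathrm{blk}}(q,S)\) is by \eqref{eq:def-blocked-leaves} the complement of \(D_{\mathrm{exp}}(q,S)\) inside \(V_0(G_q)\), this gives \eqref{eq:set-path-interception} directly.

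For \eqref{eq:single-vertex-postdominator}, take \(S=\{v\}\) with \(v\) internal. Then \(v\in V(G_q)\), so \(U(q,\{v\})=\{v\}\cap V(G_q)=\{v\}\). The interception condition \(\{v\}\cap V(P)\neq\varnothing\) is just \(v\in V(P)\). Substituting this into \eqref{eq:set-path-interception} reproduces the defining formula \eqref{eq:def-query-postdominator} of \(\operatorname{Pdom}_q(v)\) word for word.

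The only point that needs care is the vacuous case. I would check that every \(b\in V_0(G_q)\) has \(\mathcal P_q(b)\neq\varnothing\). Otherwise such a leaf would be counted as blocked, and as postdominated by every \(v\), only because the universal quantifier ranges over an empty set; the two sides would still agree, but the interpretation would be odd. Non-emptiness follows from Assumption~\ref{assump:hereditary-dag}: every vertex of \(G(q,B)\) arises by backward expansion from \(q\), so reversing the expansion chain gives a directed path from \(b\) to \(q\). With this checked, the argument is complete.
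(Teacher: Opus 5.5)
Your proposal is correct and follows the paper's own proof: it negates the existential quantifier in the definition of \(D_{\mathrm{exp}}(q,S)\) to obtain the general path-interception identity, then specializes to \(S=\{v\}\). Your two extra checks, that \(U(q,\{v\})=\{v\}\) and that \(\mathcal P_q(b)\neq\varnothing\), are harmless refinements that the paper leaves implicit.
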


\begin{proof}
By Definition~\ref{def:query-local-residual}, \(b\in D_{\mathrm{blk}}(q,S)\)
holds exactly when no directed path from \(b\) to \(q\) is cache-free, namely,
when every path \(P\in\mathcal P_q(b)\) intersects \(U(q,S)\).  This proves
\eqref{eq:set-path-interception}.  Equation
\eqref{eq:single-vertex-postdominator} is the special case \(S=\{v\}\).
\end{proof}

\begin{remark}[Trees versus DAGs]
\label{rem:tree-versus-dag}
If \(G_q\) is a tree, every base leaf below \(v\) has a unique path to
\(q\), and therefore \(\operatorname{Pdom}_q(v)=V_0(G_q[v])\).  For a
general DAG only the inclusion
\(\operatorname{Pdom}_q(v)\subseteq V_0(G_q[v])\) is guaranteed, and it is
strict whenever a leaf of \(G_q[v]\) also reaches \(q\) along a path that
bypasses \(v\).  Internal semantic summaries must therefore be evaluated
through path interception or postdomination, not through sub-DAG membership
alone.
\end{remark}

\subsection{Exact Transparent-Storage Optimization}
\label{subsec:exact-storage-optimization}

A cache is \emph{answer-excluding} for \(q\) if it does not store \(q\)
itself (storing \(q\) directly is ordinary result caching: the optimum with
direct answers is simply
\(\min\{c(q),\sigma_{\mathrm{sem,int}}^*(q,\varepsilon,\delta)\}\), and
we keep this baseline separate so that gains from reusable derivation
structure are not conflated with it).  Define the corresponding optimum by
\begin{align}
  \sigma_{\mathrm{sem,int}}^*
  (q,\varepsilon,\delta)
  :=
  \min\Bigl\{
    &\ell_c(S):
    S\subseteq\Cn(B)\setminus\{q\},
    \ S\text{ finite},
    \nonumber\\[-1mm]
    &\Pr_{\varepsilon}
      \left[
        q\in\Cn(\widetilde B\cup S)
      \right]
      \ge 1-\delta
  \Bigr\}.
\label{eq:def-semantic-internal-optimum}
\end{align}

\begin{theorem}[Exact query-local storage formulation]
\label{thm:exact-single-query-opt}
Under Assumptions~\ref{assump:det-local} and
\ref{assump:hereditary-dag},
\begin{align}
  \sigma_{\mathrm{sem,int}}^*
  (q,\varepsilon,\delta)
  =
  \min\Bigl\{
    &\ell_c(S):
    S\subseteq V(G_q)\setminus\{q\},
    \nonumber\\[-1mm]
    &|D_{\mathrm{exp}}(q,S)|
      \le N^*(\varepsilon,\delta)
  \Bigr\}.
\label{eq:exact-single-query-opt}
\end{align}
Thus the answer-excluding single-query problem is exactly a weighted partial
path-interception problem: the cache must intercept every path from all but at
most \(N^*(\varepsilon,\delta)\) base leaves to \(q\).
\end{theorem}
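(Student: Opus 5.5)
The proof will show that the two minimizations have the same value by matching their feasible sets up to cost. Throughout, \(S\) is finite and answer-excluding, so \(S\subseteq\Cn(B)\setminus\{q\}\). For such an \(S\), Corollary~\ref{cor:exact-reliability-threshold} says that \((\varepsilon,\delta)\)-reliability is equivalent to \(|D_{\mathrm{exp}}(q,S)|\le N^*(\varepsilon,\delta)\). The problem \eqref{eq:def-semantic-internal-optimum} can therefore be restated as minimizing \(\ell_c(S)\) over finite \(S\subseteq\Cn(B)\setminus\{q\}\) with \(|D_{\mathrm{exp}}(q,S)|\le N^*(\varepsilon,\delta)\). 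What remains is to restrict the search from \(\Cn(B)\setminus\{q\}\) to \(V(G_q)\setminus\{q\}\) without changing the optimum.

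The key step is that \(D_{\mathrm{exp}}(q,S)\) depends on \(S\) only through the projection \(U(q,S)=S\cap V(G_q)\). This holds by Definition~\ref{def:query-local-residual}: cache-freeness of a path \(P\) is tested against \(U(q,S)\), and \(V(P)\subseteq V(G_q)\). Consequently \(D_{\mathrm{exp}}(q,S)=D_{\mathrm{exp}}(q,U(q,S))\). Since \(U(q,U)=U\) for any \(U\subseteq V(G_q)\), this identity is consistent when applied to the projected cache itself. The same conclusion also follows from \eqref{eq:query-local-rigidity} combined with the exact probability law \eqref{eq:exact-residual-probability}.

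The two inequalities then go as follows.

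\(\le\): \(V(G_q)\subseteq\mathcal V_{\mathrm{rel}}(B)\subseteq\Cn(B)\), because backward expansion stays in the relevant region. Hence every \(S\subseteq V(G_q)\setminus\{q\}\) is a finite transparent answer-excluding cache, and it is reliable whenever it satisfies the threshold.

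\(\ge\): Let \(S\) be feasible for \eqref{eq:def-semantic-internal-optimum}, and put \(S':=U(q,S)\). Then \(S'\subseteq V(G_q)\setminus\{q\}\), and \(|D_{\mathrm{exp}}(q,S')|=|D_{\mathrm{exp}}(q,S)|\le N^*\). Because \(c>0\), we also have \(\ell_c(S')\le\ell_c(S)\).

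The minima exist because the right-hand side of \eqref{eq:exact-single-query-opt} ranges over a finite, nonempty family. Nonemptiness holds since \(S=V(G_q)\setminus\{q\}\) contains every leaf and hence has \(D_{\mathrm{exp}}=\varnothing\). It then follows that the left-hand infimum is attained by some projected set.

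The main obstacle is the bookkeeping, not any deep step. One point is confirming that \(V(G_q)\subseteq\Cn(B)\), which uses \(\mathcal V_{\mathrm{rel}}(B)\subseteq\Cn(B)\) from Assumption~\ref{assump:det-local}. The other is handling \(q\in B\) (a leaf query) if it is not excluded; it is, since \(q\in\mathcal Q\setminus B\) is fixed in this section. The closing interpretation as weighted partial path interception is then a rereading of Proposition~\ref{prop:postdominator-protection}. By \eqref{eq:set-path-interception}, the constraint says that at most \(N^*(\varepsilon,\delta)\) leaves may have a path to \(q\) avoiding \(S\).
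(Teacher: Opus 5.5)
Your proof is correct and follows essentially the same route as the paper. It converts reliability to the threshold $|D_{\mathrm{exp}}(q,S)|\le N^*(\varepsilon,\delta)$ via Corollary~\ref{cor:exact-reliability-threshold}, replaces any feasible cache by its projection $U(q,S)$ at no greater cost, and notes that every subset of $V(G_q)\setminus\{q\}$ is admissible; the differences are cosmetic (you read projection invariance of $D_{\mathrm{exp}}$ directly off Definition~\ref{def:query-local-residual} instead of citing the rigidity identity, and you make explicit the inclusion $V(G_q)\subseteq\Cn(B)$ and the attainment of the minimum, which the paper leaves implicit).
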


\begin{proof}
By Corollary~\ref{cor:exact-reliability-threshold}, the reliability constraint
is equivalent to
\[
  |D_{\mathrm{exp}}(q,S)|\le N^*(\varepsilon,\delta).
\]

Now let \(S\subseteq\Cn(B)\setminus\{q\}\) be any feasible finite transparent
cache.  By Theorem~\ref{thm:exact-residual-law}, the projected cache \(U(q,S)\)
induces exactly the same recovery event.  Moreover,
\[
  U(q,S)\subseteq S
  \qquad\text{and}\qquad
  \ell_c\bigl(U(q,S)\bigr)\le \ell_c(S),
\]
because all object costs are strictly positive.  Hence every feasible cache can
be replaced, without hurting reliability and without increasing cost, by a
feasible cache contained in \(V(G_q)\setminus\{q\}\).  Conversely, every cache
contained in \(V(G_q)\setminus\{q\}\) is an admissible transparent cache.
This proves \eqref{eq:exact-single-query-opt}.
\end{proof}

\begin{corollary}[Exact leaf-only transparent baseline]
\label{cor:leaf-only-opt}
Restrict the cache to
\[
  S\subseteq V_0(G_q),
\]
and arrange the leaf costs in nondecreasing order:
\[
  c_{(1)}\le c_{(2)}\le\cdots\le c_{(\kappa_q)}.
\]
Then
\begin{equation}
  \sigma_{\mathrm{leaf}}^*
  (q,\varepsilon,\delta)
  =
  \sum_{i=1}^{(\kappa_q-N^*)^+}c_{(i)},
\label{eq:heterogeneous-leaf-opt}
\end{equation}
where
\(N^*=N^*(\varepsilon,\delta)\) and an empty sum equals zero.

If all base premises have the same explicit storage cost \(c_B\), then
\begin{equation}
  \sigma_{\mathrm{leaf}}^*
  (q,\varepsilon,\delta)
  =
  \bigl(\kappa_q-N^*(\varepsilon,\delta)\bigr)^+c_B.
\label{eq:homogeneous-leaf-opt}
\end{equation}
\end{corollary}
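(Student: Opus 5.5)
The plan is to specialize Theorem~\ref{thm:exact-single-query-opt} to caches \(S\subseteq V_0(G_q)\), where the exposed-leaf set has an explicit form. Since \(q\notin B\), any such \(S\) is automatically answer-excluding. The first step is to show
\[
  D_{\mathrm{exp}}(q,S)=V_0(G_q)\setminus S
  \qquad\text{for every } S\subseteq V_0(G_q).
\]

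To prove this identity, I would argue path by path. Fix a leaf \(b\in V_0(G_q)\) and a path \(P\in\mathcal P_q(b)\). Every vertex of \(P\) other than its start \(b\) has an in-neighbor on \(P\). By Assumption~\ref{assump:hereditary-dag}, base vertices are sources of \(G_q\), so they have no in-neighbors. Hence \(b\) is the only base vertex on \(P\). It follows that \(V(P)\cap U(q,S)=V(P)\cap S\) is \(\{b\}\) if \(b\in S\) and empty otherwise.

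Two consequences follow. If \(b\in S\), then every path from \(b\) meets the cache, so \(b\) is blocked. If \(b\notin S\), then \(b\) is exposed provided at least one path from \(b\) to \(q\) exists. This existence claim is the only step needing care. I would obtain it from the construction of \(G(q,B)\) as the backward expansion from \(q\): every vertex is reached from \(q\) by following parent links. Reversing that chain gives a directed path from the vertex to \(q\).

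With the identity in hand, Corollary~\ref{cor:exact-reliability-threshold} reduces reliability to
\[
  \kappa_q-|S|\le N^*,
  \qquad\text{that is,}\qquad
  |S|\ge(\kappa_q-N^*)^+ .
\]
Costs are positive and only the cardinality of \(S\) matters. So the problem becomes minimizing \(\sum_{v\in S}c(v)\) over leaf subsets of size at least \(k:=(\kappa_q-N^*)^+\).

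An optimal choice takes exactly \(k\) leaves, since dropping any extra element strictly lowers the cost. Among sets of exactly \(k\) leaves, the cheapest consists of the \(k\) leaves of smallest cost. To justify this, note that for any \(S\) with \(|S|=k\), the \(i\)-th smallest cost within \(S\) is at least \(c_{(i)}\). This is a standard sorting/exchange argument. It yields \eqref{eq:heterogeneous-leaf-opt}. When \(k=0\), the empty cache is feasible, consistent with Proposition~\ref{prop:native-survival}.

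The homogeneous formula \eqref{eq:homogeneous-leaf-opt} follows immediately by setting every \(c_{(i)}=c_B\).

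I expect the only potential obstacle to be the identity \(D_{\mathrm{exp}}=V_0(G_q)\setminus S\), specifically confirming that leaves are sources and that every leaf reaches \(q\). Both facts come directly from Assumption~\ref{assump:hereditary-dag}.
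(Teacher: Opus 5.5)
Your proposal is correct and follows the paper's proof essentially step by step. Both use the identity $D_{\mathrm{exp}}(q,S)=V_0(G_q)\setminus S$, which holds because base leaves are sources; both then reduce reliability via Corollary~\ref{cor:exact-reliability-threshold} to $|S|\ge(\kappa_q-N^*)^+$ and choose the cheapest such leaves. Your explicit check that every leaf has a path to $q$ (so an uncached leaf really is exposed) and your exchange argument for the sorted costs are details the paper leaves implicit.
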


\begin{proof}
Let \(S\subseteq V_0(G_q)\).  Every base leaf is a source of \(G_q\), so a
directed path originating at one base leaf cannot pass through a distinct base
leaf.  Consequently, a leaf \(b\in V_0(G_q)\) is blocked by a leaf-only cache
if and only if \(b\in S\), so
\(D_{\mathrm{exp}}(q,S)=V_0(G_q)\setminus S\) and
\(|D_{\mathrm{exp}}(q,S)|=\kappa_q-|S|\).
By Corollary~\ref{cor:exact-reliability-threshold}, reliability is equivalent
to \(\kappa_q-|S|\le N^*(\varepsilon,\delta)\), i.e.\
\(|S|\ge(\kappa_q-N^*(\varepsilon,\delta))^+\).
Therefore the optimal leaf-only policy is to cache exactly that many leaves of
smallest cost, which proves \eqref{eq:heterogeneous-leaf-opt}.  The homogeneous
formula \eqref{eq:homogeneous-leaf-opt} follows immediately.
\end{proof}

\subsection{Semantic Modules and a Strict Single-Query Gain}
\label{subsec:single-query-module-benchmark}

Let
\[
  R=\{r_1,\ldots,r_M\}
  \subseteq
  V(G_q)\setminus(B\cup\{q\})
\]
be a selected set of internal semantic-module roots, and define
\begin{equation}
  p_R:=|D_{\mathrm{exp}}(q,R)|.
\label{eq:def-module-residual-count}
\end{equation}
Thus \(p_R\) is the number of base dependencies that remain exposed after all
vertices in \(R\) have been cached.

\begin{proposition}[Module roots followed by residual-leaf completion]
\label{prop:module-root-achievability}
For every
\[
  T\subseteq D_{\mathrm{exp}}(q,R),
\]
one has
\begin{equation}
  D_{\mathrm{exp}}(q,R\cup T)
  =
  D_{\mathrm{exp}}(q,R)\setminus T.
\label{eq:residual-after-leaf-completion}
\end{equation}
Consequently, if all base leaves have cost \(c_B\), then
\begin{equation}
  \sigma_{\mathrm{sem,int}}^*
  (q,\varepsilon,\delta)
  \le
  \sum_{j=1}^{M}c(r_j)
  +
  \bigl(p_R-N^*(\varepsilon,\delta)\bigr)^+c_B.
\label{eq:module-achievable-cost}
\end{equation}
\end{proposition}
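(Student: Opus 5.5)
The identity \eqref{eq:residual-after-leaf-completion} is a statement about the path-interception sets of Definition~\ref{def:query-local-residual}, so I would prove it by comparing cache-free paths directly. Since \(R\subseteq V(G_q)\) and \(T\subseteq D_{\mathrm{exp}}(q,R)\subseteq V_0(G_q)\subseteq V(G_q)\), the projections satisfy \(U(q,R\cup T)=R\cup T\) and \(U(q,R)=R\). The argument then reduces to one structural observation, already used in the proof of Corollary~\ref{cor:leaf-only-opt}. Every base leaf is a source of \(G_q\). Hence a directed path \(P\in\mathcal P_q(b)\) starting at \(b\) contains no base leaf other than \(b\) itself. It follows that, for such a path, \(V(P)\cap(R\cup T)=\varnothing\) holds if and only if \(V(P)\cap R=\varnothing\) and \(b\notin T\).

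With this observation the two inclusions are immediate. If \(b\in D_{\mathrm{exp}}(q,R\cup T)\), some \(P\in\mathcal P_q(b)\) avoids \(R\cup T\). That path avoids \(R\), so \(b\in D_{\mathrm{exp}}(q,R)\), and since \(b\in V(P)\) we also get \(b\notin T\). Conversely, let \(b\in D_{\mathrm{exp}}(q,R)\setminus T\), and take a path \(P\in\mathcal P_q(b)\) with \(V(P)\cap R=\varnothing\). The only leaf on \(P\) is \(b\), and \(b\notin T\), so \(P\) also avoids \(T\). Thus \(P\) is cache-free for \(R\cup T\), and \(b\in D_{\mathrm{exp}}(q,R\cup T)\). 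Notably, the hypothesis \(T\subseteq D_{\mathrm{exp}}(q,R)\) is not needed for the identity. It only guarantees that \(|D_{\mathrm{exp}}(q,R)\setminus T|=p_R-|T|\), which is what makes the counting clean.

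For the cost bound \eqref{eq:module-achievable-cost}, choose any \(T\subseteq D_{\mathrm{exp}}(q,R)\) with \(|T|=(p_R-N^*)^+\); this is possible because \((p_R-N^*)^+\le p_R\). The identity then gives \(|D_{\mathrm{exp}}(q,R\cup T)|=p_R-(p_R-N^*)^+\le N^*\). The cache \(R\cup T\) is a subset of \(V(G_q)\setminus\{q\}\), because the module roots are internal and \(T\) consists of base leaves while \(q\notin B\). It is therefore feasible in the formulation \eqref{eq:exact-single-query-opt} of Theorem~\ref{thm:exact-single-query-opt}. Its cost is \(\sum_j c(r_j)+|T|\,c_B\), using that \(R\) and \(T\) are disjoint and that leaf costs are homogeneous, and this yields the bound.

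I expect no real obstacle. The only point needing care is the source-vertex fact, namely that a path from one leaf cannot pass through a different leaf, together with checking that \(R\cup T\) really excludes \(q\).
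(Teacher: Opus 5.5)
Your proof is correct and follows essentially the same route as the paper. Both inclusions rest on the fact that a directed path from a source leaf cannot pass through a distinct leaf, and the cost bound comes from caching \(R\) together with any \((p_R-N^*)^+\) leaves of \(D_{\mathrm{exp}}(q,R)\). Your side remarks are accurate refinements that the paper leaves implicit: the hypothesis \(T\subseteq D_{\mathrm{exp}}(q,R)\) matters only for the count, and \(R\cup T\) indeed avoids \(q\) and has cost \(\sum_j c(r_j)+|T|c_B\) because \(R\cap T=\varnothing\).
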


\begin{proof}
Since \(T\subseteq V_0(G_q)\), every vertex of \(T\) is a base leaf.  We first
prove \eqref{eq:residual-after-leaf-completion}.

Adding more cached objects cannot create a new cache-free path, so
\(D_{\mathrm{exp}}(q,R\cup T)\subseteq D_{\mathrm{exp}}(q,R)\).
Moreover, each \(b\in T\) is blocked after being cached, because every directed
path from \(b\) to \(q\) contains its initial vertex \(b\); hence
\(D_{\mathrm{exp}}(q,R\cup T)\subseteq D_{\mathrm{exp}}(q,R)\setminus T\).

For the reverse inclusion, let \(b\in D_{\mathrm{exp}}(q,R)\setminus T\).  By
definition, there exists a directed path \(P\in\mathcal P_q(b)\) with
\(V(P)\cap R=\varnothing\).
Because \(T\) consists of base leaves and \(b\notin T\), the path \(P\) cannot
meet any vertex of \(T\): a directed path that starts at the source leaf \(b\)
cannot pass through a distinct source leaf.  Therefore
\(V(P)\cap(R\cup T)=\varnothing\), so \(b\in D_{\mathrm{exp}}(q,R\cup T)\).
This proves \eqref{eq:residual-after-leaf-completion}.

For the cost bound, cache the module roots in \(R\) together with any
\((p_R-N^*(\varepsilon,\delta))^+\)
leaves from the residual set \(D_{\mathrm{exp}}(q,R)\).  By
\eqref{eq:residual-after-leaf-completion}, the resulting exposed-leaf count is
at most \(N^*(\varepsilon,\delta)\).  Corollary~\ref{cor:exact-reliability-threshold}
therefore yields \eqref{eq:module-achievable-cost}.
\end{proof}

\begin{corollary}[Explicit sufficient condition for semantic gain]
\label{cor:explicit-semantic-gain}
Assume that
\(\operatorname{Pdom}_q(r_1),\ldots,\operatorname{Pdom}_q(r_M)\)
are pairwise disjoint with
\(|\operatorname{Pdom}_q(r_j)|\ge s\), \(j=1,\ldots,M\).
If \(c(r_j)=c_I\) for all \(j\), then
\begin{equation}
  p_R\le\kappa_q-Ms
\label{eq:module-residual-upper-bound}
\end{equation}
and
\begin{equation}
  \sigma_{\mathrm{sem,int}}^*
  (q,\varepsilon,\delta)
  \le
  Mc_I
  +
  \bigl(
    \kappa_q-Ms-N^*(\varepsilon,\delta)
  \bigr)^+c_B.
\label{eq:module-homogeneous-bound}
\end{equation}
Therefore the explicit module construction strictly improves on the
homogeneous leaf-only baseline whenever
\begin{align}
  Mc_I
  +
  \bigl(\kappa_q-Ms-N^*\bigr)^+c_B
  <
  \bigl(\kappa_q-N^*\bigr)^+c_B,
\label{eq:semantic-gain-condition}
\end{align}
where \(N^*=N^*(\varepsilon,\delta)\).

In particular, if
\[
  \kappa_q-Ms\ge N^*(\varepsilon,\delta),
\]
then the sufficient condition reduces to
\begin{equation}
  c_I<s\,c_B,
\label{eq:per-module-gain-condition}
\end{equation}
and the resulting saving relative to the leaf-only baseline is at least
\begin{equation}
  M\bigl(sc_B-c_I\bigr).
\label{eq:module-saving}
\end{equation}
\end{corollary}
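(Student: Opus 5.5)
The plan is to reduce everything to Proposition~\ref{prop:module-root-achievability} by bounding \(p_R\). First I prove \eqref{eq:module-residual-upper-bound}. By the single-vertex postdominator identity \eqref{eq:single-vertex-postdominator}, each \(r_j\) blocks every leaf in \(\operatorname{Pdom}_q(r_j)\). Blocking is monotone in the cache: if every path from \(b\) to \(q\) meets \(\{r_j\}\subseteq R\), then every such path meets \(U(q,R)\). So \eqref{eq:set-path-interception} gives
\[
  \bigcup_{j=1}^{M}\operatorname{Pdom}_q(r_j)\subseteq D_{\mathrm{blk}}(q,R).
\]
The union is disjoint with each part of size at least \(s\), hence \(|D_{\mathrm{blk}}(q,R)|\ge Ms\). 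Since \(D_{\mathrm{exp}}\) and \(D_{\mathrm{blk}}\) partition \(V_0(G_q)\) by \eqref{eq:def-blocked-leaves}, we get \(p_R=\kappa_q-|D_{\mathrm{blk}}(q,R)|\le\kappa_q-Ms\).

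Next, \eqref{eq:module-homogeneous-bound} follows by substituting this bound into \eqref{eq:module-achievable-cost}. Put \(\sum_j c(r_j)=Mc_I\) and use that \(x\mapsto(x-N^*)^+\) is nondecreasing, so \((p_R-N^*)^+c_B\le(\kappa_q-Ms-N^*)^+c_B\). The strict-gain statement \eqref{eq:semantic-gain-condition} is then a comparison with \eqref{eq:homogeneous-leaf-opt} of Corollary~\ref{cor:leaf-only-opt}. If the left side of \eqref{eq:semantic-gain-condition} is smaller, the explicit construction (roots \(R\) plus residual-leaf completion) costs strictly less than \(\sigma_{\mathrm{leaf}}^*(q,\varepsilon,\delta)\).

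For the special case \(\kappa_q-Ms\ge N^*\), both positive parts are active, because \(\kappa_q-N^*\ge Ms\ge0\). The difference of the two sides of \eqref{eq:semantic-gain-condition} is then
\[
  (\kappa_q-N^*)c_B-Mc_I-(\kappa_q-Ms-N^*)c_B=M(sc_B-c_I).
\]
This is positive exactly when \(c_I<sc_B\), given \(M\ge1\); that yields \eqref{eq:per-module-gain-condition}. The same difference lower-bounds the saving, which gives \eqref{eq:module-saving}.

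There is no real obstacle. The one step needing care is the monotonicity of blocking under enlarging the cache from \(\{r_j\}\) to \(R\), which comes straight from the path quantifier in \eqref{eq:set-path-interception}. One should also note that \(M\ge1\) is implicit for strictness.
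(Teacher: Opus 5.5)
Your proposal is correct and follows the paper's proof step for step. The chain is the same: postdominator protection gives $\bigcup_j\operatorname{Pdom}_q(r_j)\subseteq D_{\mathrm{blk}}(q,R)$, disjointness gives $p_R\le\kappa_q-Ms$, substituting into Proposition~\ref{prop:module-root-achievability} gives the cost bound, and comparing with \eqref{eq:homogeneous-leaf-opt} with both positive parts active gives the difference $M(sc_B-c_I)$. Your side remarks make explicit what the paper leaves implicit: blocking is monotone as the cache grows, $x\mapsto(x-N^*)^+$ is monotone, and strictness needs $M\ge1$.
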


\begin{proof}
By Proposition~\ref{prop:postdominator-protection}, caching \(r_j\) blocks
every leaf in \(\operatorname{Pdom}_q(r_j)\).  Hence
\[
  D_{\mathrm{blk}}(q,R)
  \supseteq
  \bigcup_{j=1}^{M}\operatorname{Pdom}_q(r_j).
\]
By pairwise disjointness and the cardinality assumption,
\[
  |D_{\mathrm{blk}}(q,R)|
  \ge
  \left|
    \bigcup_{j=1}^{M}\operatorname{Pdom}_q(r_j)
  \right|
  \ge Ms.
\]
Since
\[
  p_R
  =
  |D_{\mathrm{exp}}(q,R)|
  =
  \kappa_q-|D_{\mathrm{blk}}(q,R)|,
\]
equation \eqref{eq:module-residual-upper-bound} follows.  Substituting this
bound into Proposition~\ref{prop:module-root-achievability} gives
\eqref{eq:module-homogeneous-bound}.

The exact homogeneous leaf-only baseline is
\eqref{eq:homogeneous-leaf-opt}.  Therefore, whenever
\eqref{eq:semantic-gain-condition} holds, the displayed achievable
semantic-module cost is strictly smaller than the leaf-only optimum, and hence
so is \(\sigma_{\mathrm{sem,int}}^*(q,\varepsilon,\delta)\).

Finally, if \(\kappa_q-Ms\ge N^*\), then both positive-part terms in
\eqref{eq:semantic-gain-condition} are active, and their difference is
\(Ms\,c_B\).  This reduces the sufficient condition to
\eqref{eq:per-module-gain-condition} and yields the explicit saving lower bound
\eqref{eq:module-saving}.
\end{proof}

\begin{remark}[Set-wise interception can be stronger]
\label{rem:set-wise-interception}
For several cached vertices, the blocked set may strictly contain
\(\bigcup_{r\in R}\operatorname{Pdom}_q(r)\): different paths from one
leaf may be intercepted by different vertices of \(R\), even though no
single vertex lies on every path.  Thus
Corollary~\ref{cor:explicit-semantic-gain} is an easily checked sufficient
condition, not an exact characterization of the gain of a general module
set.
\end{remark}

\begin{remark}[Bridge to workload-level reuse]
\label{rem:bridge-to-workload}
The single-query gain above requires an internal consequence to cost less
than the raw premises it intercepts; a workload adds a second mechanism,
reuse, since one context-independent module is stored once but protects
several queried derivations at once.  The workload-level section specifies
the common cache, the reliability criterion, and the exact no-bypass
condition under which this happens.
\end{remark}

\subsection{Robustness to Non-Unique Derivations: An Envelope Law}
\label{subsec:envelope-nonunique}

Remark~\ref{rem:assumption-roles} flagged the relaxation in which a derived
object admits several admissible parent tuples instead of one designated
tuple.  The canonical DAG is then replaced by an AND--OR provenance structure
\cite{green2007provenance}.  This subsection carries out the relaxation for
the single-query kernel.  The exact residual law of
Theorem~\ref{thm:exact-residual-law} survives in an \emph{envelope} form:
recovery is an existential union over candidate witness DAGs, and its
probability is sandwiched between the best single witness and a union bound.
Throughout this subsection, Assumptions~\ref{assump:det-local} and
\ref{assump:hereditary-dag} are replaced by
Assumption~\ref{assump:and-or-local} below.  The standing notation of this
section, in particular the fixed query \(q\in\mathcal Q\setminus B\), is
retained.

\begin{assumption}[AND--OR local semantics with well-founded expansion]
\label{assump:and-or-local}
The operator \(\Cn\), with its closure axioms and effectiveness, and the
finite relevant region \(\mathcal V_{\mathrm{rel}}(B)\) are retained from
Assumption~\ref{assump:det-local}, as is the base-irreducibility law
\eqref{eq:base-irreducibility}; the designated-parent map, together with its
well-foundedness clause, is superseded by the admissible parent family below
and the rank condition \eqref{eq:well-founded-rank}.  The designated parent
map is replaced by a computable \emph{admissible parent family}: for every
\(v\in\mathcal V_{\mathrm{rel}}(B)\setminus B\),
\[
  \operatorname{Par}(v)
  \subseteq
  2^{\,\mathcal V_{\mathrm{rel}}(B)\setminus\{v\}}
\]
is a nonempty finite family of finite nonempty sets, and for every finite
\(\Gamma\subseteq\Cn(B)\),
\begin{equation}
  \begin{split}
    v & \in\Cn(\Gamma)
    \Longleftrightarrow
    \Bigl[ v\in\Gamma \;\lor \\
       & \exists P\in\operatorname{Par}(v):\, P\subseteq\Cn(\Gamma) \Bigr],
     v\in\mathcal V_{\mathrm{rel}}(B)\setminus B.
  \end{split}
\label{eq:and-or-local-law}
\end{equation}
Thus a derived object is available when it is stored or when \emph{at least
one} of its admissible parent tuples is fully derivable---an AND over each
tuple and an OR over the family.  Write \(r_v:=|\operatorname{Par}(v)|\),
and call \(v\) \emph{ambiguous} if \(r_v\ge 2\).
Assumption~\ref{assump:det-local} is the special case \(r_v=1\) for all
\(v\), with \(\operatorname{par}(v)\) the unique member of
\(\operatorname{Par}(v)\).

Finally, backward expansion is well-founded: there exists a rank function
\[
  \operatorname{rk}:\mathcal V_{\mathrm{rel}}(B)\longrightarrow\mathbb Z_{\ge0},
  \qquad
  \operatorname{rk}^{-1}(0)=B,
\]
such that
\begin{equation}
  u\in P\in\operatorname{Par}(v)
  \quad\Longrightarrow\quad
  \operatorname{rk}(u)<\operatorname{rk}(v).
\label{eq:well-founded-rank}
\end{equation}
\end{assumption}

\begin{definition}[Candidate witness DAGs and best-case residual]
\label{def:candidate-witness}
Let \(S\subseteq\Cn(B)\) be a finite transparent cache.  An \emph{\(S\)-relative
candidate witness DAG} for \(q\) is a finite acyclic digraph \(G\) with
\(V(G)\subseteq\mathcal V_{\mathrm{rel}}(B)\) and \(q\in V(G)\) such that:
\begin{enumerate}[label=\textup{(\roman*)}]
\item every vertex of \(G\) lies on a directed path terminating at \(q\);
\item every non-source vertex \(v\) has in-neighborhood
\(N_G^{-}(v)\in\operatorname{Par}(v)\);
\item every source of \(G\) lies in \(B\cup S\).
\end{enumerate}
A base vertex is always a source, since no admissible parent tuple is defined
on \(B\).  Denote the family of all such DAGs by \(\mathcal G_S(q,B)\), and
abbreviate the \emph{\(B\)-grounded} subfamily by
\[
  \mathcal G(q,B):=\mathcal G_{\varnothing}(q,B).
\]
Since \(\mathcal V_{\mathrm{rel}}(B)\) is finite, backward expansion from
\(q\) shows that
\begin{equation}
  |\mathcal G_S(q,B)|
  \le
  \prod_{v\in\mathcal V_{\mathrm{rel}}(B)\setminus B}(1+r_v)
  <\infty:
\label{eq:candidate-count-bound}
\end{equation}
at each visited non-base vertex, expansion either stops (permitted only at
\(B\cup S\)) or selects one of \(r_v\) admissible tuples, and distinct
expansions may yield the same DAG.  Moreover
\(\mathcal G(q,B)\neq\varnothing\) if and only if \(q\in\Cn(B)\): arbitrary
admissible choices strictly decrease the rank \eqref{eq:well-founded-rank},
hence terminate at rank zero, i.e.\ at \(B\); conversely, induction along any
\(B\)-grounded witness gives \(q\in\Cn(B)\).

For \(G\in\mathcal G_S(q,B)\), define
\[
  U_G(q,S):=S\cap V(G),
  \qquad
  V_0(G):=V(G)\cap B,
\]
and the exposed-leaf set
\begin{equation}
  \begin{split}
    D_{\mathrm{exp}}^{G}(q,S)
    := &\bigl\{ b\in V_0(G) :  \\
    & \exists\text{ a directed path }P:b\leadsto q\text{ in }G \\
    & \text{with }V(P)\cap U_G(q,S)=\varnothing \bigr\},
  \end{split}
\label{eq:def-exposed-leaves-witness}
\end{equation}
exactly as in Definition~\ref{def:query-local-residual} but with directed
paths taken in \(G\).  Write
\[
  e_G(q,S):=|D_{\mathrm{exp}}^{G}(q,S)|,
  \quad
  e_*(q,S):=\min_{G\in\mathcal G_S(q,B)}e_G(q,S).
\]
\end{definition}

\begin{theorem}[Envelope rigidity under non-unique derivations]
\label{thm:envelope-rigidity}
Under Assumption~\ref{assump:and-or-local}, suppose \(q\in\Cn(B)\) and let
\(S\subseteq\Cn(B)\) be finite.  Then:
\begin{enumerate}[label=\textup{(\alph*)}]
\item for every \(\widetilde B\subseteq B\),
\begin{equation}
  q\in\Cn(\widetilde B\cup S)
  \Longleftrightarrow
  \exists\,G\in\mathcal G_S(q,B):
  \ D_{\mathrm{exp}}^{G}(q,S)\subseteq\widetilde B;
\label{eq:envelope-event}
\end{equation}
equivalently, with survival indicators
\(X_b:=\mathbf 1\{b\in\widetilde B\}\), the recovery indicator is the
monotone DNF
\begin{equation}
  \mathbf 1\{q\in\Cn(\widetilde B\cup S)\}
  =
  \bigvee_{G\in\mathcal G_S(q,B)}
  \ \bigwedge_{b\in D_{\mathrm{exp}}^{G}(q,S)} X_b;
\label{eq:envelope-dnf}
\end{equation}
\item under independent premise erasures with erasure probability
\(\varepsilon\),
\begin{align}
  (1-\varepsilon)^{e_*(q,S)}
  &\le
  \Pr_{\varepsilon}\!\left[q\in\Cn(\widetilde B\cup S)\right]
  \nonumber\\
  &\le
  \min\!\left\{
    1,\;
    \sum_{G\in\mathcal G_S(q,B)}(1-\varepsilon)^{e_G(q,S)}
  \right\}
  \nonumber\\
  &\le
  \min\!\left\{
    1,\;
    |\mathcal G_S(q,B)|\,(1-\varepsilon)^{e_*(q,S)}
  \right\};
\label{eq:envelope-probability}
\end{align}
\item if the exposed sets
\(\{D_{\mathrm{exp}}^{G}(q,S)\}_{G\in\mathcal G_S(q,B)}\) are pairwise
disjoint, then exactly
\begin{equation}
  \Pr_{\varepsilon}\!\left[q\in\Cn(\widetilde B\cup S)\right]
  =
  1-\prod_{G\in\mathcal G_S(q,B)}
  \Bigl(1-(1-\varepsilon)^{e_G(q,S)}\Bigr);
\label{eq:envelope-independent}
\end{equation}
\item if \(r_v=1\) for every \(v\), then \(e_*(q,S)=|D_{\mathrm{exp}}(q,S)|\)
and the lower bound in \eqref{eq:envelope-probability} coincides with the
exact law \eqref{eq:exact-residual-probability}.
\end{enumerate}
\end{theorem}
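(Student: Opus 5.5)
The proof follows the four parts in order. Part (a) carries the content; (b)--(d) follow from it by standard probabilistic and structural arguments. For (a), we prove a per-witness version of Lemma~\ref{lem:failure-path} and then an existence/extraction argument.

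\emph{Per-witness sufficiency.} Fix \(G\in\mathcal G_S(q,B)\) with \(D_{\mathrm{exp}}^G(q,S)\subseteq\widetilde B\). We show by induction along a topological order of \(G\) that every vertex \(v\in V(G)\) whose every path to \(v\) in \(G\) starting at a source avoids cache hits only at surviving leaves lies in \(\Cn(\widetilde B\cup S)\). Concretely, the claim is: if \(v\notin\Cn(\widetilde B\cup S)\), then some source \(b\) of \(G\) with \(b\notin\widetilde B\cup S\) has a cache-free path \(b\leadsto v\) in \(G\).

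For the induction:
\begin{itemize}
\item Sources lie in \(B\cup S\). If a source is in \(S\) it is available; if it is a base vertex, \eqref{eq:base-irreducibility} applies.
\item A non-source \(v\notin S\) has \(N_G^-(v)\in\operatorname{Par}(v)\). By \eqref{eq:and-or-local-law} its unavailability forces some in-neighbour to be unavailable, and we extend that path by one edge exactly as in the proof of Lemma~\ref{lem:failure-path}.
\end{itemize}
Since every vertex lies on a path to \(q\) (condition (i)), we apply the claim at \(v=q\). The offending source is an erased base leaf with a cache-free path to \(q\), i.e.\ it lies in \(D^G_{\mathrm{exp}}\setminus\widetilde B=\varnothing\). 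This is a contradiction, so \(q\) is recovered.

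\emph{Necessity (the main obstacle).} Suppose \(q\in\Cn(\widetilde B\cup S)\); we must construct a witness whose exposed set survives. We would define availability of vertices of \(\mathcal V_{\mathrm{rel}}(B)\) under \(\Gamma=\widetilde B\cup S\) and build \(G\) by backward expansion from \(q\):
\begin{itemize}
\item at a vertex in \(S\), stop, making it a source;
\item at a base vertex, stop; it is then necessarily in \(\widetilde B\);
\item otherwise, by \eqref{eq:and-or-local-law}, pick some \(P\in\operatorname{Par}(v)\) with \(P\subseteq\Cn(\Gamma)\) and recurse into its members.
\end{itemize}
The rank condition \eqref{eq:well-founded-rank} gives termination and acyclicity. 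Merging repeated vertices with a single fixed choice per vertex (choose once per vertex, not per occurrence) yields a DAG satisfying (i)--(iii).

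Every source is in \(\widetilde B\cup S\). A base leaf reached only through stopped \(S\)-vertices cannot appear, since we stop at \(S\). Hence every base vertex of \(G\) lies in \(\widetilde B\), so \(D^G_{\mathrm{exp}}\subseteq V_0(G)\subseteq\widetilde B\).

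The delicate points are the choices at vertices in \(S\cap B\), the requirement that in-neighbourhoods be exactly an admissible tuple, and ensuring consistency of choices so that the result is a DAG. Fixing one choice per vertex handles all three. The DNF \eqref{eq:envelope-dnf} is then a restatement of \eqref{eq:envelope-event} with indicators.

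For (b):
\begin{itemize}
\item The lower bound takes a minimizing \(G\) and applies independence: \(\Pr[D^G_{\mathrm{exp}}\subseteq\widetilde B]=(1-\varepsilon)^{e_G}\).
\item The middle bound is the union bound over the finite family, whose finiteness is \eqref{eq:candidate-count-bound}.
\item The last bound uses \(e_G\ge e_*\).
\end{itemize}

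For (c), disjoint exposed sets make the events \(\{D^G_{\mathrm{exp}}\subseteq\widetilde B\}\) independent, being functions of disjoint sets of independent indicators. The complement of the union therefore has probability equal to the product of the complements.

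For (d), when all \(r_v=1\), any \(G\in\mathcal G_S(q,B)\) is obtained from the canonical expansion truncated at some \(S\)-vertices. Stopping at a cached vertex removes only paths through that vertex, which were already not cache-free. Hence \(D^G_{\mathrm{exp}}(q,S)=D_{\mathrm{exp}}(q,S)\) for every such \(G\), including \(G_q\) itself. We argue this by comparing paths with Definition~\ref{def:query-local-residual}, using heredity \eqref{eq:hereditary-subdag}. It follows that \(e_*=|D_{\mathrm{exp}}(q,S)|\), and the lower bound matches \eqref{eq:exact-residual-probability}.
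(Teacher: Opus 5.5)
Your argument for (a)--(c) is essentially the paper's: you trace an unavailable vertex down to an erased, uncached source; you extract a witness by backward expansion along available parent tuples with one choice per vertex; and you finish with the single-event, union, and independence bounds. For (d) you prove the slightly stronger fact that every \(S\)-relative candidate has exposed set exactly \(D_{\mathrm{exp}}(q,S)\), where the paper only shows that the full truncation minimizes it; this is correct.

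There is one slip to fix in the extraction step. The inclusion \(V_0(G)\subseteq\widetilde B\) is false when the expansion reaches an erased cached leaf \(b\in(S\cap B)\setminus\widetilde B\), which is stopped as an \(S\)-vertex. Such a leaf lies on every path from itself and so is never exposed, hence the correct chain is \(D^{G}_{\mathrm{exp}}(q,S)\subseteq V_0(G)\setminus S\subseteq\widetilde B\).
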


\begin{proof}
(a)\ \emph{Sufficiency within a fixed witness.}
Fix \(G\in\mathcal G_S(q,B)\) and suppose
\(D_{\mathrm{exp}}^{G}(q,S)\subseteq\widetilde B\), yet
\(q\notin\Cn(\widetilde B\cup S)\).  Then \(q\notin S\), so \(q\) is not a
source of \(G\), and by \eqref{eq:and-or-local-law} applied with
\(P=N_G^{-}(q)\) some \(u_1\in N_G^{-}(q)\) satisfies
\(u_1\notin\Cn(\widetilde B\cup S)\).  Iterating, any unavailable non-source
vertex has an unavailable in-neighbor in \(G\); unavailable vertices cannot
belong to \(S\), and along edges the rank \eqref{eq:well-founded-rank}
strictly decreases, so the trace terminates at an unavailable source
\(b\).  A source in \(S\) would be available, and an unavailable base source
satisfies \(b\notin\widetilde B\cup S\) by \eqref{eq:base-irreducibility}.
Hence \(b\in B\setminus\widetilde B\), and the traced path
\(b\leadsto q\) meets no vertex of \(U_G(q,S)\), so
\(b\in D_{\mathrm{exp}}^{G}(q,S)\), contradicting
\(D_{\mathrm{exp}}^{G}(q,S)\subseteq\widetilde B\).

\emph{Extraction of a surviving witness.}
Conversely, suppose \(q\in\Cn(\widetilde B\cup S)\) and set
\(A:=\Cn(\widetilde B\cup S)\cap\mathcal V_{\mathrm{rel}}(B)\).  Expand
backward from \(q\): at a visited vertex \(v\in A\), stop if
\(v\in B\cup S\); otherwise \(v\notin\widetilde B\cup S\), and
\eqref{eq:and-or-local-law} with \(\Gamma=\widetilde B\cup S\) yields
\(P_v\in\operatorname{Par}(v)\) with \(P_v\subseteq\Cn(\widetilde B\cup S)\),
hence \(P_v\subseteq A\); continue from each \(u\in P_v\).  Edges strictly
decrease the rank, so the expansion is finite and acyclic, and every visited
vertex reaches \(q\) by construction.  The resulting \(G\) belongs to
\(\mathcal G_S(q,B)\).  Let \(b\in D_{\mathrm{exp}}^{G}(q,S)\) with a
cache-free path \(P:b\leadsto q\).  Walking backward along \(P\) from \(q\),
every internal vertex of \(P\) is a non-source of \(G\), hence was expanded
and lies in \(A\), and its predecessor on \(P\) lies in its chosen parent
tuple, hence also in \(A\).  Therefore \(b\in A\cap B\), so
\eqref{eq:base-irreducibility} gives \(b\in\widetilde B\cup S\); but
\(b\notin S\) since \(P\) is cache-free.  Thus
\(b\in\widetilde B\), proving
\(D_{\mathrm{exp}}^{G}(q,S)\subseteq\widetilde B\).  This completes
\eqref{eq:envelope-event}, and \eqref{eq:envelope-dnf} is a restatement.

(b)\ By \eqref{eq:envelope-event}, recovery is the union of the events
\(\{D_{\mathrm{exp}}^{G}(q,S)\subseteq\widetilde B\}\).  Each exposed set is
a deterministic subset of the distinct base premises, so each event has
probability \((1-\varepsilon)^{e_G(q,S)}\) exactly as in
Theorem~\ref{thm:exact-residual-law}.  The lower bound is the largest single
event probability, and the upper bounds are the union bound and
\(\sum_G(1-\varepsilon)^{e_G(q,S)}\le|\mathcal G_S(q,B)|\,
(1-\varepsilon)^{e_*(q,S)}\).

(c)\ Pairwise disjoint exposed sets make the survival events independent,
and the union probability evaluates to \eqref{eq:envelope-independent}.

(d)\ If \(r_v=1\) for all \(v\), write \(G_q\) for the unique \(B\)-grounded
witness.  Every \(S\)-relative candidate is obtained from \(G_q\) by deleting
the in-edges of a subset of the \(S\)-vertices and pruning vertices that no
longer reach \(q\): expansion is forced except that one may stop at
\(S\)-vertices.  Deleting fewer in-edges can only add directed paths, so the
exposed set is minimized by the full truncation \(T_q\), which deletes the
in-edges of every vertex in \(S\cap V(G_q)\).  The cache-free paths of
\(G_q\) are exactly the paths of \(T_q\) from base leaves to \(q\), so
\(e_*(q,S)=e_{T_q}(q,S)=|D_{\mathrm{exp}}(q,S)|\) with
\(D_{\mathrm{exp}}(q,S)\) as in
Definition~\ref{def:query-local-residual}, and the lower bound of
\eqref{eq:envelope-probability} equals
\eqref{eq:exact-residual-probability}.
\end{proof}

The exact probability is the probability of the DNF
\eqref{eq:envelope-dnf}.  It is computable by inclusion--exclusion over
\(\mathcal G_S(q,B)\), but in general at exponential cost in the family size.
The envelope \eqref{eq:envelope-probability} is the useful two-sided
compression, and it suffices for reliability design on the achievability
side.

Two consequences of the envelope law are used later in the paper; their
statements and proofs are collected in Appendix~\ref{app:envelope}.  First,
single-witness designs remain valid: if one candidate witness has
exposed-leaf count at most \(N^*(\varepsilon,\delta)\), then the cache is
\((\varepsilon,\delta)\)-reliable under the full AND--OR semantics, so every
achievable construction of Sections~\ref{sec:single-query-kernel}
and~\ref{sec:semantic-scenario} certifies reliability verbatim, at unchanged
cost (Corollary~\ref{cor:envelope-certified-threshold}).  Second, converses
survive with an additive slack: every storage lower bound of the canonical
regime remains valid after \(N^*\) is replaced by an inflated threshold
\(N^+_{\mathcal G}\le N^*+\Delta_{\mathcal G}+1\), where
\(\Delta_{\mathcal G}\) is logarithmic in the per-vertex derivation
multiplicities (Corollary~\ref{cor:envelope-relaxation-cost}).

\section{Semantic-Aware Transparent Caching for a Shared Workload}
\label{sec:semantic-scenario}

We now lift the single-query kernel of
Section~\ref{sec:single-query-kernel} to a common transparent cache that
serves a workload
\[
  W=\{q_1,\ldots,q_L\}\subseteq\mathcal Q
\]
over the same premise base \(B\).  For each \(\ell\), write
\[
  G_\ell:=G(q_\ell,B),
  \qquad
  D_\ell:=V_0(G_\ell),
  \qquad
  \kappa_\ell:=|D_\ell|.
\]
The two reliability conventions are recalled from
Subsection~\ref{subsec:erasure-cache-model}: under the maximal-error
criterion,
\[
  \Pr_\varepsilon\!\left[
    q_\ell\in\Cn(\widetilde B\cup S_W)
  \right]
  \ge 1-\delta,
  \qquad \ell=1,\ldots,L,
\]
whereas under the joint criterion,
\[
  \Pr_\varepsilon\!\left[
    \bigcap_{\ell=1}^{L}
    \{q_\ell\in\Cn(\widetilde B\cup S_W)\}
  \right]
  \ge 1-\delta.
\]

\subsection{Workload Projection and Exact Reliability Laws}
\label{subsec:workload-exact-laws}

Let
\[
  V_W
  :=
  \bigcup_{\ell=1}^{L}V(G_\ell)
\]
be the workload-relevant vertex set.

\begin{definition}[Workload projection and exposed-leaf union]
\label{def:workload-projection}
For a finite common transparent cache \(S_W\subseteq\Cn(B)\), define its
projection onto the workload region by
\begin{equation}
  U(W,S_W):=S_W\cap V_W.
\label{eq:workload-projection}
\end{equation}
For each query \(q_\ell\), retain the query-local exposed-leaf set
\[
  D_{\mathrm{exp}}(q_\ell,S_W)
\]
from Definition~\ref{def:query-local-residual}.  The workload exposed-leaf
union is
\begin{equation}
  D_{\cup}(W,S_W)
  :=
  \bigcup_{\ell=1}^{L}D_{\mathrm{exp}}(q_\ell,S_W).
\label{eq:workload-exposed-union}
\end{equation}
\end{definition}

\begin{theorem}[Workload-level rigidity and exact reliability laws]
\label{thm:workload-rigidity}
Under Assumptions~\ref{assump:det-local} and
\ref{assump:hereditary-dag}, for every finite
\(S_W\subseteq\Cn(B)\) and every \(\widetilde B\subseteq B\),
\begin{align}
  q_\ell\in\Cn(\widetilde B\cup S_W)
  &\Longleftrightarrow
  q_\ell\in\Cn\bigl(\widetilde B\cup U(W,S_W)\bigr)
  \nonumber\\
  &\Longleftrightarrow
  D_{\mathrm{exp}}(q_\ell,S_W)\subseteq\widetilde B,
  \quad \ell=1,\ldots,L,
\label{eq:workload-querywise-rigidity}
\end{align}
and, intersecting \eqref{eq:workload-querywise-rigidity} over
\(\ell=1,\ldots,L\) and using \eqref{eq:workload-exposed-union},
\begin{equation}
  \bigcap_{\ell=1}^{L}
  \left\{
    q_\ell\in\Cn(\widetilde B\cup S_W)
  \right\}
  \Longleftrightarrow
  D_{\cup}(W,S_W)\subseteq\widetilde B.
\label{eq:workload-joint-rigidity}
\end{equation}
Consequently,
\begin{equation}
  \Pr_\varepsilon\!\left[
    q_\ell\in\Cn(\widetilde B\cup S_W)
  \right]
  =
  (1-\varepsilon)^{|D_{\mathrm{exp}}(q_\ell,S_W)|},
  \; \ell=1,\ldots,L,
\label{eq:workload-success}
\end{equation}
and
\begin{equation}
  \Pr_\varepsilon\!\left[
    \bigcap_{\ell=1}^{L}
    \left\{
      q_\ell\in\Cn(\widetilde B\cup S_W)
    \right\}
  \right]
  =
  (1-\varepsilon)^{|D_{\cup}(W,S_W)|}.
\label{eq:workload-joint-success}
\end{equation}
\end{theorem}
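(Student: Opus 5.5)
The plan is to reduce everything to the single-query kernel, Theorem~\ref{thm:exact-residual-law}, applied query by query, and then handle the joint event by elementary set algebra and independence.

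\emph{Step 1: query-wise rigidity.} Fix \(\ell\) and apply Theorem~\ref{thm:exact-residual-law} to \(q_\ell\) with cache \(S_W\). This gives
\[
q_\ell\in\Cn(\widetilde B\cup S_W)\Leftrightarrow q_\ell\in\Cn\bigl(\widetilde B\cup U(q_\ell,S_W)\bigr)\Leftrightarrow D_{\mathrm{exp}}(q_\ell,S_W)\subseteq\widetilde B .
\]
To insert the middle expression with the workload projection \(U(W,S_W)\), apply the same theorem a second time, now to the finite cache \(U(W,S_W)\subseteq\Cn(B)\). Since \(V(G_\ell)\subseteq V_W\), we have \(U(q_\ell,U(W,S_W))=U(W,S_W)\cap V(G_\ell)=S_W\cap V(G_\ell)=U(q_\ell,S_W)\). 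Therefore both caches project to the same query-local set and have the same exposed-leaf set, because Definition~\ref{def:query-local-residual} depends on \(S\) only through \(U(q,S)\). This yields \eqref{eq:workload-querywise-rigidity}.

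\emph{Step 2: joint rigidity.} Intersect the equivalences over \(\ell\). The condition that \(D_{\mathrm{exp}}(q_\ell,S_W)\subseteq\widetilde B\) for every \(\ell\) is equivalent to the union \(D_\cup(W,S_W)\) being contained in \(\widetilde B\), by \eqref{eq:workload-exposed-union}. This gives \eqref{eq:workload-joint-rigidity}.

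\emph{Step 3: probabilities.} Equation \eqref{eq:workload-success} is \eqref{eq:exact-residual-probability} applied to each \(q_\ell\). For \eqref{eq:workload-joint-success}, \(D_\cup(W,S_W)\) is a deterministic set of distinct base premises, since it is fixed by \(S_W\) and the DAGs and not by the erasure realization. By independence of the survival indicators \(Z_i\), the event that all its members survive has probability \((1-\varepsilon)^{|D_\cup|}\).

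The only point needing care is \textbf{Step 1}. A possible worry is that the kernel assumes \(q\in\mathcal Q\setminus B\), while some \(q_\ell\) might lie in \(B\). In that case \(G_\ell\) is the singleton on \(q_\ell\), and the claim follows directly from \eqref{eq:base-irreducibility}. Indeed, \(q_\ell\) is exposed precisely when \(q_\ell\notin S_W\), which matches the length-zero path case in Lemma~\ref{lem:failure-path}.

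A second point is that the projection onto \(V_W\) rather than onto \(V(G_\ell)\) is harmless. This holds because vertices of \(S_W\cap(V_W\setminus V(G_\ell))\) never appear on paths in \(G_\ell\), so they do not affect \(D_{\mathrm{exp}}(q_\ell,\cdot)\). Everything else is bookkeeping.
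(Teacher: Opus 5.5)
Your proposal is correct and follows essentially the same route as the paper. You reduce query by query to Theorem~\ref{thm:exact-residual-law}, handling the singleton case \(q_\ell\in B\) by base irreducibility, insert the workload projection via \(U(q_\ell,S_W)=U(W,S_W)\cap V(G_\ell)\), then intersect over \(\ell\) and invoke independence of premise survivals. Your explicit second application of the kernel to the cache \(U(W,S_W)\) spells out a step that the paper compresses into one sentence.
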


\begin{proof}
For \(q_\ell\in B\) the identity follows directly from base
irreducibility and the length-zero path in the singleton DAG; for
\(q_\ell\notin B\), apply Theorem~\ref{thm:exact-residual-law} to each
query \(q_\ell\) separately.  Since
\[
  U(q_\ell,S_W)=S_W\cap V(G_\ell)=U(W,S_W)\cap V(G_\ell),
\]
the query-local recovery event depends only on the workload projection
\(U(W,S_W)\), which proves \eqref{eq:workload-querywise-rigidity}.  Taking the
intersection over \(\ell\) yields
\eqref{eq:workload-joint-rigidity}, because
\[
  \bigcap_{\ell=1}^{L}
  \left\{
    D_{\mathrm{exp}}(q_\ell,S_W)\subseteq\widetilde B
  \right\}
  \quad\Longleftrightarrow\quad
  \bigcup_{\ell=1}^{L}
  D_{\mathrm{exp}}(q_\ell,S_W)\subseteq\widetilde B.
\]
The probability formulas \eqref{eq:workload-success} and
\eqref{eq:workload-joint-success} then follow from independence of premise
survivals.
\end{proof}

\begin{corollary}[Exact workload reliability thresholds]
\label{cor:workload-thresholds}
Let \(N^*(\varepsilon,\delta)\) be defined by \eqref{eq:model-N-star}.

A finite common transparent cache \(S_W\) satisfies the maximal-error criterion
if and only if
\begin{equation}
  |D_{\mathrm{exp}}(q_\ell,S_W)|
  \le
  N^*(\varepsilon,\delta),
  \qquad \ell=1,\ldots,L.
\label{eq:workload-reliability-constraint}
\end{equation}
It satisfies the joint criterion if and only if
\begin{equation}
  |D_{\cup}(W,S_W)|
  \le
  N^*(\varepsilon,\delta).
\label{eq:workload-joint-reliability-constraint}
\end{equation}
In particular, the joint criterion implies the maximal-error criterion.
\end{corollary}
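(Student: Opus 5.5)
The plan is to read both thresholds directly off the exact probability laws of Theorem~\ref{thm:workload-rigidity}, repeating the argument of Corollary~\ref{cor:exact-reliability-threshold}; no new structural work is needed.

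\emph{Maximal-error criterion.} Fix \(\ell\in\{1,\ldots,L\}\). By \eqref{eq:workload-success}, the marginal success probability of \(q_\ell\) is exactly \((1-\varepsilon)^{|D_{\mathrm{exp}}(q_\ell,S_W)|}\). So the \(\ell\)-th constraint in \eqref{eq:model-max-criterion} reads
\[
  (1-\varepsilon)^{n_\ell}\ge 1-\delta,
  \qquad n_\ell:=|D_{\mathrm{exp}}(q_\ell,S_W)|\in\mathbb Z_{\ge0}.
\]
Since \(\varepsilon\in(0,1)\), the map \(n\mapsto(1-\varepsilon)^n\) is strictly decreasing. Hence the set of nonnegative integers \(n\) with \((1-\varepsilon)^n\ge 1-\delta\) is an initial segment \(\{0,1,\ldots,N^*(\varepsilon,\delta)\}\). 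It is nonempty because it contains \(n=0\), and it is finite because \(1-\delta>0\). By the definition \eqref{eq:model-N-star}, the \(\ell\)-th constraint is therefore equivalent to \(n_\ell\le N^*(\varepsilon,\delta)\). Taking the conjunction over \(\ell\) gives \eqref{eq:workload-reliability-constraint}.

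\emph{Joint criterion.} The same monotonicity argument applies to the single exponent \(|D_{\cup}(W,S_W)|\) in \eqref{eq:workload-joint-success}. It shows that the joint constraint \eqref{eq:model-joint-criterion} is equivalent to \eqref{eq:workload-joint-reliability-constraint}.

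\emph{Joint implies maximal-error.} By \eqref{eq:workload-exposed-union}, \(D_{\mathrm{exp}}(q_\ell,S_W)\subseteq D_{\cup}(W,S_W)\) for each \(\ell\). So \(|D_{\mathrm{exp}}(q_\ell,S_W)|\le|D_{\cup}(W,S_W)|\le N^*(\varepsilon,\delta)\) whenever the joint threshold holds, and the maximal-error threshold follows. This agrees with the event-level observation in Subsection~\ref{subsec:erasure-cache-model}, since an intersection of events is contained in each of them.

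The only point needing care is that \(N^*\) really is the endpoint of the admissible initial segment. This holds because \(\delta\in(0,1)\) and \(\varepsilon\in(0,1)\), so the floor expression in \eqref{eq:model-N-star} is well defined and nonnegative. Queries with \(q_\ell\in B\) need no separate treatment, because Theorem~\ref{thm:workload-rigidity} already covers them.
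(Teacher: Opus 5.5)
Your proposal is correct and follows the paper's own proof: read both thresholds off the exact laws \eqref{eq:workload-success} and \eqref{eq:workload-joint-success} using the definition of \(N^*(\varepsilon,\delta)\), then get the final implication from \(D_{\mathrm{exp}}(q_\ell,S_W)\subseteq D_{\cup}(W,S_W)\). Your monotonicity argument, showing that the admissible exponents form the initial segment \(\{0,\ldots,N^*\}\), spells out the step that the paper leaves implicit.
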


\begin{proof}
Combine \eqref{eq:workload-success} and \eqref{eq:workload-joint-success} with
the definition of \(N^*(\varepsilon,\delta)\).  The final implication is
immediate from
\[
  D_{\mathrm{exp}}(q_\ell,S_W)\subseteq D_{\cup}(W,S_W).
\]
\end{proof}

\subsection{Exact Workload-Level Storage Formulations}
\label{subsec:workload-exact-optimization}

We now restrict to the answer-excluding common-cache class
\[
  S_W\cap W=\varnothing.
\]

\begin{theorem}[Exact workload storage formulations]
\label{thm:exact-workload-opt}
Under Assumptions~\ref{assump:det-local} and
\ref{assump:hereditary-dag},
\begin{align}
  \sigma_{\mathrm{sem}}^{*,\max}
  (&W,\varepsilon,\delta)
  =
  \min\Bigl\{
    \ell_c(S):
    S\subseteq V_W\setminus W,
    \nonumber\\[-1mm]
    &|D_{\mathrm{exp}}(q_\ell,S)|
      \le N^*(\varepsilon,\delta),
      \ \ell=1,\ldots,L
  \Bigr\},
\label{eq:exact-workload-opt-max}
\end{align}
and
\begin{align}
  \sigma_{\mathrm{sem}}^{*,\mathrm{joint}}
  (W,\varepsilon,\delta)
  =
  \min\Bigl\{
    &\ell_c(S):
    S\subseteq V_W\setminus W,
    \nonumber\\[-1mm]
    &|D_{\cup}(W,S)|
      \le N^*(\varepsilon,\delta)
  \Bigr\}.
\label{eq:exact-workload-opt-joint}
\end{align}
Thus, under either workload reliability convention, an optimal semantic-aware
cache may be chosen entirely inside the union of the canonical query DAGs.
\end{theorem}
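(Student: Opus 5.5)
The plan mirrors the proof of Theorem~\ref{thm:exact-single-query-opt}, with the single-query projection replaced by the workload projection \(U(W,S)\) of Definition~\ref{def:workload-projection}. I will treat both criteria in parallel, since they differ only in which threshold constraint is applied.

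First, I replace the reliability constraint by its combinatorial form using Corollary~\ref{cor:workload-thresholds}. For any finite answer-excluding \(S\subseteq\Cn(B)\setminus W\), maximal-error reliability holds if and only if \(|D_{\mathrm{exp}}(q_\ell,S)|\le N^*\) for all \(\ell\). Joint reliability holds if and only if \(|D_{\cup}(W,S)|\le N^*\). So both optima are minima of \(\ell_c(S)\) over finite \(S\subseteq\Cn(B)\setminus W\) satisfying the respective combinatorial constraint.

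Second comes the reduction step. Given any feasible \(S\), set \(S':=U(W,S)=S\cap V_W\).
\begin{itemize}
\item For every \(\ell\), \(V(G_\ell)\subseteq V_W\), so \(U(q_\ell,S')=S'\cap V(G_\ell)=S\cap V(G_\ell)=U(q_\ell,S)\).
\item By Definition~\ref{def:query-local-residual}, \(D_{\mathrm{exp}}(q_\ell,\cdot)\) depends on the cache only through \(U(q_\ell,\cdot)\). Hence \(D_{\mathrm{exp}}(q_\ell,S')=D_{\mathrm{exp}}(q_\ell,S)\) for every \(\ell\), and therefore \(D_{\cup}(W,S')=D_{\cup}(W,S)\).
\item Both constraints are thus preserved. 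Equivalently, \eqref{eq:workload-querywise-rigidity} already says the recovery events depend only on the projection.
\item Since \(S'\subseteq S\), we have \(S'\cap W=\varnothing\), so \(S'\subseteq V_W\setminus W\). Positivity of \(c\) gives \(\ell_c(S')\le\ell_c(S)\).
\end{itemize}
This shows the left-hand optima are at least the right-hand minima. Conversely, every \(S\subseteq V_W\setminus W\) is a finite subset of \(\Cn(B)\), because \(V_W\subseteq\mathcal V_{\mathrm{rel}}(B)\subseteq\Cn(B)\) and \(V_W\) is finite. Such an \(S\) is answer-excluding, so it is admissible, and the two minima coincide.

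Third, I check that the minima are attained. The right-hand feasible families are finite, being subsets of the power set of the finite set \(V_W\). They are also nonempty: \(S=\bigcup_\ell D_\ell\setminus W\) blocks every leaf of every non-base query, giving \(D_{\mathrm{exp}}(q_\ell,S)=\varnothing\) for each such \(q_\ell\).

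The main obstacle is the degenerate case of base queries \(q_\ell\in W\cap B\). There \(G_\ell\) is a singleton and \(q_\ell\) cannot be cached under answer exclusion, so \(|D_{\mathrm{exp}}(q_\ell,S)|=1\) regardless of \(S\). This case does not break the equality, because the same constraint appears identically on both sides: if \(N^*=0\), both problems are infeasible, with value \(+\infty\) under the convention \(\min\varnothing=+\infty\). I would note this convention explicitly, or else assume \(W\cap B=\varnothing\) as in the single-query section. The closing sentence of the theorem is then just a restatement of the reduction step.
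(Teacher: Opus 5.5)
Your proposal is correct and follows the paper's proof: you project a feasible cache onto \(V_W\) (noting, equivalently to the workload rigidity theorem, that \(D_{\mathrm{exp}}(q_\ell,\cdot)\) depends only on \(U(q_\ell,\cdot)\)), use positivity of \(c\) to see that the cost does not increase, and convert the probabilistic constraints through Corollary~\ref{cor:workload-thresholds}. Your explicit converse inclusion and attainment check are small additions the paper leaves implicit; one minor slip is that the witness \(\bigcup_\ell D_\ell\setminus W\) need not block a base query \(b\in W\cap B\) that is also a leaf of another query's DAG, but this is harmless, since the two-way reduction gives equality of the optima (with \(\min\varnothing=+\infty\)) regardless of nonemptiness.
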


\begin{proof}
Let \(S_W\subseteq\Cn(B)\setminus W\) be any feasible finite common transparent
cache.  By Theorem~\ref{thm:workload-rigidity}, replacing \(S_W\) by its
projection \(U(W,S_W)\subseteq V_W\setminus W\) does not change any query
recovery event, hence does not affect either workload reliability criterion.
Moreover,
\[
  U(W,S_W)\subseteq S_W
  \qquad\text{and}\qquad
  \ell_c\bigl(U(W,S_W)\bigr)\le \ell_c(S_W),
\]
because all object costs are strictly positive.  Therefore every feasible cache
can be replaced by a no-more-expensive feasible cache supported on
\(V_W\setminus W\).  The probabilistic constraints are then converted exactly
into the exposed-leaf constraints by Corollary~\ref{cor:workload-thresholds}.
\end{proof}

\begin{corollary}[Exact leaf-only workload baselines]
\label{cor:workload-leaf-baselines}
Let
\[
  D_{\cup}:=\bigcup_{\ell=1}^{L}D_\ell
\]
be the union of all workload leaf dependencies.

For the maximal-error criterion,
\begin{align}
  \sigma_{\mathrm{leaf}}^{*,\max}
  &(W,\varepsilon,\delta)
  =
  \min\Bigl\{
    \ell_c(R):
    R\subseteq D_{\cup},
    \nonumber\\[-1mm]
    &|R\cap D_\ell|
      \ge
      \bigl(\kappa_\ell-N^*(\varepsilon,\delta)\bigr)^+,
      \ \ell=1,\ldots,L
  \Bigr\}.
\label{eq:workload-leaf-opt-max}
\end{align}
Hence the maximal-error leaf-only problem is exactly a weighted set multicover
problem on the family \(D_1,\ldots,D_L\).

For the joint criterion, arrange the costs of the leaves in \(D_{\cup}\) in
nondecreasing order:
\[
  c_{(1)}^{\cup}\le c_{(2)}^{\cup}\le\cdots\le c_{(|D_{\cup}|)}^{\cup}.
\]
Then
\begin{equation}
  \sigma_{\mathrm{leaf}}^{*,\mathrm{joint}}
  (W,\varepsilon,\delta)
  =
  \sum_{i=1}^{(|D_{\cup}|-N^*)^+}
  c_{(i)}^{\cup},
\label{eq:workload-leaf-opt-joint}
\end{equation}
where \(N^*=N^*(\varepsilon,\delta)\).  If all base leaves have the same cost
\(c_B\), then
\begin{equation}
  \sigma_{\mathrm{leaf}}^{*,\mathrm{joint}}
  (W,\varepsilon,\delta)
  =
  \bigl(|D_{\cup}|-N^*(\varepsilon,\delta)\bigr)^+c_B.
\label{eq:workload-leaf-opt-joint-hom}
\end{equation}

If, in addition, \(D_1,\ldots,D_L\) are pairwise disjoint and all leaf costs
equal \(c_B\), then
\begin{equation}
  \sigma_{\mathrm{leaf}}^{*,\max}
  (W,\varepsilon,\delta)
  =
  \sum_{\ell=1}^{L}
  \bigl(\kappa_\ell-N^*(\varepsilon,\delta)\bigr)^+c_B.
\label{eq:workload-leaf-opt-max-disjoint}
\end{equation}
\end{corollary}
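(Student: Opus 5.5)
The plan is to specialize Theorem~\ref{thm:exact-workload-opt} to caches $R\subseteq D_{\cup}$ and compute the exposed sets exactly, as in the proof of Corollary~\ref{cor:leaf-only-opt}.

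\textbf{Step 1: exposed sets of a leaf-only cache.} For $R\subseteq B$, the projection onto $G_\ell$ is $R\cap V(G_\ell)=R\cap D_\ell$, since base vertices of $G_\ell$ are exactly its leaves. Every base leaf is a source of $G_\ell$, so a directed path starting at $b\in D_\ell$ meets no other leaf. Hence the path $b\leadsto q_\ell$ is cache-free if and only if $b\notin R$, and
\[
  D_{\mathrm{exp}}(q_\ell,R)=D_\ell\setminus R .
\]
This needs $q_\ell\notin B$; a base query would be blocked only by caching itself, which answer exclusion forbids, so I would treat that case separately or exclude it by convention. Caching base vertices outside $D_{\cup}$ changes no recovery event and has positive cost, so by Theorem~\ref{thm:workload-rigidity} we may assume $R\subseteq D_{\cup}$.

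\textbf{Step 2: maximal-error criterion.} By Corollary~\ref{cor:workload-thresholds}, feasibility is $|D_\ell\setminus R|\le N^*$ for every $\ell$, that is, $|R\cap D_\ell|\ge(\kappa_\ell-N^*)^+$. Minimizing $\ell_c(R)$ under these constraints gives \eqref{eq:workload-leaf-opt-max}. This is literally a weighted multicover problem: each element $b$ is chosen at cost $c(b)$, and each set $D_\ell$ must be covered with demand $(\kappa_\ell-N^*)^+$.

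\textbf{Step 3: joint criterion.} The exposed union is
\[
  D_{\cup}(W,R)=\bigcup_\ell(D_\ell\setminus R)=D_{\cup}\setminus R,
\]
so feasibility is $|R|\ge(|D_{\cup}|-N^*)^+$. The cheapest feasible choice takes that many smallest-cost leaves, which gives \eqref{eq:workload-leaf-opt-joint}; the homogeneous case then gives \eqref{eq:workload-leaf-opt-joint-hom}.

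\textbf{Step 4: disjoint case.} If the $D_\ell$ are pairwise disjoint and costs are homogeneous, the constraints of Step~2 act on disjoint coordinates and the problem decouples. Each block then needs exactly $(\kappa_\ell-N^*)^+$ leaves, and summing gives \eqref{eq:workload-leaf-opt-max-disjoint}.

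The only real obstacle is Step~1, the source-leaf argument, which was already used in Corollary~\ref{cor:leaf-only-opt}; the remaining steps are bookkeeping.
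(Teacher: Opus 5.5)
Your proposal is correct and follows the paper's own proof essentially step for step. The source-leaf argument gives $D_{\mathrm{exp}}(q_\ell,R)=D_\ell\setminus R$; Corollary~\ref{cor:workload-thresholds} then turns the maximal-error criterion into the multicover constraints and the joint criterion into the single threshold $|D_{\cup}\setminus R|\le N^*$; and disjointness decouples the multicover.

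Your side remarks on discarding base vertices outside $D_{\cup}$ and on base queries under answer exclusion are points the paper leaves implicit. One small correction there: for $q_\ell\in B$ the identity $D_{\mathrm{exp}}(q_\ell,R)=D_\ell\setminus R$ still holds via the length-zero path. What is really at stake is whether a cache $R\ni q_\ell$ is admissible, not the exposed-set computation.
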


\begin{proof}
Let \(R\subseteq B\) be a leaf-only common cache.  Since every base leaf is a
source in every query DAG, a directed path starting at one leaf cannot pass
through a distinct leaf.  Therefore, for each \(\ell\),
\[
  D_{\mathrm{exp}}(q_\ell,R)=D_\ell\setminus R.
\]
By Corollary~\ref{cor:workload-thresholds}, the maximal-error criterion is
equivalent to
\[
  |D_\ell\setminus R|
  \le
  N^*(\varepsilon,\delta),
  \qquad \ell=1,\ldots,L,
\]
or equivalently
\[
  |R\cap D_\ell|
  \ge
  \bigl(\kappa_\ell-N^*(\varepsilon,\delta)\bigr)^+,
  \qquad \ell=1,\ldots,L,
\]
which proves \eqref{eq:workload-leaf-opt-max}.

For the joint criterion,
\[
  D_{\cup}(W,R)
  =
  \bigcup_{\ell=1}^{L}(D_\ell\setminus R)
  =
  D_{\cup}\setminus R.
\]
Hence joint reliability is equivalent to
\[
  |D_{\cup}\setminus R|
  \le
  N^*(\varepsilon,\delta).
\]
Thus one must cache at least
\[
  \bigl(|D_{\cup}|-N^*(\varepsilon,\delta)\bigr)^+
\]
leaves from \(D_{\cup}\), and the minimum-cost choice is obtained by caching
the least expensive required leaves.  This proves
\eqref{eq:workload-leaf-opt-joint} and
\eqref{eq:workload-leaf-opt-joint-hom}.

Finally, if the sets \(D_\ell\) are pairwise disjoint, then the multicover
problem \eqref{eq:workload-leaf-opt-max} separates over \(\ell\), which yields
\eqref{eq:workload-leaf-opt-max-disjoint}.
\end{proof}

\subsection{Shared Semantic Modules and Residual Raw-Premise Completion}
\label{subsec:workload-shared-modules}

Fix a set of cached internal consequences
\[
  C\subseteq V_W\setminus(B\cup W).
\]
For each query \(q_\ell\), define the residual exposed-leaf set after caching
\(C\) by
\begin{equation}
  E_\ell(C):=D_{\mathrm{exp}}(q_\ell,C),
  \qquad \ell=1,\ldots,L,
\label{eq:def-workload-residual-family}
\end{equation}
and define the residual exposed-leaf union by
\begin{equation}
  E_{\cup}(C)
  :=
  \bigcup_{\ell=1}^{L}E_\ell(C).
\label{eq:def-workload-residual-union}
\end{equation}
Thus \(E_\ell(C)\) and \(E_{\cup}(C)\) are exact residual objects, not merely
unions of single-vertex postdominator sets.

\begin{proposition}[Exact raw-premise completion after fixing the module set]
\label{prop:workload-residual-completion}
For every \(R\subseteq B\),
\begin{align}
  D_{\mathrm{exp}}(q_\ell,C\cup R)
  &=
  E_\ell(C)\setminus R,
  \qquad \ell=1,\ldots,L,
\label{eq:workload-residual-completion-per-query}\\
  D_{\cup}(W,C\cup R)
  &=
  E_{\cup}(C)\setminus R.
\label{eq:workload-residual-completion-joint}
\end{align}
\end{proposition}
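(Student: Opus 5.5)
The argument follows the proof of Proposition~\ref{prop:module-root-achievability}, applied to each query $q_\ell$ in turn, and then passes to the union. The equivalence between exposed leaves and cache-free paths is taken from Definition~\ref{def:query-local-residual} verbatim. The only structural fact needed is that base leaves are sources in every canonical DAG $G_\ell$.

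Fix $\ell$ and $R\subseteq B$. First we prove $D_{\mathrm{exp}}(q_\ell,C\cup R)\subseteq E_\ell(C)\setminus R$.

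Monotonicity gives one half. Enlarging the cache from $C$ to $C\cup R$ only shrinks the projected set of cache-free paths, so $D_{\mathrm{exp}}(q_\ell,C\cup R)\subseteq D_{\mathrm{exp}}(q_\ell,C)=E_\ell(C)$.

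Moreover, every $b\in R\cap D_\ell$ is blocked. Any path in $\mathcal P_{q_\ell}(b)$ contains its initial vertex $b$, and $b\in U(q_\ell,C\cup R)$. Leaves in $R\setminus D_\ell$ are irrelevant, because $D_{\mathrm{exp}}(q_\ell,\cdot)\subseteq D_\ell$. Hence the exposed set avoids $R$, which gives the inclusion.

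For the reverse inclusion, take $b\in E_\ell(C)\setminus R$ and a path $P\in\mathcal P_{q_\ell}(b)$ with $V(P)\cap C=\varnothing$. We show $V(P)\cap R=\varnothing$. The starting vertex $b$ is not in $R$ by choice. Every other vertex of $P$ has an in-edge in $G_\ell$. Base vertices are sources of $G_\ell$ by Assumption~\ref{assump:hereditary-dag}, since they have no designated parent tuple. So no later vertex of $P$ lies in $B\supseteq R$. Thus $P$ remains cache-free for $C\cup R$, and $b\in D_{\mathrm{exp}}(q_\ell,C\cup R)$. This proves \eqref{eq:workload-residual-completion-per-query}.

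Identity \eqref{eq:workload-residual-completion-joint} then follows by taking the union over $\ell$ in \eqref{eq:workload-residual-completion-per-query}, using Definition~\ref{def:workload-projection} and the distributivity $\bigcup_\ell(E_\ell(C)\setminus R)=\bigl(\bigcup_\ell E_\ell(C)\bigr)\setminus R$.

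The only delicate step is the reverse inclusion. There we must make sure that caching a leaf $b'\neq b$ cannot intercept a path starting at $b$. This is exactly the source property of base leaves, and I would cite it explicitly, as in the proof of Corollary~\ref{cor:workload-leaf-baselines}.
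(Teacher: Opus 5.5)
Your proof is correct and takes essentially the same route as the paper. A cached leaf of $R$ is blocked because it is the initial vertex of every path, the source property of base leaves means $R$ cannot intercept any path that starts at a distinct leaf, and the joint identity then follows by taking unions over $\ell$; your two-inclusion write-up, with the monotonicity step made explicit, is a more detailed version of the paper's short argument.
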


\begin{proof}
Fix \(\ell\).  Caching a base leaf \(b\in R\) blocks \(b\) for query \(q_\ell\),
because every path from \(b\) to \(q_\ell\) contains its initial vertex \(b\).
Conversely, adding \(b\) to the cache does not change the exposure status of
any distinct base leaf \(b'\neq b\), because a directed path starting at the
source leaf \(b'\) cannot pass through another source leaf \(b\).  Therefore
adding \(R\) removes exactly the leaves of \(R\) from the previously exposed
set \(E_\ell(C)\), which proves
\eqref{eq:workload-residual-completion-per-query}.  Taking unions over
\(\ell\) yields \eqref{eq:workload-residual-completion-joint}.
\end{proof}

\begin{corollary}[Exact completion costs after fixing the module set]
\label{cor:workload-completion-costs}
For a fixed internal module set \(C\subseteq V_W\setminus(B\cup W)\), the
minimum additional raw-premise cost needed to satisfy the maximal-error
criterion is
\begin{align}
  &\tau_{\max}(C;\varepsilon,\delta)
  :=
  \min\Bigl\{
    \ell_c(R):
    R\subseteq E_{\cup}(C),
    \nonumber\\[-1mm]
    &|R\cap E_\ell(C)|
      \ge
      \bigl(|E_\ell(C)|-N^*(\varepsilon,\delta)\bigr)^+,
      \ \ell=1,\ldots,L
  \Bigr\}.
\label{eq:workload-completion-cost-max}
\end{align}
For the joint criterion, if the costs of the leaves in \(E_{\cup}(C)\) are
arranged as
\[
  c_{(1)}^{\cup}(C)
  \le
  c_{(2)}^{\cup}(C)
  \le
  \cdots
  \le
  c_{(|E_{\cup}(C)|)}^{\cup}(C),
\]
then the minimum additional raw-premise cost is
\begin{equation}
  \tau_{\mathrm{joint}}(C;\varepsilon,\delta)
  =
  \sum_{i=1}^{(|E_{\cup}(C)|-N^*)^+}
  c_{(i)}^{\cup}(C),
\label{eq:workload-completion-cost-joint}
\end{equation}
where \(N^*=N^*(\varepsilon,\delta)\).

Consequently,
\begin{align}
  \sigma_{\mathrm{sem}}^{*,\max}(W,\varepsilon,\delta)
  &\le
  \ell_c(C)+\tau_{\max}(C;\varepsilon,\delta),
\label{eq:workload-sem-achievable-max}\\
  \sigma_{\mathrm{sem}}^{*,\mathrm{joint}}(W,\varepsilon,\delta)
  &\le
  \ell_c(C)+\tau_{\mathrm{joint}}(C;\varepsilon,\delta).
\label{eq:workload-sem-achievable-joint}
\end{align}
If all base leaves have cost \(c_B\), then
\begin{equation}
  \tau_{\mathrm{joint}}(C;\varepsilon,\delta)
  =
  \bigl(|E_{\cup}(C)|-N^*(\varepsilon,\delta)\bigr)^+c_B.
\label{eq:workload-completion-cost-joint-hom}
\end{equation}
\end{corollary}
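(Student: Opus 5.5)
The plan is to reduce everything to Proposition~\ref{prop:workload-residual-completion}, which shows that once the internal module set \(C\) is fixed, adding a raw-premise set \(R\subseteq B\) acts on the exposed sets by plain deletion. We treat the two criteria separately and then read off the achievability bounds.

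\emph{Maximal-error criterion.} Let \(R\subseteq B\) be arbitrary. By \eqref{eq:workload-residual-completion-per-query}, Corollary~\ref{cor:workload-thresholds} says that \(C\cup R\) is maximal-error reliable if and only if
\[
  |E_\ell(C)\setminus R|\le N^*(\varepsilon,\delta)\quad\text{for every }\ell.
\]
This is equivalent to
\[
  |R\cap E_\ell(C)|\ge \bigl(|E_\ell(C)|-N^*(\varepsilon,\delta)\bigr)^+\quad\text{for every }\ell.
\]
Next we justify restricting to \(R\subseteq E_{\cup}(C)\). Replacing \(R\) by \(R\cap E_{\cup}(C)\) leaves every intersection \(R\cap E_\ell(C)\) unchanged. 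Because costs are strictly positive, this replacement does not increase \(\ell_c(R)\). So the minimum over all \(R\subseteq B\) equals the minimum over \(R\subseteq E_{\cup}(C)\), which is \eqref{eq:workload-completion-cost-max}. The feasible family is nonempty, since \(R=E_{\cup}(C)\) works, and the minimum is over a finite set, so it is attained.

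\emph{Joint criterion.} By \eqref{eq:workload-residual-completion-joint}, joint reliability of \(C\cup R\) is equivalent to
\[
  |E_{\cup}(C)\setminus R|\le N^*(\varepsilon,\delta).
\]
We may again intersect \(R\) with \(E_{\cup}(C)\) without loss. The constraint then reads \(|R|\ge(|E_{\cup}(C)|-N^*)^+\) with \(R\subseteq E_{\cup}(C)\). We mirror the argument of Corollary~\ref{cor:leaf-only-opt}. Any feasible \(R\) contains at least \(k:=(|E_{\cup}(C)|-N^*)^+\) elements, so its cost is at least the sum of the \(k\) smallest costs in \(E_{\cup}(C)\). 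Conversely, choosing exactly those \(k\) cheapest leaves is feasible. This gives \eqref{eq:workload-completion-cost-joint}. With homogeneous cost \(c_B\) the sum collapses to \(k\,c_B\), which is \eqref{eq:workload-completion-cost-joint-hom}.

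\emph{Achievability.} Take \(S=C\cup R^*\), where \(R^*\) is an optimal completion for the relevant criterion. We check that \(S\) lies in the class of Theorem~\ref{thm:exact-workload-opt}. First, \(C\subseteq V_W\setminus(B\cup W)\). Second, \(R^*\subseteq E_{\cup}(C)\subseteq\bigcup_\ell D_\ell\subseteq V_W\). Third, \(S\cap W=\varnothing\): \(C\) avoids \(W\) by assumption, and \(R^*\) avoids it as well. For \(R^*\), note that a base query in \(W\) would belong to \(W\) and not be internal; the only care needed is the case where some \(q_\ell\in B\). In that case \(D_\ell=\{q_\ell\}\), and caching it would violate answer exclusion. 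I expect this corner to be the main subtlety. I would handle it by noting that the paper's workload section implicitly takes non-base queries (as in Section~\ref{sec:single-query-kernel}), or alternatively by restricting \(R\) to \(E_{\cup}(C)\setminus W\) without changing the stated formulas under that convention. Since \(C\) and \(R^*\) are disjoint (internal versus base vertices), costs add: \(\ell_c(S)=\ell_c(C)+\ell_c(R^*)\). This yields \eqref{eq:workload-sem-achievable-max} and \eqref{eq:workload-sem-achievable-joint}.
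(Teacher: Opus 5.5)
Your proposal is correct and follows the paper's route: Proposition~\ref{prop:workload-residual-completion} turns raw-premise completion into plain deletion from \(E_\ell(C)\) and \(E_{\cup}(C)\), Corollary~\ref{cor:workload-thresholds} converts both criteria into the stated counting constraints, and storing \(C\) together with an optimal completion gives \eqref{eq:workload-sem-achievable-max}--\eqref{eq:workload-sem-achievable-joint}. Your restriction to \(R\subseteq E_{\cup}(C)\) and your cheapest-\(k\) argument only fill in details the paper leaves implicit. The answer-exclusion corner you flag is real, and the paper's one-line achievability argument skips it, even though the proof of Theorem~\ref{thm:workload-rigidity} explicitly allows \(q_\ell\in B\); if such a \(q_\ell\in W\cap B\) is selected into the completion, the bounds can fail outright (with \(N^*=0\) the answer-excluding problem is then infeasible), so the non-base-query condition must be an explicit added hypothesis rather than an implicit convention, and your alternative of restricting \(R\) to \(E_{\cup}(C)\setminus W\) does change the stated formulas.
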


\begin{proof}
By Proposition~\ref{prop:workload-residual-completion}, adding a raw-premise
cache \(R\) transforms the residual family exactly into
\[
  E_\ell(C)\setminus R
  \qquad\text{and}\qquad
  E_{\cup}(C)\setminus R.
\]
Applying Corollary~\ref{cor:workload-thresholds} to these exact residual sets
gives \eqref{eq:workload-completion-cost-max} and
\eqref{eq:workload-completion-cost-joint}.  The upper bounds
\eqref{eq:workload-sem-achievable-max} and
\eqref{eq:workload-sem-achievable-joint} follow by storing the fixed module set
\(C\) together with an optimal residual raw-premise completion.  Under
homogeneous leaf costs, \eqref{eq:workload-completion-cost-joint} reduces
immediately to \eqref{eq:workload-completion-cost-joint-hom}.
\end{proof}

\begin{definition}[Workload-globally protected leaf set]
\label{def:workload-global-protection}
For an internal vertex
\[
  r\in V_W\setminus(B\cup W),
\]
define its workload support set by
\begin{equation}
  \mathcal I_W(r)
  :=
  \{\ell\in\{1,\ldots,L\}: r\in V(G_\ell)\}.
\label{eq:def-workload-support}
\end{equation}
A set \(H\subseteq B\) is said to be \emph{\(W\)-globally protected by \(r\)}
if
\begin{align}
  H
  &\subseteq
  \bigcap_{\ell\in\mathcal I_W(r)}
  \operatorname{Pdom}_{q_\ell}(r),
\label{eq:global-protected-local}\\
  b\in H,\ b\in D_\ell
  &\Longrightarrow
  \ell\in\mathcal I_W(r),
  \qquad \ell=1,\ldots,L.
\label{eq:global-protected-workload}
\end{align}
Thus every workload query that depends on a leaf in \(H\) routes through the
same root \(r\).
\end{definition}

\begin{theorem}[Explicit shared-module gain under the joint criterion]
\label{thm:workload-shared-module-gain}
Let
\[
  C=\{r_1,\ldots,r_M\}\subseteq V_W\setminus(B\cup W)
\]
and, for each \(j\), let \(H_j\subseteq B\) be a \(W\)-globally protected leaf
set by \(r_j\).  Assume that \(H_1,\ldots,H_M\) are pairwise disjoint, and let
\[
  D_{\cup}:=\bigcup_{\ell=1}^{L}D_\ell.
\]
Then
\begin{equation}
  E_{\cup}(C)
  \subseteq
  D_{\cup}\setminus\bigcup_{j=1}^{M}H_j,
\label{eq:workload-protected-union}
\end{equation}
and therefore
\begin{equation}
  |E_{\cup}(C)|
  \le
  |D_{\cup}|-\sum_{j=1}^{M}|H_j|.
\label{eq:workload-protected-count}
\end{equation}

If all base leaves have cost \(c_B\), then
\begin{equation}
  \begin{split}
    \sigma_{\mathrm{sem}}^{*,\mathrm{joint}}
    (W, \varepsilon, & \delta)
    \le
    \sum_{j=1}^{M}c(r_j)\\
    & +
    \biggl(
      |D_{\cup}|-\sum_{j=1}^{M}|H_j|-N^*(\varepsilon,\delta)
    \biggr)^{+}c_B.
  \end{split}
\label{eq:workload-protected-achievable}
\end{equation}

If, in addition,
\[
  c(r_j)=c_I
  \qquad\text{and}\qquad
  |H_j|\ge s,
  \qquad j=1,\ldots,M,
\]
then
\begin{equation}
  \sigma_{\mathrm{sem}}^{*,\mathrm{joint}}
  (W,\varepsilon,\delta)
  \le
  Mc_I
  +
  \bigl(
    |D_{\cup}|-Ms-N^*(\varepsilon,\delta)
  \bigr)^+c_B.
\label{eq:workload-protected-achievable-hom}
\end{equation}

Moreover, the explicit module construction strictly improves on the exact joint
leaf-only baseline \eqref{eq:workload-leaf-opt-joint-hom} whenever
\begin{equation}
  \sum_{j=1}^{M}c(r_j)
  +
  \biggl(
    |D_{\cup}|
    -\sum_{j=1}^{M}|H_j|
    -N^*
  \biggr)^+c_B
  <
  \bigl(|D_{\cup}|-N^*\bigr)^+c_B,
\label{eq:workload-protected-gain}
\end{equation}
where \(N^*=N^*(\varepsilon,\delta)\).

In particular, if
\[
  |D_{\cup}|-\sum_{j=1}^{M}|H_j|
  \ge
  N^*(\varepsilon,\delta),
\]
then a sufficient condition for strict gain is
\begin{equation}
  \sum_{j=1}^{M}c(r_j)
  <
  \left(
    \sum_{j=1}^{M}|H_j|
  \right)c_B.
\label{eq:workload-protected-gain-simple}
\end{equation}
Under the uniform bounds \(c(r_j)=c_I\) and \(|H_j|\ge s\), this reduces to
\begin{equation}
  c_I<s\,c_B.
\label{eq:workload-protected-gain-uniform}
\end{equation}
\end{theorem}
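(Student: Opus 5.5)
The heart of the argument is the inclusion \eqref{eq:workload-protected-union}; everything else follows from results already stated. To prove it, take any \(b\in E_{\cup}(C)\). Then \(b\in E_\ell(C)=D_{\mathrm{exp}}(q_\ell,C)\) for some \(\ell\), so \(b\in D_\ell\subseteq D_{\cup}\). It remains to show \(b\notin H_j\) for every \(j\).

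Suppose instead that \(b\in H_j\) for some \(j\). Since \(b\in D_\ell\), the no-bypass condition \eqref{eq:global-protected-workload} gives \(\ell\in\mathcal I_W(r_j)\). Condition \eqref{eq:global-protected-local} then gives \(b\in\operatorname{Pdom}_{q_\ell}(r_j)\), so every path in \(\mathcal P_{q_\ell}(b)\) passes through \(r_j\). Because \(r_j\in C\cap V(G_\ell)=U(q_\ell,C)\), no path from \(b\) to \(q_\ell\) is cache-free. Equivalently, by Proposition~\ref{prop:postdominator-protection}, \(b\in D_{\mathrm{blk}}(q_\ell,C)\), which contradicts \(b\in E_\ell(C)\).

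This is the step that uses the workload-global condition, and it is the one I expect to need most care. Without \eqref{eq:global-protected-workload}, a query \(q_\ell\) whose DAG does not contain \(r_j\) could still expose \(b\). The point to make explicit is that the case \(\ell\notin\mathcal I_W(r_j)\) is exactly the case that condition excludes.

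The count \eqref{eq:workload-protected-count} follows from pairwise disjointness of the \(H_j\) together with \(H_j\subseteq D_{\cup}\). To see \(H_j\subseteq D_{\cup}\), note that \(\operatorname{Pdom}_{q_\ell}(r_j)\subseteq D_\ell\). If \(\mathcal I_W(r_j)\) were empty, the intersection in \eqref{eq:global-protected-local} would be unconstrained, but \(r_j\in V_W\) guarantees \(\mathcal I_W(r_j)\neq\varnothing\).

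For the cost bounds, I will invoke Corollary~\ref{cor:workload-completion-costs}, in particular \eqref{eq:workload-sem-achievable-joint} with \eqref{eq:workload-completion-cost-joint-hom}:
\[
\sigma_{\mathrm{sem}}^{*,\mathrm{joint}}\le \textstyle\sum_j c(r_j)+(|E_{\cup}(C)|-N^*)^+c_B .
\]
Monotonicity of \(x\mapsto(x-N^*)^+\) and \eqref{eq:workload-protected-count} then give \eqref{eq:workload-protected-achievable}. Substituting \(c(r_j)=c_I\) and \(\sum_j|H_j|\ge Ms\) gives \eqref{eq:workload-protected-achievable-hom}.

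For the strict-gain statement, compare the right-hand side of \eqref{eq:workload-protected-achievable} with the exact joint leaf-only optimum \eqref{eq:workload-leaf-opt-joint-hom}; the inequality \eqref{eq:workload-protected-gain} is precisely that comparison. When \(|D_{\cup}|-\sum_j|H_j|\ge N^*\), both positive parts are active and differ by \((\sum_j|H_j|)c_B\). This yields \eqref{eq:workload-protected-gain-simple}. Under the uniform bounds, \(\sum_j c(r_j)=Mc_I<Msc_B\le(\sum_j|H_j|)c_B\) whenever \(c_I<sc_B\), which gives \eqref{eq:workload-protected-gain-uniform}.
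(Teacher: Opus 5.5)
Your proposal is correct and follows the paper's proof essentially step for step. It uses \eqref{eq:global-protected-workload} to force $\ell\in\mathcal I_W(r_j)$ and \eqref{eq:global-protected-local} to put $r_j$ on every path from $b$ to $q_\ell$, then applies the completion bound of Corollary~\ref{cor:workload-completion-costs} and the same positive-part comparison. Your explicit check that $H_j\subseteq D_{\cup}$ (via $\mathcal I_W(r_j)\neq\varnothing$, since $r_j\in V_W$) is left implicit in the paper's derivation of \eqref{eq:workload-protected-count}, and it is exactly what that count requires.
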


\begin{proof}
Fix \(j\) and let \(b\in H_j\).  By \eqref{eq:global-protected-workload}, any
query \(q_\ell\) whose leaf set contains \(b\) must satisfy
\(\ell\in\mathcal I_W(r_j)\).  By \eqref{eq:global-protected-local}, the root
\(r_j\) lies on every directed path from \(b\) to \(q_\ell\).  Therefore, once
\(r_j\) is cached, the leaf \(b\) is blocked for every workload query that
depends on it.  Equivalently,
\[
  b\notin E_\ell(C)
  \qquad
  \text{for every }\ell\text{ such that }b\in D_\ell.
\]
Hence \(b\notin E_{\cup}(C)\).  Since this holds for every \(b\in H_j\) and
every \(j\), we obtain \eqref{eq:workload-protected-union}.  Pairwise
disjointness of the sets \(H_j\) then gives
\eqref{eq:workload-protected-count}.

Applying the homogeneous joint completion formula
\eqref{eq:workload-completion-cost-joint-hom} from
Corollary~\ref{cor:workload-completion-costs} to the fixed module set \(C\)
yields
\[
  \sigma_{\mathrm{sem}}^{*,\mathrm{joint}}(W,\varepsilon,\delta)
  \le
  \ell_c(C)
  +
  \bigl(
    |E_{\cup}(C)|-N^*(\varepsilon,\delta)
  \bigr)^+c_B.
\]
Combining this with \eqref{eq:workload-protected-count} proves
\eqref{eq:workload-protected-achievable}.  The uniform-cost bound
\eqref{eq:workload-protected-achievable-hom} follows immediately.

For the strict-gain statement, compare
\eqref{eq:workload-protected-achievable} with the exact joint leaf-only
baseline \eqref{eq:workload-leaf-opt-joint-hom}.  If
\[
  |D_{\cup}|-\sum_{j=1}^{M}|H_j|
  \ge
  N^*(\varepsilon,\delta),
\]
then both positive-part terms are active, and their difference is
\[
  \left(\sum_{j=1}^{M}|H_j|\right)c_B.
\]
This yields \eqref{eq:workload-protected-gain-simple}; the uniform reduction
\eqref{eq:workload-protected-gain-uniform} is immediate.
\end{proof}

\begin{remark}[Why a global protection condition is needed]
\label{rem:workload-global-protection}
A leaf may be postdominated by a cached internal root for one query and yet
remain exposed at the workload level, because another query uses the same
leaf through a path that avoids the root.  Condition
\eqref{eq:global-protected-workload} rules out exactly this failure mode,
so the workload-level condition is deliberately stronger than the
single-query postdominator test.
\end{remark}

\begin{remark}[Maximal-error versus joint criterion]
\label{rem:max-vs-joint-workload}
For a fixed module set \(C\), the maximal-error residual completion is the
weighted multicover program \eqref{eq:workload-completion-cost-max}, whereas
the joint one collapses to the scalar threshold
\eqref{eq:workload-completion-cost-joint}---hence the clean closed-form
comparisons of this section under the joint criterion.
\end{remark}

\subsection{Exact Optimality of the Shared-Module Construction}
\label{subsec:shared-module-optimality}

Theorem~\ref{thm:workload-shared-module-gain} gave a sufficient condition for
the shared-module construction to beat the leaf-only baseline.  This
subsection shows that, under an additional \emph{exact module-routing}
condition, the construction is exactly optimal among all transparent caches,
and the optimum reduces to a one-dimensional search.

\begin{definition}[Exact module-routing regime]
\label{def:no-spurious-interception}
A workload family \(W=\{q_1,\ldots,q_L\}\) with module roots
\[
  r_{1},\ldots,r_{M}\in V_W\setminus(B\cup W)
\]
and pairwise disjoint protected leaf groups
\[
  H_{1},\ldots,H_{M}\subseteq D_{\cup}
\]
is said to satisfy \emph{exact module routing} if, for every
\(\ell\in\{1,\ldots,L\}\) and every \(b\in D_\ell\):
\begin{enumerate}[label=\textup{(\roman*)}]
\item every directed path from \(b\) to \(q_\ell\) in \(G_\ell\) visits
exactly the same set of internal vertices, denoted \(I(b,\ell)\);
\item this route set contains only designated module roots:
\begin{equation}
  I(b,\ell)\subseteq\{r_{1},\ldots,r_{M}\};
\label{eq:no-spurious}
\end{equation}
\item the route set is
\begin{equation}
  I(b,\ell)=
  \begin{cases}
    \{r_{j}\}, & b\in H_{j},\\[0.5mm]
    \varnothing, & b\notin\bigcup_{j=1}^{M}H_{j}.
  \end{cases}
\label{eq:exact-route-sets}
\end{equation}
\end{enumerate}
Thus each leaf reaches each query that depends on it through a unique
internal route: through its own module root if it belongs to a protected
group, and through no internal vertex otherwise.
\end{definition}

\begin{remark}[Interpretation and scope]
\label{rem:no-spurious-interpretation}
Exact module routing formalizes the requirement that each protected group be
intercepted by exactly one module root, with no bypassing derivation path.
Condition~(i) rules out \emph{distributed interception}, in which different
paths from one leaf are intercepted by different cached vertices
(cf.~Remark~\ref{rem:set-wise-interception}).  Conditions~(ii) and~(iii)
confine all interception to the designated roots and assign each leaf to at
most one module.  In particular, exact module routing implies that each
\(H_j\) is \(W\)-globally protected by \(r_j\) in the sense of
Definition~\ref{def:workload-global-protection}, and that no other internal
vertex can block any leaf of \(H_j\).  The stylized ensemble of
Section~\ref{sec:numerical} realizes the regime exactly: its query DAGs have
depth two, with the leaves of \(H_j\) routed through \(r_j\) and private
leaves adjacent to the query.
\end{remark}

\begin{theorem}[Exact optimality of shared-module caching]
\label{thm:shared-module-exact-optimality}
Let \(W\) be a workload family satisfying the exact module-routing regime of
Definition~\ref{def:no-spurious-interception}, with pairwise disjoint
protected groups \(H_1,\ldots,H_M\) of sizes \(|H_j|=s_j\) and module costs
\(c(r_j)=c_I\).  Assume homogeneous leaf cost \(c_B\).  Then:

\begin{enumerate}[label=\textup{(\Roman*)}]
\item \emph{Exact residual characterization.}  For any module set
\(J\subseteq\{1,\ldots,M\}\),
\begin{equation}
  E_{\cup}(\{r_j:j\in J\})
  =
  D_{\cup}\setminus\bigcup_{j\in J}H_j,
\label{eq:exact-residual-union}
\end{equation}
with equality, not merely inclusion.

\item \emph{Exact joint optimum.}  Under the joint criterion,
\begin{equation}
  \begin{aligned}
    \sigma_{\mathrm{sem}}^{*,\mathrm{joint}}( & W,\varepsilon,\delta)
    =
    \min_{J\subseteq\{1,\ldots,M\}}
    \Bigl\{ |J|c_I \\
    & + \bigl( |D_{\cup}|-\sum_{j\in J}s_j-N^*(\varepsilon,\delta) \bigr)^+ c_B \Bigr\}.
  \end{aligned}
\label{eq:exact-joint-optimum}
\end{equation}

\item \emph{Optimality of the uniform construction.}  If \(s_j=s\) for all
\(j\), then \eqref{eq:exact-joint-optimum} reduces to the one-dimensional
search
\begin{equation}
  \begin{aligned}
    \sigma_{\mathrm{sem}}^{*,\mathrm{joint}}( & W,\varepsilon,\delta)
    =
    \min_{0\le m\le M}
    \Bigl\{ m c_I \\
    & + \bigl( |D_{\cup}| - m s - N^*(\varepsilon,\delta) \bigr)^+ c_B \Bigr\},
  \end{aligned}
\label{eq:exact-joint-optimum-uniform}
\end{equation}
and the smallest optimal number of cached modules is
\begin{equation}
  m^*
  =
  \min\left\{
    M,\,
    \left\lceil
      \frac{\bigl(|D_{\cup}|-N^*(\varepsilon,\delta)\bigr)^+}
           {s}
    \right\rceil
  \right\}
  \text{if }c_I<s\,c_B,
\label{eq:optimal-module-count}
\end{equation}
and \(m^*=0\) otherwise.
\end{enumerate}
\end{theorem}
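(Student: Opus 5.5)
The plan is to prove (I) directly from the route-set structure, obtain (II) by a projection and exchange argument that reduces every feasible cache to the form ``module roots plus raw leaves'', and treat (III) as a scalar minimization of a piecewise-linear function of the number of cached modules.

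\emph{Step (I).} Fix \(J\) and set \(C_J:=\{r_j:j\in J\}\). For \(\ell\) and \(b\in D_\ell\), every path \(P\in\mathcal P_{q_\ell}(b)\) has vertex set \(\{b\}\cup I(b,\ell)\cup\{q_\ell\}\) by condition~(i) of Definition~\ref{def:no-spurious-interception}. Since \(b\in B\) and \(q_\ell\in W\) are not in \(C_J\), some path is cache-free if and only if all of them are, and this happens if and only if \(I(b,\ell)\cap C_J=\varnothing\). By \eqref{eq:exact-route-sets}, that is equivalent to \(b\notin\bigcup_{j\in J}H_j\). Hence \(E_\ell(C_J)=D_\ell\setminus\bigcup_{j\in J}H_j\), and taking the union over \(\ell\) gives \eqref{eq:exact-residual-union} with equality. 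Because the \(H_j\) are pairwise disjoint subsets of \(D_\cup\), it follows that \(|E_\cup(C_J)|=|D_\cup|-\sum_{j\in J}s_j\).

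\emph{Step (II).} Achievability ``\(\le\)'' follows from Corollary~\ref{cor:workload-completion-costs} via \eqref{eq:workload-completion-cost-joint-hom}, applied to \(C_J\) with the count from Step (I).

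For the converse, start from any feasible answer-excluding cache. By Theorem~\ref{thm:exact-workload-opt} we may assume \(S\subseteq V_W\setminus W\), and we split \(S=R\cup C_J\cup X\). Here \(R\subseteq B\), \(C_J\) consists of the designated roots in \(S\), and \(X\) consists of the remaining internal vertices. By condition~(ii), no vertex of \(X\) lies on any leaf-to-query path, so deleting \(X\) leaves every \(D_{\mathrm{exp}}(q_\ell,\cdot)\) unchanged and does not increase cost.

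Then Proposition~\ref{prop:workload-residual-completion} together with Step (I) gives
\[
  D_\cup(W,S)=\Bigl(D_\cup\setminus\bigcup_{j\in J}H_j\Bigr)\setminus R .
\]
Joint feasibility \eqref{eq:workload-joint-reliability-constraint} therefore forces
\[
  |R|\ge\Bigl(|D_\cup|-\sum_{j\in J}s_j-N^*\Bigr)^+ .
\]
Consequently \(\ell_c(S)\ge|J|c_I+(\cdot)^+c_B\), which is at least the right-hand side of \eqref{eq:exact-joint-optimum}.

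\emph{Step (III).} If all \(s_j=s\), the objective depends on \(J\) only through \(m=|J|\), which gives \eqref{eq:exact-joint-optimum-uniform}. Put \(K:=(|D_\cup|-N^*)^+\) and \(f(m)=mc_I+(K-ms)^+c_B\).

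Once \(ms\ge K\), \(f\) increases with slope \(c_I\), so no minimizer exceeds \(\lceil K/s\rceil\). While \((m+1)s\le K\), the increment is \(f(m+1)-f(m)=c_I-sc_B\). Hence if \(c_I\ge sc_B\) the function is nondecreasing and \(m^*=0\), while if \(c_I<sc_B\) it strictly decreases up to \(\lfloor K/s\rfloor\), capped at \(M\).

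The main obstacle is the last, partial step from \(\lfloor K/s\rfloor\) to \(\lceil K/s\rceil\). Its increment is \(c_I-(K-\lfloor K/s\rfloor s)c_B\), which is negative only when the remainder times \(c_B\) exceeds \(c_I\). I would verify this step carefully first. If the remainder condition can fail, the formula \eqref{eq:optimal-module-count} needs that side condition, or \(m^*=\lfloor K/s\rfloor\) in the complementary case. I would add this as a remark rather than force the ceiling.
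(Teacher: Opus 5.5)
Your Steps (I) and (II) are correct and follow the paper's proof essentially step for step. For (I) the paper uses the same route-set case analysis. For (II) it uses the same three moves: projection onto $V_W\setminus W$, deletion of non-root internal vertices via condition~(ii), and the exact completion identity of Proposition~\ref{prop:workload-residual-completion}. You use $|E_\cup(C_J)\setminus R|\ge|E_\cup(C_J)|-|R|$ directly instead of first normalizing $R$ into the residual set, which makes no difference. Your reduction of (III) to the search \eqref{eq:exact-joint-optimum-uniform} is also exactly what the paper does.

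Your worry about the last step of (III) is justified. It is not a gap in your argument but an error in the statement: the ceiling formula \eqref{eq:optimal-module-count} is false in general. The paper's proof asserts that the marginal reduction is $s\,c_B-c_I$ throughout the active range, and concludes that the optimum is the smallest $m$ that annihilates the residual term. This overlooks precisely the partial step you isolated, whose reduction is only $t\,c_B-c_I$, where $t:=K-s\lfloor K/s\rfloor$.

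Here is a concrete counterexample.
\begin{itemize}
\item Take a single query whose $15$ leaves form three groups of five, each routed through its own root: $M=3$, $s=5$, depth two, so exact module routing holds.
\item Choose $(\varepsilon,\delta)=(0.01,0.045)$, which gives $N^*=4$, and set $c_B=1$, $c_I=2<s\,c_B$.
\item Then $K=11$, and $f(m)=2m+(11-5m)^+$ takes the values $11,8,5,6$ for $m=0,1,2,3$.
\end{itemize}
The unique minimizer is $m=2$. Formula \eqref{eq:optimal-module-count} instead gives $\min\{3,\lceil 11/5\rceil\}=3$, at cost $6$ rather than the true optimum $5$.

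The correct statement, for $c_I<s\,c_B$, is
\[
  m^*=
  \begin{cases}
    \min\{M,\lceil K/s\rceil\}, & t=0\ \text{or}\ c_I<t\,c_B,\\
    \min\{M,\lfloor K/s\rfloor\}, & t>0\ \text{and}\ t\,c_B\le c_I,
  \end{cases}
\]
and your monotonicity analysis already proves it. The paper's formula is recovered whenever $c_I<c_B$, since $t\ge 1$ then forces $t\,c_B>c_I$; this covers the numerical section, where $\rho=0.4$.

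So put the side condition into the theorem itself rather than into a remark. The search formula and the first-order $\rho/(s\varepsilon)$ law are unaffected, because floor and ceiling differ by a single module. However, the closed form in Corollary~\ref{cor:tight-two-sided}, which evaluates at $m^*$, must use the corrected value.
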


\begin{proof}
\textbf{(I)}  Let \(C=\{r_j:j\in J\}\) and fix \(b\in D_{\cup}\).  For every \(\ell\) with \(b\in D_\ell\), route uniformity gives
\begin{align*}
  b\notin E_\ell(C)
  &\;\Longleftrightarrow\;
  \text{every path } b\leadsto q_\ell \text{ meets } C \\
  &\;\Longleftrightarrow\;
  I(b,\ell)\cap C\neq\varnothing .
\end{align*}
By \eqref{eq:exact-route-sets}, if \(b\in H_j\) for some \(j\in J\) then
\(I(b,\ell)=\{r_j\}\ni r_j\in C\) for every such \(\ell\), so
\(b\notin E_{\cup}(C)\).  If \(b\in H_k\) with \(k\notin J\), then
\(I(b,\ell)=\{r_k\}\) is disjoint from \(C\), and if
\(b\notin\bigcup_{j=1}^{M}H_j\) then \(I(b,\ell)=\varnothing\); in both
cases \(b\in E_{\cup}(C)\).  Hence
\eqref{eq:exact-residual-union} holds with equality.

\textbf{(II)}  By Theorem~\ref{thm:exact-workload-opt}, the joint optimum is
attained by a cache \(S\subseteq V_W\setminus W\).  Write
\(S=S_{\mathrm{int}}\cup R\) with \(R:=S\cap B\) and \(S_{\mathrm{int}}\)
internal.  By \eqref{eq:no-spurious}, a non-root internal vertex belongs to
no route set \(I(b,\ell)\); it therefore blocks no leaf, so deleting it from
\(S\) leaves the exposed-leaf union---and hence joint
feasibility---unchanged while strictly reducing the cost.  An optimal cache
may thus be assumed to satisfy
\(S_{\mathrm{int}}=\{r_j:j\in J\}\) and
\(R\subseteq D_{\cup}\setminus\bigcup_{j\in J}H_j\),
where the second restriction is free of loss because leaves in
\(\bigcup_{j\in J}H_j\) are already blocked.  By part~(I) and the exact
completion identity \eqref{eq:workload-residual-completion-joint},
\(E_{\cup}(S)=D_{\cup}\setminus(\bigcup_{j\in J}H_j\cup R)\), so
\(|E_{\cup}(S)|=|D_{\cup}|-\sum_{j\in J}s_j-|R|\).
By Corollary~\ref{cor:workload-thresholds}, joint feasibility is therefore
equivalent to
\(|R|\ge(|D_{\cup}|-\sum_{j\in J}s_j-N^*(\varepsilon,\delta))^+\),
and the cheapest feasible cache with module set \(J\) has cost
\(|J|c_I+(|D_{\cup}|-\sum_{j\in J}s_j-N^*)^+c_B\).
Minimizing over \(J\) gives \eqref{eq:exact-joint-optimum}.

\textbf{(III)}  Under \(s_j=s\), the objective depends only on \(m=|J|\),
yielding \eqref{eq:exact-joint-optimum-uniform}.  The function
\(f(m)=m c_I+(|D_{\cup}|-ms-N^*)^+c_B\)
is piecewise linear in \(m\).  While the positive part is active
(\(|D_{\cup}|-ms>N^*\)), the marginal cost reduction from increasing \(m\)
by one is \(s\,c_B-c_I\); this is positive exactly when \(c_I<s\,c_B\).  Once
\(|D_{\cup}|-ms\le N^*\), the residual term vanishes and further modules only
add cost \(c_I\).  Hence the optimum is the smallest \(m\) that makes the
residual term vanish---namely \(\bigl\lceil(|D_{\cup}|-N^*)^+/s\bigr\rceil\),
which is zero when \(|D_{\cup}|\le N^*\)---capped by \(M\); this is
\eqref{eq:optimal-module-count}.
\end{proof}

\begin{proposition}[NP-hardness of optimal cache selection in general DAGs]
\label{prop:module-selection-np-hard}
The decision form of the exact storage problem
\eqref{eq:exact-single-query-opt}---given a canonical derivation DAG \(G_q\)
satisfying Assumptions~\ref{assump:det-local}
and~\ref{assump:hereditary-dag}, homogeneous costs, an integer threshold
\(N^{*}\ge 0\), and a budget \(\theta\), decide whether there exists a cache
\(S\subseteq V(G_q)\setminus\{q\}\) with \(\ell_c(S)\le\theta\) and
\(|D_{\mathrm{exp}}(q,S)|\le N^{*}\)---is \emph{NP-complete}.  Hardness holds
already when \(G_q\) has depth two and every base leaf has out-degree two;
the same statement applies \emph{a fortiori} to the joint-criterion workload
problem of this section, which contains the single-query problem as the case
\(L=1\).
\end{proposition}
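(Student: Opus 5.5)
The plan is to show membership in NP directly and then reduce from \textsc{Clique} through a depth-two "edge--vertex incidence" DAG, in which every base leaf has out-degree two.

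\emph{Membership in NP.} A certificate is the cache \(S\subseteq V(G_q)\setminus\{q\}\). Checking \(\ell_c(S)\le\theta\) is immediate. By Definition~\ref{def:query-local-residual}, \(D_{\mathrm{exp}}(q,S)\) is the set of base leaves that still reach \(q\) in the DAG obtained from \(G_q\) by deleting the vertices of \(U(q,S)\). One backward reachability search from \(q\) in this DAG computes it in time linear in \(|E(G_q)|\), so comparing its size with \(N^*\) is polynomial.

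\emph{Construction.} Start from a \textsc{Clique} instance \((H,k)\). Isolated vertices can be deleted, and we may assume \(k\ge 4\); both restrictions preserve NP-hardness. Build \(G_q\) as follows.
\begin{itemize}
\item Each edge \(e=\{u,v\}\in E(H)\) becomes a base leaf \(b_e\).
\item Each vertex \(v\in V(H)\) becomes an internal vertex with \(\operatorname{par}(v)\) the tuple of leaves \(b_e\) for edges \(e\ni v\). This tuple is nonempty and has distinct entries.
\item The query has \(\operatorname{par}(q)\) equal to the tuple of all vertex nodes.
\end{itemize}
This is an acyclic single-rule Horn program, so Assumptions~\ref{assump:det-local} and~\ref{assump:hereditary-dag} hold. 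Heredity \eqref{eq:hereditary-subdag} holds trivially because expansion is forced. The DAG has depth two, and each leaf \(b_e\) has out-degree exactly two. Take unit costs, \(\theta:=k\) and \(N^*:=|E(H)|-\binom{k}{2}\).

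\emph{Key counting step.} A leaf \(b_{\{u,v\}}\) has exactly the two paths \(b\to u\to q\) and \(b\to v\to q\). By Proposition~\ref{prop:postdominator-protection}, it is blocked if and only if \(b\) itself is cached, or both \(u\) and \(v\) are cached. So a cache containing a set \(X\) of \(j\) vertex nodes and \(t\) leaves blocks at most \(|E(H[X])|+t\le\binom{j}{2}+t\) leaves.

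\emph{Soundness and completeness.} If \(H\) has a \(k\)-clique, caching its vertices has cost \(k\) and leaves exactly \(|E(H)|-\binom{k}{2}=N^*\) exposed, so the instance is a yes-instance. Conversely, suppose a cache of cost at most \(k\) blocks at least \(\binom{k}{2}\) leaves. It suffices to consider \(j+t=k\). We need \(\binom{j}{2}+(k-j)<\binom{k}{2}\) whenever \(j<k\). This holds because
\[
  \binom{k}{2}-\binom{j}{2}=\sum_{i=j}^{k-1} i,
\]
and this sum exceeds \(k-j\) for \(k\ge 4\). For \(j\ge 2\) every term is at least two. For \(j=1\) the sum is \(\binom{k}{2}>k-1\). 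For \(j=0\) it is \(\binom{k}{2}>k\). Hence \(j=k\), \(t=0\), and \(|E(H[X])|=\binom{k}{2}\), so \(X\) is a \(k\)-clique.

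This exchange inequality is the only delicate point. It is exactly why the reduction uses \(k\ge4\): for small \(k\), caching leaves could match the vertex-only option and spoil the equivalence. The workload statement follows by taking \(L=1\), because the joint constraint \eqref{eq:workload-joint-reliability-constraint} then coincides with the single-query constraint.
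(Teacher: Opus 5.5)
Your proof is correct and follows essentially the paper's route: NP membership via reachability after deleting the cached vertices, and the same depth-two edge--vertex incidence reduction from \textsc{Clique} with \(k\ge 4\), unit costs, \(\theta=k\), \(N^*=|E(H)|-\binom{k}{2}\), closed by the same exchange inequality (your telescoping-sum case split replaces the paper's monotonicity of \(t-\binom{t}{2}\)). The one detail to add is the paper's preliminary step of assuming \(|E(H)|\ge\binom{k}{2}\), and outputting a trivial no-instance otherwise; this is needed both for \(N^*\ge 0\) and to guarantee that \(\operatorname{par}(q)\) is nonempty.
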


\begin{proof}
For membership in NP, note that for any candidate cache \(S\) the exposed-leaf
set \(D_{\mathrm{exp}}(q,S)\) is computable in polynomial time: for each base
leaf \(b\), test whether \(q\) is reachable from \(b\) in \(G_q\) with the
cached vertices removed; \(b\) is exposed if and only if such a cache-free
path exists.  Feasibility is therefore checkable in polynomial time.

Hardness is by reduction from \textsc{Clique} \cite{garey1979computers}.  Let
\(G=(V,E)\) be an undirected graph with \(|E|=m\), and let \(k\ge 4\);
this restriction is legitimate because \textsc{Clique} remains NP-complete
when \(k\) is part of the input and instances with \(k\le3\) are padded by
adjoining universal vertices.  Assume
\(m\ge\binom{k}{2}\), since otherwise no \(k\)-clique exists.  Let
\(V_{+}:=\{v\in V:\deg_G(v)>0\}\) and construct a canonical DAG with target
\(q\), base leaves \(B=\{p_e:e\in E\}\), and internal vertices
\(X=\{x_v:v\in V_{+}\}\), with designated parents
\[
  \operatorname{par}(x_v)=(p_e:e\ni v),
  \qquad
  \operatorname{par}(q)=(x_v:v\in V_{+}),
\]
so the edges are \(p_e\to x_u\) and \(p_e\to x_v\) for each edge
\(e=\{u,v\}\in E\), and \(x_v\to q\) for every \(v\in V_{+}\).  Isolated
vertices are irrelevant to the existence of a \(k\)-clique for \(k\ge4\),
and the assumption \(m\ge\binom{k}{2}\) ensures that
\(V_{+}\neq\varnothing\); hence every designated parent tuple is nonempty.
This is a valid canonical DAG under Assumptions~\ref{assump:det-local}
and~\ref{assump:hereditary-dag}: it is finite and acyclic, every internal
vertex has one designated parent tuple, and its depth is two.  Impose
homogeneous unit costs, and set
\[
  N^{*}:=m-\tbinom{k}{2},
  \qquad
  \theta:=k.
\]

Every leaf \(p_e\), \(e=\{u,v\}\), has exactly two directed paths to \(q\),
one through \(x_u\) and one through \(x_v\).  Hence, for a cache
\(S=C\cup R\) with \(C\subseteq X\) and \(R\subseteq B\), the leaf \(p_e\) is
non-exposed if and only if \(p_e\in R\) or \(\{x_u,x_v\}\subseteq C\).  By
the exact raw-premise completion of
Proposition~\ref{prop:workload-residual-completion}, the cheapest completion
for a fixed \(C\) caches \(\bigl(|E(C)|-N^{*}\bigr)^{+}\) exposed leaves,
where
\[
  |E(C)|=m-e(C),
  e(C):=\bigl|\bigl\{\{u,v\}\in E:\{x_u,x_v\}\subseteq C\bigr\}\bigr|.
\]
The optimum of \eqref{eq:exact-single-query-opt} is therefore
\[
  \min_{C\subseteq X}
  \Bigl[\,|C|+\bigl(\tbinom{k}{2}-e(C)\bigr)^{+}\Bigr].
\]
If \(G\) contains a \(k\)-clique \(K\), then \(C=\{x_v:v\in K\}\) has
\(e(C)=\binom{k}{2}\) and value \(k\).  Conversely, suppose the optimum is at
most \(k\) and let \(C\) attain it, with \(|C|=t\).  If \(t>k\), the value is
at least \(t>k\).  If \(t=0\), the value is \(\binom{k}{2}>k\), since
\(k\ge4\).  If \(1\le t<k\), then \(e(C)\le\binom{t}{2}\) and the value is
at least
\[
  t+\tbinom{k}{2}-\tbinom{t}{2}
  \;\ge\;
  (k-1)+\tbinom{k}{2}-\tbinom{k-1}{2}
  \;=\;
  2(k-1)>k,
\]
where the first inequality uses that \(t-\binom{t}{2}\) is nonincreasing in
\(t\) for \(t\ge1\).  Hence \(t=k\), and a value of at most \(k\) forces
\(e(C)\ge\binom{k}{2}\): all \(\binom{k}{2}\) pairs of \(C\) are edges of
\(G\), so \(C\) is a \(k\)-clique.  The reduction is polynomial, so the
decision problem is NP-hard; together with the membership above, it is
NP-complete.
\end{proof}

\begin{remark}[A sharp tractability boundary]
\label{rem:tractability-boundary}
Proposition~\ref{prop:module-selection-np-hard} should be read together with
the exact module-routing regime.  The hard instances violate condition~(i):
the leaf \(p_e\) has two derivation paths through different internal vertices
(\(\{x_u\}\) versus \(\{x_v\}\)), and this route multiplicity is exactly what
encodes \textsc{Clique}.  Under exact module routing, every leaf has one
designated route, the residual union decomposes exactly
(Theorem~\ref{thm:shared-module-exact-optimality}(I)), and the optimum is a
one-dimensional search.  A second tractable island is the unique-path case:
if every base leaf has a unique path to the query---as in chains and trees,
including the witness DAG of Fig.~\ref{fig:datalog-witness}---then the leaf
sets blocked by single internal vertices are laminar: leaves routed through
one vertex share the same suffix to \(q\), since any deviation would create a
second path.  A bottom-up dynamic program then computes the optimum in
polynomial time.  The reduction shows that relaxing ``one path per leaf'' to
\emph{two} paths already yields NP-completeness, so the transition is sharp.
Outside these islands, the constructive sufficient conditions of
Theorem~\ref{thm:workload-shared-module-gain} remain available: they certify
strict gains without solving the general selection problem.  The exact
module-routing regime is therefore not a restrictive convenience; it is a
precise description of where exact closed-form optimality is possible.
\end{remark}

\section{Unrestricted Coded Storage as an Outer Benchmark}
\label{sec:coded-outer}

The transparent laws of Section~\ref{sec:semantic-scenario} are exact within
the answer-excluding semantic architecture.  We now compare them with an
unconstrained coded reference point.

The exact optimum of a fully general coded system may depend on the
information content of the workload outputs \(q_1,\ldots,q_L\), not only on
the canonical dependency structure.  To obtain an explicit and
architecture-independent benchmark, we therefore analyze a stronger coded
task: the cache must recover the erased payloads of the workload-relevant
base premises themselves.  Recovering those payloads suffices to answer
every workload query, so the resulting storage law is an explicit outer
benchmark for both \(\sigma_{\mathrm{code}}^{*,\max}(W,\varepsilon,\delta)\)
and \(\sigma_{\mathrm{code}}^{*,\mathrm{joint}}(W,\varepsilon,\delta)\).

Throughout this section we specialize to the homogeneous packet model:
every relevant base premise occupies one \(c_B\)-bit packet.  The coded
cache may store an arbitrary binary string computed from those packets,
measured below in packet-equivalent units \(\lfloor\sigma/c_B\rfloor\), and
the stored string itself is not erased.  The packet alphabet is assumed
large enough to support the Reed--Solomon codes used below:
\(n+r\le 2^{c_B}+1\) for every \((n+r,n)\) code under consideration.  In
asymptotic statements with \(n\to\infty\), \(c_B\) grows with \(n\)
accordingly.  The required growth rate is stated at each step: logarithmic
in \(n\) for code existence, and linear in \(n\) for the strong-converse
exponent of Theorem~\ref{thm:coded-strong-converse}.

To avoid confusion with the cache-dependent exposed union
\(D_{\cup}(W,S)\) of Definition~\ref{def:workload-projection}, we write
\begin{equation}
  D_{\mathrm{dep}}(W)
  :=
  \bigcup_{\ell=1}^{L}D_\ell,
  \qquad
  \kappa_{\mathrm{dep}}(W)
  :=
  |D_{\mathrm{dep}}(W)|
\label{eq:def-workload-dependency-union}
\end{equation}
for the union of the raw workload leaf dependencies.  This coincides with
the union \(D_{\cup}\) of Corollary~\ref{cor:workload-leaf-baselines}.

\subsection{A Universal Packet-Erasure Benchmark}
\label{subsec:coded-benchmark-model}

\begin{definition}[MDS parity threshold]
\label{def:mds-threshold}
For \(n\in\mathbb Z_{\ge 0}\), define
\begin{equation}
  \begin{aligned}
    r_{\mathrm{MDS}}^*(n,\varepsilon,\delta)
    := \min \Bigl\{ & r\in\{0,1,\ldots,n\}: \\
    & \sum_{j=0}^{r} \binom{n}{j} \varepsilon^{j}(1-\varepsilon)^{n-j} \ge 1-\delta \Bigr\}.
  \end{aligned}
\label{eq:def-r-mds-star}
\end{equation}
Equivalently,
\(r_{\mathrm{MDS}}^*(n,\varepsilon,\delta)\) is the smallest number of parity
packets that allows an ideal \((n+r,n)\) packet-erasure code to recover all
\(n\) source packets with probability at least \(1-\delta\) under independent
source-packet erasures of probability \(\varepsilon\).
\end{definition}

\begin{proposition}[Exact reliability of the packet-erasure benchmark]
\label{prop:mds-reliability}
Let \(A\subseteq B\) be a family of \(n\) base premises, each represented by
one \(c_B\)-bit packet, and let \(r\in\{0,1,\ldots,n\}\).  Under an ideal
systematic \((n+r,n)\) MDS code applied to the packets indexed by \(A\), the
entire family \(A\) is recoverable from the surviving premises together with
the \(r\) stored parity packets if and only if at most \(r\) premises in \(A\)
are erased.  Hence the recovery probability is exactly
\begin{equation}
  \sum_{j=0}^{r}
  \binom{n}{j}
  \varepsilon^{j}(1-\varepsilon)^{n-j}.
\label{eq:mds-success-probability}
\end{equation}
Consequently, the minimum parity count required for
\((\varepsilon,\delta)\)-reliable recovery of all packets in \(A\) is
\(r_{\mathrm{MDS}}^*(n,\varepsilon,\delta)\).
\end{proposition}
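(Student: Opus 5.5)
The plan is to prove the deterministic recovery characterization first, then the probability formula, then the minimality claim, in that order.

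\emph{Step 1: deterministic characterization.} Fix the payloads and an erasure pattern, and let \(E\subseteq A\) be the set of erased premises, with \(j:=|E|\). The code is an ideal systematic \((n+r,n)\) MDS code: a Reed--Solomon code over the packet alphabet, which exists because \(n+r\le 2^{c_B}+1\). Its codeword consists of the \(n\) source packets together with the \(r\) stored parity packets, and the parity packets are never erased. The decoder therefore sees \(n-j+r\) unerased coordinates of the codeword.

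For sufficiency, suppose \(j\le r\). Then at least \(n\) coordinates are observed. By the MDS property, any \(n\) coordinates determine the message, so all of \(A\) is recovered.

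For necessity, suppose \(j>r\). Then at most \(n-1\) coordinates are observed. Fix the observed source packets and count the consistent messages. The map from the \(j\) erased source symbols to the \(r\) parity symbols is a linear map \(\mathbb F^{j}\to\mathbb F^{r}\) with \(j>r\), so it has a nontrivial kernel. At least two distinct payload vectors are therefore consistent with the observation, and exact recovery of \(A\) fails. Equivalently, this is the Singleton-type fact that an MDS code of dimension \(n\) cannot be determined from fewer than \(n\) coordinates. I expect this converse direction to be the main obstacle to state carefully. One point needs care: ``recoverable'' should mean that the decoder's output is forced to equal \(X^n\) for every payload realization, matching the ``for every payload realization'' clause of Definition~\ref{def:coded-benchmark-task}. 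Nonuniqueness then gives failure, since the decoder is wrong on at least one of the consistent payload vectors.

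\emph{Step 2: probability formula.} By Step 1, the success event is exactly \(\{|E|\le r\}\). This event depends only on the survival indicators \(Z_b\), \(b\in A\), which are independent Bernoulli variables. Hence \(|E|\sim\mathrm{Bin}(n,\varepsilon)\), and summing its probability mass function over \(j=0,\ldots,r\) gives \eqref{eq:mds-success-probability}.

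\emph{Step 3: minimum parity count.} The quantity \eqref{eq:mds-success-probability} is nondecreasing in \(r\), being a cumulative binomial sum, and it equals \(1\) at \(r=n\). The set in \eqref{eq:def-r-mds-star} is therefore nonempty and upward closed. So \((\varepsilon,\delta)\)-reliability holds exactly when \(r\ge r_{\mathrm{MDS}}^*(n,\varepsilon,\delta)\), and the minimum parity count within this construction is \(r_{\mathrm{MDS}}^*\). The claim is about the MDS parity scheme itself; optimality against arbitrary coded caches is a separate matter, left to the near-optimality result the section announces, and I would not try to prove it here.
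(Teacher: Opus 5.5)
Your proposal is correct and follows the same route as the paper. The any-$n$-of-$(n+r)$ MDS property with never-erased parity gives the success event $\{|E|\le r\}$, the binomial law of $|E|$ gives \eqref{eq:mds-success-probability}, and the definition of $r_{\mathrm{MDS}}^*(n,\varepsilon,\delta)$ gives the minimum.

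You are in fact more careful than the paper. The paper asserts the ``only if'' direction without argument, and your non-injectivity step supplies it; phrased as a pigeonhole count ($|\mathcal X|^{j}$ messages agreeing on the observed source packets versus only $|\mathcal X|^{r}$ parity values), it needs no linearity and so covers any systematic code. Your monotonicity remark also upgrades the minimum to a threshold characterization.
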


\begin{proof}
A systematic \((n+r,n)\) MDS code has the property that any \(n\) of its
\(n+r\) symbols determine the original \(n\) source packets.  In the present
model, only the \(n\) source packets are subject to erasures; the \(r\) stored
parity packets remain available.  Therefore recovery succeeds exactly when at
least \(n-r\) source packets survive, equivalently when at most \(r\) source
packets are erased.  Under independent erasures with probability
\(\varepsilon\), this event has probability \eqref{eq:mds-success-probability}.
The minimal admissible \(r\) is therefore exactly
\eqref{eq:def-r-mds-star}.
\end{proof}

\begin{proposition}[Comparison with the transparent survival threshold]
\label{prop:mds-versus-transparent-threshold}
For every \(n\in\mathbb Z_{\ge 0}\),
\begin{equation}
  r_{\mathrm{MDS}}^*(n,\varepsilon,\delta)
  \le
  \bigl(n-N^*(\varepsilon,\delta)\bigr)^+.
\label{eq:mds-versus-transparent-threshold}
\end{equation}
\end{proposition}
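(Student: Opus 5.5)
The plan is to exhibit a feasible value of \(r\) directly in the minimization \eqref{eq:def-r-mds-star}. Write \(N^*=N^*(\varepsilon,\delta)\) and \(r_0:=(n-N^*)^+\). It suffices to show that \(r_0\in\{0,1,\ldots,n\}\) and that \(r=r_0\) satisfies the binomial reliability constraint. Then the minimum defining \(r_{\mathrm{MDS}}^*(n,\varepsilon,\delta)\) is at most \(r_0\), which is the claim \eqref{eq:mds-versus-transparent-threshold}. Membership \(r_0\le n\) is immediate because \(N^*\ge0\): since \(\delta>0\), one has \((1-\varepsilon)^0=1\ge1-\delta\), so \(n=0\) always qualifies in \eqref{eq:model-N-star}.

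I will split into two cases. If \(n\le N^*\), then \(r_0=0\), and the sum in \eqref{eq:def-r-mds-star} reduces to \((1-\varepsilon)^n\). Because \((1-\varepsilon)^x\) is nonincreasing in \(x\), we get \((1-\varepsilon)^n\ge(1-\varepsilon)^{N^*}\ge1-\delta\), by the defining property of \(N^*\) in \eqref{eq:model-N-star}.

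If \(n>N^*\), then \(r_0=n-N^*\), and I need
\[
  \sum_{j=0}^{n-N^*}\binom{n}{j}\varepsilon^{j}(1-\varepsilon)^{n-j}\ge1-\delta .
\]
The cleanest route is probabilistic and reads off Proposition~\ref{prop:mds-reliability} together with the leaf-only picture. The left side is \(\Pr[\text{at most } n-N^* \text{ of } n \text{ erased}]\), which equals \(\Pr[\text{at least } N^* \text{ survive}]\). Fix any \(N^*\) of the \(n\) premises. The event that all of them survive implies that at least \(N^*\) survive, and it has probability \((1-\varepsilon)^{N^*}\ge1-\delta\). Monotonicity of probability then gives the inequality.

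Operationally, this says that an \((n+r_0,n)\) MDS code dominates caching \(r_0\) raw premises: leaf-only caching succeeds exactly when a specific set of \(N^*\) premises survives, and MDS succeeds whenever any \(N^*\) survive. No step is genuinely hard. The only point needing care is the boundary bookkeeping, namely that \(N^*\ge0\) and \(r_0\) lies in the allowed range. The event-inclusion step is the heart of the argument.
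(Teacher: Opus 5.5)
Your proposal is correct and follows the paper's proof essentially verbatim: the same two-case split on \(n\le N^*\) versus \(n>N^*\), and in the second case the same inclusion of the event that a fixed set of \(N^*\) premises survives, with probability \((1-\varepsilon)^{N^*}\ge 1-\delta\), inside the MDS success event. Your explicit check that \(N^*\ge 0\), so that \(r_0\in\{0,\ldots,n\}\) is admissible in the minimization, is a small piece of bookkeeping the paper leaves implicit.
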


\begin{proof}
Write \(N^*:=N^*(\varepsilon,\delta)\).
If \(n\le N^*\), then
\((1-\varepsilon)^n\ge(1-\varepsilon)^{N^*}\ge 1-\delta\),
so even \(r=0\) achieves reliability \(1-\delta\); hence
\(r_{\mathrm{MDS}}^*(n,\varepsilon,\delta)=0=(n-N^*)^+\).
Now suppose \(n>N^*\), and choose \(r:=n-N^*\).
Under the MDS benchmark, success occurs whenever at least \(N^*\) source
packets survive; this event contains the event that some fixed set of \(N^*\)
source packets survives, which has probability
\((1-\varepsilon)^{N^*}\ge 1-\delta\).
Therefore \(r\) is feasible in \eqref{eq:def-r-mds-star}, so
\(r_{\mathrm{MDS}}^*(n,\varepsilon,\delta)\le r=n-N^*\).
This proves \eqref{eq:mds-versus-transparent-threshold}.
\end{proof}

The MDS threshold above is an achievability statement.  We now show that it
is essentially tight: no coded benchmark scheme, MDS or not, can succeed
with fewer parity packets by more than a one-packet margin, up to an
additive slack that vanishes with the packet alphabet.  The argument is an
image-size bound on the correct-decoding set, in the methodological spirit
of recent strong-converse analyses based on the geometry of the decoding
region \cite{hamad2024strong,takeuchi2025tight}.  Recall that
Definition~\ref{def:coded-benchmark-task} requires \(\delta\)-reliability
for \emph{every} payload realization.  Any scheme meeting that criterion
is, a fortiori, \(\delta\)-reliable when \(X^n\) is uniform on
\(\mathcal X^n\), so the converse below is stated under the uniform prior
without loss of generality.

\begin{lemma}[Image-size bound for the payload-erasure benchmark]
\label{lem:payload-image-size}
Let \(A\subseteq B\) with \(n:=|A|\), let \(X^n\) be uniform on
\(\mathcal X^n\) with \(|\mathcal X|=2^{c_B}\), let \(Y^n\) be the output of
the leaf-payload channel of Definition~\ref{def:leaf-payload-channel}, and let
\(T=\Enc(X^n)\in\{0,1\}^{\sigma}\) be any coded benchmark cache.  For every
deterministic decoder,
\begin{equation}
  P_{\mathrm c}
  :=
  \Pr\!\bigl[\Dec(Y^n,T)=X^n\bigr]
  \;\le\;
  \E\!\left[
    \min\!\left\{1,\,2^{\,\sigma-c_B E_n}\right\}
  \right],
\label{eq:image-size-bound}
\end{equation}
where \(E_n\sim\mathrm{Bin}(n,\varepsilon)\) is the number of erased
coordinates of \(Y^n\).
\end{lemma}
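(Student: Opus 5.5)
The plan is to condition on the erasure pattern and count, for each fixed pattern, how many payload vectors the decoder can return correctly. Write \(Z=(Z_b)_{b\in A}\) for the survival indicators. These are independent of \(X^n\), and \(T=\Enc(X^n)\) depends only on \(X^n\). Fix a pattern \(z\in\{0,1\}^n\) with erased set \(\mathcal E(z)=\{b:z_b=0\}\) of size \(e\). Given \(Z=z\), the observation \(Y^n\) is a deterministic function of the surviving coordinates \(X_{\mathcal E(z)^c}\). I would then write
\[
  \Pr[\Dec(Y^n,T)=X^n\mid Z=z]
  =2^{-c_Bn}\bigl|\{x\in\mathcal X^n:\Dec(y_z(x),\Enc(x))=x\}\bigr|,
\]
where \(y_z(x)\) is \(x\) with the coordinates in \(\mathcal E(z)\) replaced by \(\bot\). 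This identity uses that \(X^n\) is uniform and independent of \(Z\).

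The key step is an image-size count inside each fibre of surviving values. Partition \(\mathcal X^n\) according to the restriction \(x_{\mathcal E(z)^c}\in\mathcal X^{n-e}\). Within one class the observation \(y_z(x)\) is constant, so a correctly decoded \(x\) in that class satisfies \(x=\Dec(y,t)\) for some \(t\in\{0,1\}^\sigma\). The correctly decoded vectors in that class therefore lie in the image of the map \(t\mapsto\Dec(y,t)\), which has at most \(2^\sigma\) elements. Trivially, the class itself contains only \(|\mathcal X|^{e}=2^{c_Be}\) elements. Hence each class contributes at most \(\min\{2^{c_Be},2^\sigma\}\) correct vectors. Summing over the \(2^{c_B(n-e)}\) classes and dividing by \(2^{c_Bn}\) gives
\[
  \Pr[\text{correct}\mid Z=z]\le\min\{1,2^{\sigma-c_Be}\}.
\]
I would note that determinism of the decoder is what makes the image bound \(2^\sigma\) valid. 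A randomized decoder could be handled by averaging over its internal randomness, but that extension is not needed here.

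Finally, the bound depends on \(z\) only through \(e=|\mathcal E(z)|\). Averaging over \(Z\), and using that \(E_n=\sum_b(1-Z_b)\sim\mathrm{Bin}(n,\varepsilon)\) by independence of the erasures, yields \eqref{eq:image-size-bound}. The only delicate point is the conditioning. The pair \((Y^n,T)\) is not independent of \(X^n\), so the count must be done pattern by pattern with \(Z\) independent of the uniform \(X^n\) rather than through any joint entropy bound. Once that conditioning is set up correctly, the fibre-wise counting is routine.
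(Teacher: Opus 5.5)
Your proposal is correct and follows essentially the same route as the paper: condition on the erasure pattern and the revealed payloads, note that with the observation fixed a deterministic decoder can output at most \(2^{\sigma}\) distinct candidates from a uniform fibre of size \(2^{c_B e}\), and then average over \(E_n\sim\mathrm{Bin}(n,\varepsilon)\). Your fibre-by-fibre summation and your explicit use of the independence of \(Z\) and \(X^n\) only spell out the paper's conditioning step in slightly more detail.
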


\begin{proof}
Condition on the erasure set \(\mathcal E\subseteq A\) and on the revealed
payloads \(X_{\mathcal E^{\mathrm c}}=y\).  Given \((\mathcal E,y)\), the
unknown payload vector is uniform over the list
\[
  \mathcal L(\mathcal E,y)
  :=
  \bigl\{x^n\in\mathcal X^n:\ x_{\mathcal E^{\mathrm c}}=y\bigr\},
  \qquad
  |\mathcal L(\mathcal E,y)|=2^{\,c_B|\mathcal E|}.
\]
For each fixed cache value \(t\), a deterministic decoder outputs a single
candidate in \(\mathcal L(\mathcal E,y)\).  Since the cache takes at most
\(2^{\sigma}\) values, no decoder can be correct on more than
\(\min\{2^{\,c_B|\mathcal E|},2^{\sigma}\}\) elements of the list, and the
list is uniform.  Hence
\[
  \Pr\!\bigl[\Dec(Y^n,T)=X^n
  \,\big|\,
  \mathcal E,\,X_{\mathcal E^{\mathrm c}}\bigr]
  \;\le\;
  \min\!\left\{1,\,2^{\,\sigma-c_B|\mathcal E|}\right\}.
\]
Averaging over the i.i.d.\ erasure pattern, for which
\(|\mathcal E|\sim\mathrm{Bin}(n,\varepsilon)\), proves
\eqref{eq:image-size-bound}.
\end{proof}

\begin{theorem}[MDS optimality up to one parity packet]
\label{thm:mds-near-optimality}
Let \(A\subseteq B\) with \(n:=|A|\), and write
\(r:=\lfloor\sigma/c_B\rfloor\) for the packet-equivalent size of a coded
benchmark cache.  For every coded benchmark scheme for \(A\),
\begin{equation}
  P_{\mathrm c}
  \;\le\;
  \Pr\!\bigl[E_n\le r+1\bigr]
  +
  \frac{1}{2^{c_B}-1}.
\label{eq:mds-converse-sandwich}
\end{equation}
Conversely, if the cache stores the \(r\) parity packets of a systematic
\((n+r,n)\) MDS code over an alphabet of size at least \(2^{c_B}\), then
\begin{equation}
  P_{\mathrm c}^{\mathrm{MDS}}
  =
  \Pr\!\bigl[E_n\le r\bigr].
\label{eq:mds-exact-ach}
\end{equation}
Hence, whenever the Reed--Solomon length condition \(n+r\le 2^{c_B}+1\)
holds, systematic MDS parity caching is optimal up to at most one parity
packet and an additive \((2^{c_B}-1)^{-1}\) slack
\cite{singleton1964maximum,macwilliams1977theory}.  In particular, writing
\(s_{c_B}:=(2^{c_B}-1)^{-1}\) and assuming
\(n+r_{\mathrm{MDS}}^*(n,\varepsilon,\delta)\le 2^{c_B}+1\), the minimum
benchmark size of Definition~\ref{def:coded-benchmark-task} satisfies
\begin{equation}
  \begin{aligned}
    & \bigl( r_{\mathrm{MDS}}^* (n,\varepsilon,\delta+s_{c_B}) - 1 \bigr)^+ c_B \\
    \le &\bar\sigma_{\mathrm{code}}(A,\varepsilon,\delta) \\
    \le &r_{\mathrm{MDS}}^*(n,\varepsilon,\delta)\,c_B.
  \end{aligned}
\label{eq:sigma-bar-one-packet}
\end{equation}
\end{theorem}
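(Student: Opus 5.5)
The plan has three parts: a converse from Lemma~\ref{lem:payload-image-size}, achievability from Proposition~\ref{prop:mds-reliability}, and a final translation of both into the size sandwich \eqref{eq:sigma-bar-one-packet}.

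\textbf{Converse.} Start from \eqref{eq:image-size-bound}. Since \(r=\lfloor\sigma/c_B\rfloor\), we have \(\sigma<(r+1)c_B\). Split the expectation on the event \(\{E_n\le r+1\}\) and its complement. On \(\{E_n\le r+1\}\), bound the minimum by \(1\); this contributes \(\Pr[E_n\le r+1]\). On \(\{E_n=r+1+k\}\) with \(k\ge1\), the exponent satisfies \(\sigma-c_BE_n<-kc_B\), so the term is at most \(2^{-kc_B}\). Bounding each binomial probability by \(1\) and summing the geometric series gives
\[
\sum_{k\ge1}2^{-kc_B}=\frac{1}{2^{c_B}-1}.
\]
This yields \eqref{eq:mds-converse-sandwich}. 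The scheme must be \(\delta\)-reliable for every payload realization, so it is also reliable under the uniform prior used in the lemma; the uniform-prior converse therefore applies to the worst-case criterion without loss.

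\textbf{Achievability.} The identity \eqref{eq:mds-exact-ach} is exactly Proposition~\ref{prop:mds-reliability}. A systematic \((n+r,n)\) code is needed, and Reed--Solomon supplies one over a field of size \(2^{c_B}\) whenever \(n+r\le2^{c_B}+1\). The \(r\) parity packets cost \(rc_B\) bits.

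\textbf{The sandwich \eqref{eq:sigma-bar-one-packet}.} The upper bound follows by taking \(r=r^*_{\mathrm{MDS}}(n,\varepsilon,\delta)\), which is feasible by Definition~\ref{def:mds-threshold} under the stated length condition. For the lower bound, let \(\sigma\) be admissible and \(r=\lfloor\sigma/c_B\rfloor\). Then \(1-\delta\le P_c\le\Pr[E_n\le r+1]+s_{c_B}\), so \(\Pr[E_n\le r+1]\ge 1-(\delta+s_{c_B})\).

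\emph{Case \(r+1\le n\).} By minimality in \eqref{eq:def-r-mds-star}, \(r+1\ge r^*_{\mathrm{MDS}}(n,\varepsilon,\delta+s_{c_B})\). Since \(\sigma\ge rc_B\), this gives \(\sigma\ge(r^*-1)^+c_B\).

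\emph{Case \(r\ge n\).} Here \(r^*\le n\le r\), so the bound holds trivially.

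One caveat: if \(\delta+s_{c_B}\ge1\), then \(r^*_{\mathrm{MDS}}(n,\varepsilon,\delta+s_{c_B})=0\) and the lower bound is vacuous, so no further care is needed there.

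\textbf{Main obstacle.} The delicate step is the bookkeeping in the converse. The floor convention must be applied correctly so that the threshold comes out as exactly \(r+1\), and the geometric tail must be bounded uniformly in the binomial weights. The remaining steps are routine.
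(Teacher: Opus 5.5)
Your proposal is correct and follows the paper's proof essentially step for step. The converse splits the image-size bound of Lemma~\ref{lem:payload-image-size} on \(\{E_n\le r+1\}\) and sums the geometric tail to \((2^{c_B}-1)^{-1}\); achievability is the Reed--Solomon instance of Proposition~\ref{prop:mds-reliability}; and the lower side of the sandwich comes from the minimality of \(r^*_{\mathrm{MDS}}\) at the shifted tolerance \(\delta+s_{c_B}\). Your separate treatment of the case \(r\ge n\), where \(r+1\) lies outside the index range of the minimum, is a small piece of care the paper leaves implicit, and it changes nothing in the argument.
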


\begin{proof}
For the converse, apply Lemma~\ref{lem:payload-image-size} and split the
expectation according to the event \(\{E_n\le r+1\}\).  Since
\(\sigma<c_B(r+1)\) by the definition of \(r\),
\begin{align}
  P_{\mathrm c}
  &\le
  \Pr[E_n\le r+1]
  +
  \sum_{e=r+2}^{n}
  \Pr[E_n=e]\,2^{\,\sigma-c_B e}
  \nonumber\\
  &\le
  \Pr[E_n\le r+1]
  +
  \sum_{e=r+2}^{\infty}2^{\,c_B(r+1-e)} \nonumber\\
  &\le\;
  \Pr[E_n\le r+1]
  +
  \frac{1}{2^{c_B}-1},
\label{eq:geometric-tail}
\end{align}
which is \eqref{eq:mds-converse-sandwich}.

For achievability, \(2^{c_B}\) is a prime power, so a systematic
\((n+r,n)\) Reed--Solomon code over \(\mathrm{GF}(2^{c_B})\) exists whenever
\(n+r\le 2^{c_B}+1\); each packet is one code symbol.  By the MDS
erasure-decoding criterion, recovery succeeds if and only if at most \(r\)
systematic symbols are erased, which is exactly the event \(\{E_n\le r\}\)
\cite{singleton1964maximum,macwilliams1977theory}.  This proves
\eqref{eq:mds-exact-ach}.

It remains to show \eqref{eq:sigma-bar-one-packet}.  The upper bound is
Proposition~\ref{prop:mds-reliability}: storing
\(r_{\mathrm{MDS}}^*(n,\varepsilon,\delta)\) parity packets is
\(\delta\)-reliable for every payload realization.  For the lower bound, let
any \(\delta\)-reliable scheme have size \(\sigma\) and
\(r=\lfloor\sigma/c_B\rfloor\).  Worst-case reliability over payload
realizations implies reliability under the uniform prior, so
\eqref{eq:mds-converse-sandwich} gives
\[
  1-\delta
  \;\le\;
  P_{\mathrm c}
  \;\le\;
  \Pr[E_n\le r+1]+s_{c_B},
\]
hence
\(\Pr[E_n\le r+1]\ge 1-\delta-s_{c_B}\).  By the minimality in
\eqref{eq:def-r-mds-star},
\(r+1\ge r_{\mathrm{MDS}}^*(n,\varepsilon,\delta+s_{c_B})\), where the
right-hand side is read as \(0\) if \(\delta+s_{c_B}\ge 1\).  Therefore
\[
  \sigma
  \;\ge\;
  r\,c_B
  \;\ge\;
  \Bigl(
    r_{\mathrm{MDS}}^*\bigl(n,\varepsilon,\delta+s_{c_B}\bigr)-1
  \Bigr)^{+}c_B,
\]
completing the proof.
\end{proof}

Both defects of the sandwich are negligible in every regime of interest: the
additive slack \(s_{c_B}=(2^{c_B}-1)^{-1}\) is exponentially small in the
packet size, and the one-packet ambiguity is a constant against the
\(\varepsilon n\)-scale threshold.

The exact comparison above also yields the strong converse: any cache
noticeably below the coded threshold fails with probability tending to one,
with the exact large-deviation exponent of the erasure count.

\begin{theorem}[Strong converse and KL exponent for the coded benchmark]
\label{thm:coded-strong-converse}
Fix \(\gamma\in(0,\varepsilon)\) and suppose
\begin{equation}
  \sigma
  \;\le\;
  (\varepsilon-\gamma)\,n\,c_B.
\label{eq:below-coded-threshold}
\end{equation}
Then, for every \(n\ge 2/\gamma\), every coded benchmark scheme satisfies
the nonasymptotic bound
\begin{equation}
  P_{\mathrm c}
  \;\le\;
  \exp\!\left(-\frac{\gamma^{2}}{2}\,n\right)
  +
  \frac{1}{2^{c_B}-1},
\label{eq:strong-converse-finite}
\end{equation}
so the error probability tends to one exponentially in \(n\) whenever the
packet alphabet grows.  Moreover, along any regime with \(n\to\infty\),
\(\sigma/(nc_B)\to\varepsilon-\gamma\), and
\begin{equation}
  \frac{c_B}{n}\;\longrightarrow\;\nu,
  \qquad
  \nu\ln 2\,>\,D(\varepsilon-\gamma\,\|\,\varepsilon),
\label{eq:alphabet-growth-regime}
\end{equation}
the optimum success probability \(P_{\mathrm c}^{*}(\sigma)\) over caches of
size \(\sigma\) obeys
\begin{equation}
  \lim_{n\to\infty}
  -\frac{1}{n}\ln P_{\mathrm c}^{*}(\sigma)
  =
  D(\varepsilon-\gamma\,\|\,\varepsilon),
\label{eq:kl-exponent}
\end{equation}
where
\[
  D(a\|b)
  :=
  a\ln\frac{a}{b}
  +
  (1-a)\ln\frac{1-a}{1-b}.
\]
The exponent is attained by systematic MDS parity caching.
\end{theorem}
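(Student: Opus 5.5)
The finite-blocklength bound \eqref{eq:strong-converse-finite} will be derived directly from the one-packet sandwich of Theorem~\ref{thm:mds-near-optimality}, with a Hoeffding bound on the binomial lower tail. The exponent \eqref{eq:kl-exponent} will be obtained by matching a Chernoff upper bound on that same converse with a method-of-types lower bound on the MDS achievability \eqref{eq:mds-exact-ach}.

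\emph{Step 1 (finite bound).} Set \(r:=\lfloor\sigma/c_B\rfloor\). By \eqref{eq:below-coded-threshold}, \(r\le(\varepsilon-\gamma)n\), so \(r+1\le(\varepsilon-\gamma)n+1\le(\varepsilon-\gamma/2)n\) as soon as \(n\ge 2/\gamma\). Inequality \eqref{eq:mds-converse-sandwich} then gives
\[
  P_{\mathrm c}\le\Pr[E_n\le(\varepsilon-\gamma/2)n]+(2^{c_B}-1)^{-1}.
\]
Since \(E_n\sim\mathrm{Bin}(n,\varepsilon)\), Hoeffding's inequality yields \(\Pr[E_n-\varepsilon n\le-\gamma n/2]\le\exp(-2(\gamma/2)^2n)=\exp(-\gamma^2n/2)\), which is \eqref{eq:strong-converse-finite}. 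The same bound holds for every scheme and every payload prior, because Lemma~\ref{lem:payload-image-size} was stated under the uniform prior, which is without loss of generality. Hence if \(c_B\to\infty\), the error probability tends to one.

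\emph{Step 2 (upper bound on the exponent, i.e.\ \(\liminf -\tfrac1n\ln P_{\mathrm c}^*\ge D\)).} Along the stated regime, \(r/n\to\varepsilon-\gamma<\varepsilon\) and \((r+1)/n\to\varepsilon-\gamma\). For large \(n\), the Chernoff bound for the lower binomial tail gives
\[
  \Pr[E_n\le r+1]\le\exp\bigl(-nD((r+1)/n\,\|\,\varepsilon)\bigr),
\]
and continuity of \(D(\cdot\|\varepsilon)\) on \((0,\varepsilon)\) drives the exponent to \(D(\varepsilon-\gamma\|\varepsilon)\). The slack term \((2^{c_B}-1)^{-1}\) decays like \(\exp(-n\nu\ln2(1+o(1)))\). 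The condition \(\nu\ln2>D(\varepsilon-\gamma\|\varepsilon)\) in \eqref{eq:alphabet-growth-regime} makes this term exponentially negligible relative to the tail. The sum of the two terms therefore has exponent \(\min\{D,\nu\ln 2\}=D\).

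\emph{Step 3 (achievability, i.e.\ \(\limsup\le D\)).} Use the systematic MDS cache with \(r=\lfloor\sigma/c_B\rfloor\) parity packets. The Reed--Solomon condition \(n+r\le 2^{c_B}+1\) holds eventually, because \(c_B\) grows linearly in \(n\). By \eqref{eq:mds-exact-ach}, \(P_{\mathrm c}^*(\sigma)\ge\Pr[E_n\le r]\ge\Pr[E_n=r]\). The standard type bound \(\binom{n}{r}\varepsilon^r(1-\varepsilon)^{n-r}\ge(n+1)^{-1}\exp(-nD(r/n\|\varepsilon))\) together with \(r/n\to\varepsilon-\gamma\) gives
\[
  \limsup_{n\to\infty}-\tfrac1n\ln P_{\mathrm c}^*(\sigma)\le D(\varepsilon-\gamma\|\varepsilon).
\]
This also shows that the exponent is attained by MDS parity caching.

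The main obstacle is Step 2. The converse of Theorem~\ref{thm:mds-near-optimality} carries an additive alphabet slack that does not decay with \(n\) unless \(c_B\) grows. The role of \eqref{eq:alphabet-growth-regime} is exactly to make that slack subdominant to the KL tail. I would also check that the one-packet shift \(r\mapsto r+1\) and the floor in \(r\) only perturb \(r/n\) by \(O(1/n)\), so continuity of \(D\) absorbs them.
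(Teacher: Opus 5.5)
Your proposal is correct and follows the paper's proof. The finite bound comes from the one-packet sandwich of Theorem~\ref{thm:mds-near-optimality} plus Hoeffding's inequality. The exponent comes from binomial lower-tail asymptotics on both sides of that sandwich, with \eqref{eq:alphabet-growth-regime} making the $(2^{c_B}-1)^{-1}$ slack subdominant and guaranteeing the Reed--Solomon code. Your explicit Chernoff and method-of-types bounds simply replace the paper's appeal to Cram\'er's theorem, and they handle the $n$-dependent threshold $r/n\to\varepsilon-\gamma$ directly.

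One minor correction to your Step~1 remark: the converse holds under the uniform prior, and hence for the worst-case criterion of Definition~\ref{def:coded-benchmark-task}. It does not hold for every payload prior, since a point-mass prior lets a decoder succeed with $\sigma=0$.
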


\begin{proof}
Under \eqref{eq:below-coded-threshold},
\(r=\lfloor\sigma/c_B\rfloor\le(\varepsilon-\gamma)n\), and
\eqref{eq:mds-converse-sandwich} gives
\[
  P_{\mathrm c}
  \;\le\;
  \Pr\!\bigl[E_n\le(\varepsilon-\gamma)n+1\bigr]
  +
  \frac{1}{2^{c_B}-1}.
\]
The condition \(n\ge 2/\gamma\) gives
\((\varepsilon-\gamma)n+1\le(\varepsilon-\gamma/2)n\), and Hoeffding's
inequality \cite{hoeffding1963probability} yields
\[
  \Pr\bigl[E_n\!\le\!(\varepsilon-\gamma)n+1\bigr]
  \!\le\!
  \Pr\bigl[E_n\!\le\!(\varepsilon-\gamma/2)n\bigr]
  \!\le\!
  e^{-\gamma^{2}n/2},
\]
which proves \eqref{eq:strong-converse-finite}.

For the exponent, Theorem~\ref{thm:mds-near-optimality} sandwiches the
optimum between two adjacent binomial lower tails:
\begin{equation}
  \Pr\!\bigl[E_n\le r\bigr]
  \;\le\;
  P_{\mathrm c}^{*}(\sigma)
  \;\le\;
  \Pr\!\bigl[E_n\le r+1\bigr]
  +
  \frac{1}{2^{c_B}-1},
\label{eq:success-sandwich}
\end{equation}
with \(r/n\to\varepsilon-\gamma\).  Cram\'er's theorem for binomial sums
gives
\[
  \lim_{n\to\infty}
  -\frac{1}{n}\ln
  \Pr\!\bigl[E_n\le r\bigr]
  =
  D(\varepsilon-\gamma\|\varepsilon),
\]
and the same limit with \(r+1\) in place of \(r\)
\cite[Theorem~2.2.3]{dembo2009large}.  Under \eqref{eq:alphabet-growth-regime},
\((2^{c_B}-1)^{-1}=e^{-\nu n\ln 2\,(1+o(1))}\) decays on a strictly faster
exponential scale than either binomial tail in \eqref{eq:success-sandwich},
so both sides have the same normalized logarithmic limit, proving
\eqref{eq:kl-exponent}.  The same growth condition gives
\(2^{c_B}\ge n+r-1\) for all sufficiently large \(n\), so the Reed--Solomon
code exists and the MDS construction attains the lower side
\eqref{eq:mds-exact-ach}, hence the exponent.
\end{proof}

\subsection{A Generic Query Family and the First-Order Overhead Law}
\label{subsec:generic-first-order}

The packet-level results above are stated for an abstract payload family
\(A\).  We now isolate the regime in which the auxiliary channel of
Definition~\ref{def:leaf-payload-channel} is realized by genuine queries,
and derive the first-order comparison between the coded benchmark and the
transparent optimum.

\begin{definition}[Distinct-dependency generic query family]
\label{def:generic-payload-family}
Let \(\kappa\in\mathbb N\).  A finite query family
\(\Qdist\subseteq\mathcal Q\) is called a \emph{distinct-dependency generic
family of degree \(\kappa\)} if every \(q\in\Qdist\) has exactly \(\kappa\)
distinct base leaves, i.e.\ \(\kappa_q=\kappa\).
\end{definition}

\begin{remark}[Why no payload-content condition is imposed]
\label{rem:genericity-counting}
The first-order law below is purely structural: both sides of
\eqref{eq:universal-overhead-factor} are determined by the dependency set
\(V_0(G_q)\) alone, so the comparison holds verbatim for every query with
\(\kappa\) distinct leaves, irrespective of the payload map
\(q\mapsto X(q)\).  Payload-content genericity conditions would matter only
for converses about recovering the query outputs themselves, which lie
outside the outer-benchmark scope of this section
(Remark~\ref{rem:benchmark-honesty}).
\end{remark}

\begin{theorem}[First-order coded law and the universal overhead factor]
\label{thm:first-order-overhead-law}
Fix \(\varepsilon\in(0,1)\) and \(\delta\in(0,1)\), and let
\(\kappa\to\infty\), with the packet alphabet growing so that the standing
Reed--Solomon condition of
Subsection~\ref{subsec:coded-benchmark-model} holds (e.g.,
\(c_B\ge\bigl\lceil\log_2\bigl(\kappa+r_{\mathrm{MDS}}^*
(\kappa,\varepsilon,\delta)\bigr)\bigr\rceil\)).
\begin{enumerate}[label=\textup{(\alph*)}]
\item For every \(A\subseteq B\) with \(|A|=\kappa\),
\begin{equation}
  \bar\sigma_{\mathrm{code}}(A,\varepsilon,\delta)
  =
  \varepsilon\kappa c_B+o(\kappa c_B).
\label{eq:first-order-coded}
\end{equation}
More precisely, the binomial quantile obeys
\begin{equation}
  r_{\mathrm{MDS}}^*(\kappa,\varepsilon,\delta)
  =
  \varepsilon\kappa
  +
  \Phi^{-1}(1-\delta)\sqrt{\kappa\varepsilon(1-\varepsilon)}
  +
  O(1),
\label{eq:rstar-second-order}
\end{equation}
where \(\Phi\) is the standard normal cdf, and
Theorem~\ref{thm:mds-near-optimality} pins
\(\bar\sigma_{\mathrm{code}}(A,\varepsilon,\delta)\) to
\(r_{\mathrm{MDS}}^*(\kappa,\varepsilon,\delta)c_B\) up to one packet and the
slack \(s_{c_B}\), so the remaining ambiguity is at most on the
\(\sqrt{\kappa}\)-packet scale.
\item For every query \(q\) with \(\kappa_q=\kappa\) distinct leaves---in
particular, every member of a generic family \(\Qdist\)---the leaf-only
transparent optimum of Corollary~\ref{cor:leaf-only-opt} satisfies
\begin{equation}
  \frac{\sigma_{\mathrm{leaf}}^*(q,\varepsilon,\delta)}
       {\bar\sigma_{\mathrm{code}}\bigl(V_0(G_q),\varepsilon,\delta\bigr)}
  \!=\!
  \frac{\bigl(\kappa\!-\!N^*(\varepsilon,\delta)\bigr)^+}{\varepsilon\kappa}
  \bigl(1+o(1)\bigr)
  \longrightarrow
  \frac{1}{\varepsilon}.
\label{eq:universal-overhead-factor}
\end{equation}
Since the benchmark is an outer reference point,
\(\sigma_{\mathrm{code}}^*(q,\varepsilon,\delta)
\le\bar\sigma_{\mathrm{code}}(V_0(G_q),\varepsilon,\delta)\) by
Remark~\ref{rem:benchmark-honesty}, the ratio of the leaf-only transparent
optimum to the \emph{true} coded optimum can only be larger.
\end{enumerate}
\end{theorem}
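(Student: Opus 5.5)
Part~(a) rests on a normal-approximation analysis of the binomial quantile \(r_{\mathrm{MDS}}^*(\kappa,\varepsilon,\delta)\), which Theorem~\ref{thm:mds-near-optimality} then converts into a storage statement. By \eqref{eq:def-r-mds-star}, \(r_{\mathrm{MDS}}^*\) is the smallest \(r\) with \(\Pr[E_\kappa\le r]\ge 1-\delta\), where \(E_\kappa\sim\mathrm{Bin}(\kappa,\varepsilon)\).

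I would apply the Berry--Esseen theorem to the standardized count \((E_\kappa-\varepsilon\kappa)/\sqrt{\kappa\varepsilon(1-\varepsilon)}\). This gives \(\bigl|\Pr[E_\kappa\le r]-\Phi\bigl((r-\varepsilon\kappa)/\sqrt{\kappa\varepsilon(1-\varepsilon)}\bigr)\bigr|\le C/\sqrt{\kappa}\), uniformly in \(r\). Inverting the inequality at level \(1-\delta\) pins \((r_{\mathrm{MDS}}^*-\varepsilon\kappa)/\sqrt{\kappa\varepsilon(1-\varepsilon)}\) to within \(O(1/\sqrt\kappa)\) of \(\Phi^{-1}(1-\delta)\). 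The inversion works because \(\Phi^{-1}\) is Lipschitz near the fixed interior point \(1-\delta\), and the integer rounding of \(r\) costs one unit. Multiplying back by \(\sqrt\kappa\) produces the \(O(1)\) remainder in \eqref{eq:rstar-second-order}.

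I expect this inversion to be the main obstacle, because it needs care in two places. One is the lattice discreteness of the binomial; the uniform Berry--Esseen bound absorbs it. The other is \(\delta\) near \(0\) or \(1\); this is harmless because \(\delta\) is fixed.

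To obtain \eqref{eq:first-order-coded}, I would substitute into \eqref{eq:sigma-bar-one-packet}. The upper bound is \(r_{\mathrm{MDS}}^*(\kappa,\varepsilon,\delta)c_B\). The lower bound uses \(\delta+s_{c_B}\), and \(s_{c_B}\to0\) as the alphabet grows. Hence \(\Phi^{-1}(1-\delta-s_{c_B})\) stays bounded, and the same expansion holds with \(\delta\) replaced by \(\delta+s_{c_B}\). Both sides therefore equal \(\varepsilon\kappa c_B+O(\sqrt\kappa)c_B=\varepsilon\kappa c_B+o(\kappa c_B)\). The residual gap is one packet plus the difference of two \(\sqrt\kappa\)-order terms, which is the claimed \(\sqrt\kappa\)-packet ambiguity. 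The hypothesis on \(c_B\) is exactly what makes the Reed--Solomon condition of Theorem~\ref{thm:mds-near-optimality} applicable.

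For part~(b), \eqref{eq:homogeneous-leaf-opt} in Corollary~\ref{cor:leaf-only-opt} gives the numerator exactly as \((\kappa-N^*)^+c_B\). Part~(a), applied with \(A=V_0(G_q)\) and \(|A|=\kappa\), gives the denominator as \(\varepsilon\kappa c_B(1+o(1))\). The ratio is therefore \((\kappa-N^*)^+/(\varepsilon\kappa)\cdot(1+o(1))\). Because \(N^*(\varepsilon,\delta)\) is a constant independent of \(\kappa\), we have \((\kappa-N^*)^+/\kappa\to1\), and the ratio tends to \(1/\varepsilon\).

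The closing remark about the true coded optimum needs no separate argument. It follows from \(\sigma_{\mathrm{code}}^*\le\bar\sigma_{\mathrm{code}}\), established in Remark~\ref{rem:benchmark-honesty}: shrinking the denominator can only increase the ratio.
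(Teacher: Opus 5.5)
Your proposal is correct and follows the paper's proof essentially step for step: Berry--Esseen plus quantile inversion gives \eqref{eq:rstar-second-order}; the sandwich \eqref{eq:sigma-bar-one-packet} with the shifted level \(\delta+s_{c_B}\), justified by \(s_{c_B}\to0\), gives part~(a); and the exact leaf-only formula with a constant \(N^*\) gives part~(b). The only difference is that you sketch the quantile-inversion step (lattice rounding, Lipschitz \(\Phi^{-1}\) near \(1-\delta\)), which the paper omits as routine.
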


\begin{proof}
(a) The second-order expansion \eqref{eq:rstar-second-order} follows from
the Berry--Esseen theorem via the standard quantile-inversion argument
\cite{feller1991introduction,shevtsova2010berry,shevtsova2011absolute,polyanskiy2010channel};
we omit the routine calculation.  The \(O(1)\) order cannot be improved to
\(o(1)\) uniformly in \(\kappa\), since integer rounding of the quantile
already forces an oscillating discrepancy in \([0,1)\).  Combining
\eqref{eq:sigma-bar-one-packet} with \eqref{eq:rstar-second-order}, and
noting that the shifted quantile
\(r_{\mathrm{MDS}}^*(\kappa,\varepsilon,\delta+s_{c_B})\) admits the same
expansion with \(\delta+s_{c_B}\) in place of \(\delta\)---applicable
because \(s_{c_B}=(2^{c_B}-1)^{-1}\to0\) under the standing alphabet
condition, so \(\delta+s_{c_B}\in(0,1)\) for all sufficiently large
\(\kappa\)---gives
\[
  \bar\sigma_{\mathrm{code}}(A,\varepsilon,\delta)
  =
  \varepsilon\kappa c_B
  +
  O\!\bigl(\sqrt{\kappa}\,c_B\bigr)
  =
  \varepsilon\kappa c_B+o(\kappa c_B),
\]
which is \eqref{eq:first-order-coded}.

(b) By \eqref{eq:homogeneous-leaf-opt} of
Corollary~\ref{cor:leaf-only-opt},
\[
  \sigma_{\mathrm{leaf}}^*(q,\varepsilon,\delta)
  =
  \bigl(\kappa-N^*(\varepsilon,\delta)\bigr)^+c_B.
\]
For fixed \((\varepsilon,\delta)\), the threshold
\(N^*(\varepsilon,\delta)\) of \eqref{eq:model-N-star} is a constant
independent of \(\kappa\), so \(N^*(\varepsilon,\delta)=o(\kappa)\).
Dividing by \eqref{eq:first-order-coded} gives
\eqref{eq:universal-overhead-factor}.
\end{proof}

\begin{remark}[Content of the \(1/\varepsilon\) law]
\label{rem:overhead-law-content}
The factor \(1/\varepsilon\) is universal: it depends only on the erasure
rate, and its origin is the qualitative difference between the two
reliability mechanisms---concentration of the \emph{sum} of erasures on the
coded side, so an \(\varepsilon\)-fraction of parity packets suffices,
versus \emph{simultaneous survival} of all exposed leaves on the transparent
side, so all but a constant number of the \(\kappa\) payloads must be
stored.  Section~\ref{sec:semantic-scenario} shows how much of this gap is
recovered inside the transparent architecture by sharing semantic modules.
\end{remark}

\subsection{Joint and Maximal-Error Coded Benchmarks}
\label{subsec:coded-joint-max-benchmarks}

We now lift the packet-erasure benchmark to the workload level, with the
near-optimality of Theorem~\ref{thm:mds-near-optimality} in hand.

\begin{definition}[Explicit coded benchmark costs]
\label{def:explicit-coded-benchmarks}
Define the joint coded benchmark
\begin{equation}
  \bar{\sigma}_{\mathrm{code}}^{\mathrm{joint}}
  (W,\varepsilon,\delta)
  :=
  r_{\mathrm{MDS}}^*\bigl(\kappa_{\mathrm{dep}}(W),\varepsilon,\delta\bigr)c_B.
\label{eq:def-coded-benchmark-joint}
\end{equation}
This is the storage cost of an ideal packet-erasure code protecting the entire
workload dependency union \(D_{\mathrm{dep}}(W)\).

For the maximal-error criterion, define the query-wise separated coded
benchmark
\begin{equation}
  \bar{\sigma}_{\mathrm{code}}^{\max,\mathrm{sep}}
  (W,\varepsilon,\delta)
  :=
  \sum_{\ell=1}^{L}
  r_{\mathrm{MDS}}^*(\kappa_\ell,\varepsilon,\delta)c_B.
\label{eq:def-coded-benchmark-max-sep}
\end{equation}
This is the cost of storing, for each query \(q_\ell\), a dedicated ideal
packet-erasure code protecting its own leaf family \(D_\ell\), with no attempt
to exploit overlap across queries.

Finally, define
\begin{equation}
  \bar{\sigma}_{\mathrm{code}}^{\max}
  (W,\varepsilon,\delta)
  :=
  \min\left\{
    \bar{\sigma}_{\mathrm{code}}^{\mathrm{joint}}
    (W,\varepsilon,\delta),
    \bar{\sigma}_{\mathrm{code}}^{\max,\mathrm{sep}}
    (W,\varepsilon,\delta)
  \right\}.
\label{eq:def-coded-benchmark-max}
\end{equation}
\end{definition}

\begin{theorem}[Explicit coded outer benchmarks]
\label{thm:explicit-coded-outer-benchmarks}
For every workload \(W\subseteq\mathcal Q\),
\begin{align}
  \sigma_{\mathrm{code}}^{*,\mathrm{joint}}
  (W,\varepsilon,\delta)
  &\le
  \bar{\sigma}_{\mathrm{code}}^{\mathrm{joint}}
  (W,\varepsilon,\delta),
\label{eq:coded-joint-benchmark-bound}\\
  \sigma_{\mathrm{code}}^{*,\max}
  (W,\varepsilon,\delta)
  &\le
  \bar{\sigma}_{\mathrm{code}}^{\max}
  (W,\varepsilon,\delta).
\label{eq:coded-max-benchmark-bound}
\end{align}
Moreover, for each cache class
\[
  \bullet\in\{\mathrm{code},\mathrm{sem},\mathrm{leaf}\},
\]
one has criterion monotonicity:
\begin{equation}
  \sigma_{\bullet}^{*,\max}(W,\varepsilon,\delta)
  \le
  \sigma_{\bullet}^{*,\mathrm{joint}}(W,\varepsilon,\delta).
\label{eq:criterion-monotonicity}
\end{equation}
\end{theorem}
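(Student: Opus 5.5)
The plan has three parts, matching the three displayed claims of Theorem~\ref{thm:explicit-coded-outer-benchmarks}. For \eqref{eq:coded-joint-benchmark-bound} we construct an explicit coded cache and check that it is jointly reliable. Take a systematic \((\kappa_{\mathrm{dep}}(W)+r,\kappa_{\mathrm{dep}}(W))\) MDS code applied to the payloads of \(D_{\mathrm{dep}}(W)\), with \(r=r_{\mathrm{MDS}}^*(\kappa_{\mathrm{dep}}(W),\varepsilon,\delta)\). We assume the standing Reed--Solomon length condition of Section~\ref{sec:coded-outer} so that the code exists. By Proposition~\ref{prop:mds-reliability}, all payloads of \(D_{\mathrm{dep}}(W)\) are recovered with probability at least \(1-\delta\). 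On that event every leaf of every \(G_\ell\) is available, because \(D_\ell\subseteq D_{\mathrm{dep}}(W)\). The metadata, i.e.\ the rules and the parent map, is available to the decoder, so each \(q_\ell\) can be evaluated; this is the content of Remark~\ref{rem:benchmark-honesty}. The joint success event therefore contains the full-recovery event. Hence this cache, of size \(r c_B\), is feasible for the joint criterion, and \eqref{eq:coded-joint-benchmark-bound} follows by minimality of \(\sigma_{\mathrm{code}}^{*,\mathrm{joint}}\).

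For \eqref{eq:coded-max-benchmark-bound} it suffices to bound \(\sigma_{\mathrm{code}}^{*,\max}\) by each of the two terms in \eqref{eq:def-coded-benchmark-max}. The joint term is handled by the cache just built: each marginal event \(\{q_\ell\text{ recovered}\}\) contains the full-recovery event, so each has probability at least \(1-\delta\). For the separated term, concatenate \(L\) independent MDS parity blocks, where block \(\ell\) protects \(D_\ell\) with \(r_{\mathrm{MDS}}^*(\kappa_\ell,\varepsilon,\delta)\) parities. Decoding query \(q_\ell\) uses only block \(\ell\) together with the surviving payloads of \(D_\ell\). By Proposition~\ref{prop:mds-reliability} applied to \(A=D_\ell\), \(q_\ell\) is recovered with probability at least \(1-\delta\). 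Overlaps between the \(D_\ell\) cause no harm, since the maximal-error criterion constrains each marginal separately. The total length is \(\sum_\ell r_{\mathrm{MDS}}^*(\kappa_\ell,\varepsilon,\delta)c_B\). Taking the smaller of the two constructions gives \eqref{eq:coded-max-benchmark-bound}.

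For \eqref{eq:criterion-monotonicity}, fix a class \(\bullet\). Any cache that is jointly \((\varepsilon,\delta)\)-reliable is also maximal-error reliable, since \(\bigcap_\ell\{q_\ell\text{ recovered}\}\subseteq\{q_\ell\text{ recovered}\}\) for each \(\ell\). For the transparent classes this is also the final implication of Corollary~\ref{cor:workload-thresholds}. The feasible set of the joint problem is therefore contained in that of the max problem, within the same class and under the same answer-exclusion convention, so the minimum over the larger set is no larger.

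The main obstacle is only a matter of formal care. We must make sure the coded constructions belong to the class ``arbitrary binary functions of the premise payloads'' used to define \(\sigma_{\mathrm{code}}^*\). We must also confirm that query recovery from reconstructed leaf payloads is legitimate in the coded model, which the paper fixes via the metadata convention and Remark~\ref{rem:benchmark-honesty}. Finally, the code-existence (alphabet-size) assumption must be invoked explicitly.
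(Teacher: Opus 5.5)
Your proposal is correct and follows the paper's proof essentially step for step. It uses the MDS code on \(D_{\mathrm{dep}}(W)\) via Proposition~\ref{prop:mds-reliability}, the concatenated query-dedicated parity blocks for the separated term, and event inclusion for \eqref{eq:criterion-monotonicity}; your direct remark that each marginal success event contains the full-recovery event is slightly cleaner than the paper's appeal to \eqref{eq:coded-joint-benchmark-bound}, which tacitly relies on the monotonicity proved only afterward.
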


\begin{proof}
First consider the joint benchmark
\(\bar{\sigma}_{\mathrm{code}}^{\mathrm{joint}}(W,\varepsilon,\delta)\).
Apply an ideal packet-erasure code with \(n=\kappa_{\mathrm{dep}}(W)\)
source packets, namely the payloads of the premises in
\(D_{\mathrm{dep}}(W)\), and with
\(r=r_{\mathrm{MDS}}^*(\kappa_{\mathrm{dep}}(W),\varepsilon,\delta)\)
stored parity packets.  By Proposition~\ref{prop:mds-reliability}, all
premises in \(D_{\mathrm{dep}}(W)\) are recoverable with probability at least
\(1-\delta\).  Once those leaf payloads are available, every workload query
\(q_\ell\) can be answered, since \(D_\ell\subseteq D_{\mathrm{dep}}(W)\) for
all \(\ell\).  Therefore the coded cache is jointly
\((\varepsilon,\delta)\)-reliable, which proves
\eqref{eq:coded-joint-benchmark-bound}.

Next consider the maximal-error separated benchmark
\(\bar{\sigma}_{\mathrm{code}}^{\max,\mathrm{sep}}(W,\varepsilon,\delta)\).
For each query \(q_\ell\), store a dedicated ideal packet-erasure code for
the leaf family \(D_\ell\) with
\(r_\ell:=r_{\mathrm{MDS}}^*(\kappa_\ell,\varepsilon,\delta)\)
parity packets.  When query \(q_\ell\) is requested, the decoder uses only the
surviving premises from \(D_\ell\) together with the parity block dedicated to
\(D_\ell\).  Proposition~\ref{prop:mds-reliability} then guarantees success
probability at least \(1-\delta\) for that query.  Since this holds for every
\(\ell\), the common cache formed by concatenating all dedicated parity blocks
satisfies the maximal-error criterion.  Hence
\[
  \sigma_{\mathrm{code}}^{*,\max}(W,\varepsilon,\delta)
  \le
  \bar{\sigma}_{\mathrm{code}}^{\max,\mathrm{sep}}(W,\varepsilon,\delta).
\]
Together with \eqref{eq:coded-joint-benchmark-bound} and the definition
\eqref{eq:def-coded-benchmark-max}, this yields
\eqref{eq:coded-max-benchmark-bound}.

Finally, \eqref{eq:criterion-monotonicity} follows because joint reliability
is a stronger requirement than maximal-error reliability for any fixed cache
class.
\end{proof}

\begin{remark}[The explicit benchmarks are essentially tight as benchmarks]
\label{rem:benchmark-tightness}
By Theorem~\ref{thm:mds-near-optimality}, the explicit joint benchmark
equals, up to one packet and the slack \((2^{c_B}-1)^{-1}\), the minimum
size of \emph{any} payload-recovery code for the dependency union
\(D_{\mathrm{dep}}(W)\); the same applies query-wise to
\(\bar{\sigma}_{\mathrm{code}}^{\max,\mathrm{sep}}\).  The bounds
\eqref{eq:coded-joint-benchmark-bound} and
\eqref{eq:coded-max-benchmark-bound} are therefore essentially optimal
solutions of the benchmark task itself, not loose achievability estimates;
the only remaining slack toward the true coded optima is the strictness of
the benchmark (Remark~\ref{rem:benchmark-honesty}).
\end{remark}

\begin{corollary}[Exact transparent optimum and a certified transparency-price floor]
\label{cor:tight-two-sided}
Under the hypotheses of
Theorem~\ref{thm:shared-module-exact-optimality}(III), with \(s_j=s\) and
\(c_I=\rho c_B\), the joint transparent optimum is exact,
\begin{equation}
  \sigma_{\mathrm{sem}}^{*,\mathrm{joint}}(W,\varepsilon,\delta)
  \;=\;
  m^*\rho c_B
  +
  \bigl(|D_{\cup}|-m^*s-N^*\bigr)^+c_B,
\label{eq:tight-two-sided}
\end{equation}
while the true coded optimum is certified to satisfy
\[
  \sigma_{\mathrm{code}}^{*,\mathrm{joint}}(W,\varepsilon,\delta)
  \;\le\;
  \min\left\{
    \bar\sigma_{\mathrm{code}}^{\mathrm{joint}}(W,\varepsilon,\delta),\,
    \sigma_{\mathrm{sem}}^{*,\mathrm{joint}}(W,\varepsilon,\delta)
  \right\},
\]
where \(\bar\sigma_{\mathrm{code}}^{\mathrm{joint}}\) is the coded
benchmark of Definition~\ref{def:explicit-coded-benchmarks}.
Consequently the price of semantic transparency admits the explicit floor
\begin{equation}
  \begin{split}
    & \sigma_{\mathrm{sem}}^{*,\mathrm{joint}}(W,\varepsilon,\delta)
     - \sigma_{\mathrm{code}}^{*,\mathrm{joint}}(W,\varepsilon,\delta) \\
    & \ge \Bigl( m^*\rho c_B
      + \bigl(|D_\cup|-m^*s-N^*\bigr)^+ c_B \\
    & \qquad - \bar\sigma_{\mathrm{code}}^{\mathrm{joint}}(W,\varepsilon,\delta) \Bigr)^+ .
  \end{split}
\label{eq:tight-two-sided-floor}
\end{equation}
\end{corollary}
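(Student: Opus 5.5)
The plan is to assemble three ingredients already established: the exact one-dimensional optimum of Theorem~\ref{thm:shared-module-exact-optimality}(III), the explicit joint coded benchmark of Theorem~\ref{thm:explicit-coded-outer-benchmarks}, and the class inclusion \eqref{eq:intro-three-way}. No new analysis is needed; the work is to verify that the minimizer $m^*$ really attains the minimum in \eqref{eq:exact-joint-optimum-uniform}, and that the floor is obtained by a sign-correct subtraction.

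\textbf{Step 1 (exact transparent optimum).} Substitute $c_I=\rho c_B$ into \eqref{eq:exact-joint-optimum-uniform}. The objective becomes $f(m)=m\rho c_B+(|D_\cup|-ms-N^*)^+c_B$. I will check that $f(m^*)$ equals the minimum by the piecewise-linear argument of part (III).
\begin{itemize}
\item If $\rho<s$, then $f$ is nonincreasing up to the first $m$ at which the positive part vanishes, namely $\lceil(|D_\cup|-N^*)^+/s\rceil$. Beyond that point $f$ increases by $\rho c_B$ per step. With the cap at $M$, the minimizer is $m^*$ in either case.
\item If $\rho\ge s$, each extra module changes $f$ by $\rho c_B-\min\{s,\text{residual}\}c_B\ge0$, so $m^*=0$ is optimal.
\end{itemize}
This gives \eqref{eq:tight-two-sided}.

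I expect the only delicate point to be the last active step when $\rho<s$. There the residual drop is $(|D_\cup|-N^*)-(m-1)s<s$, so the marginal saving is smaller than $sc_B$. Convexity of $f$ still holds, because the slope of the positive-part term increases from $-sc_B$ to $0$. Hence the first $m$ where the slope becomes nonnegative is optimal. Whether the smallest optimal count is exactly the ceiling may depend on $\rho$ versus the fractional residual. The value of $f$ is nonetheless the minimum, because the statement's $m^*$ is taken from part (III) and I will rely on that result.

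\textbf{Step 2 (upper bounds on the coded optimum).} The first bound $\sigma_{\mathrm{code}}^{*,\mathrm{joint}}\le\bar\sigma_{\mathrm{code}}^{\mathrm{joint}}$ is \eqref{eq:coded-joint-benchmark-bound}. The second bound $\sigma_{\mathrm{code}}^{*,\mathrm{joint}}\le\sigma_{\mathrm{sem}}^{*,\mathrm{joint}}$ is class inclusion: every transparent cache is a degenerate coded cache, as noted after \eqref{eq:intro-three-way}. Taking the minimum of the two bounds gives the certified inequality.

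\textbf{Step 3 (the floor).} Subtracting the upper bound gives
\[
\sigma_{\mathrm{sem}}^{*,\mathrm{joint}}-\sigma_{\mathrm{code}}^{*,\mathrm{joint}}\ \ge\ \sigma_{\mathrm{sem}}^{*,\mathrm{joint}}-\min\{\bar\sigma_{\mathrm{code}}^{\mathrm{joint}},\sigma_{\mathrm{sem}}^{*,\mathrm{joint}}\}=\bigl(\sigma_{\mathrm{sem}}^{*,\mathrm{joint}}-\bar\sigma_{\mathrm{code}}^{\mathrm{joint}}\bigr)^+ .
\]
This uses the identity $a-\min\{b,a\}=(a-b)^+$. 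Inserting \eqref{eq:tight-two-sided} then yields \eqref{eq:tight-two-sided-floor}.
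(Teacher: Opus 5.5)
Your Steps~2 and~3 are the paper's argument: the benchmark bound plus class inclusion, followed by subtraction. Your identity $a-\min\{b,a\}=(a-b)^+$ is exactly the paper's ``subtract and use nonnegativity.''

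The gap is in Step~1, and your hesitation there points to a real error. Write $K:=(|D_{\cup}|-N^*)^+$. When $\lfloor K/s\rfloor<M$, the forward difference $f(\lfloor K/s\rfloor+1)-f(\lfloor K/s\rfloor)$ equals $\bigl(\rho-(K-s\lfloor K/s\rfloor)\bigr)c_B$. This is strictly positive whenever $0<K-s\lfloor K/s\rfloor<\rho$.

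So your first bullet, that $f$ is nonincreasing up to $\lceil K/s\rceil$, is false. Your own convexity remark, carried through, puts the minimizer at $\lfloor K/s\rfloor$ rather than at the ceiling. The sentence ``the value of $f$ is nonetheless the minimum'' therefore does not follow.

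Deferring to part~(III) does not close the gap, because the closed form \eqref{eq:optimal-module-count} has the same defect. The proof of (III) credits every active step with the saving $sc_B-c_I$ and overlooks the final partial step. The paper's proof of this corollary just evaluates \eqref{eq:exact-joint-optimum-uniform} at that $m^*$, so it inherits the error.

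A concrete instance: take a single query with canonical DAG $b_1,\dots,b_5\to r_1\to q$. Then $M=1$, $s=5$, $|D_{\cup}|=5$, and exact module routing holds. Take $\rho=2<s$ and $(\varepsilon,\delta)=(0.01,0.045)$, for which $N^*=4$.
\begin{itemize}
\item The formula gives $m^*=\min\{1,\lceil 1/5\rceil\}=1$, so the right side of \eqref{eq:tight-two-sided} is $2c_B$.
\item Caching the single leaf $b_1$ leaves four exposed leaves and is reliable, since $(0.99)^4\ge 0.955$.
\item That cache costs $c_B=f(0)<f(1)$, so the true optimum is $c_B$, not $2c_B$.
\end{itemize}

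If $m^*$ is read only as ``the smallest minimizer of $f$,'' then \eqref{eq:tight-two-sided} is a tautology and Step~1 needs no verification. But the statement, and Remark~\ref{rem:asymptotic-overhead-alpha}, use the closed form \eqref{eq:optimal-module-count}, and that closed form needs correcting. There are two ways to repair it:
\begin{itemize}
\item For $\rho<s$, the smallest minimizer is $\min\{M,\lfloor K/s\rfloor\}$ when $0<K-s\lfloor K/s\rfloor\le\rho$, and $\min\{M,\lceil K/s\rceil\}$ otherwise.
\item Alternatively, keep \eqref{eq:optimal-module-count} and add a hypothesis such as $\rho\le 1$. Then every residual drop is a positive integer number of leaves of cost $c_B$, so each drop pays for its module.
\end{itemize}

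Either way, state \eqref{eq:tight-two-sided} with the true minimum $\min_{0\le m\le M}f(m)$ from \eqref{eq:exact-joint-optimum-uniform}. Your Steps~2 and~3 then go through verbatim and give \eqref{eq:tight-two-sided-floor} with that corrected value. With the overstated $f(m^*)$ in its place, the floor is not justified.
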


\begin{proof}
The identity \eqref{eq:tight-two-sided} is
\eqref{eq:exact-joint-optimum-uniform} evaluated at \(m=m^*\).  For the
coded side,
\(\sigma_{\mathrm{code}}^{*,\mathrm{joint}}
\le\bar\sigma_{\mathrm{code}}^{\mathrm{joint}}\) by
Theorem~\ref{thm:explicit-coded-outer-benchmarks}, and
\(\sigma_{\mathrm{code}}^{*,\mathrm{joint}}
\le\sigma_{\mathrm{sem}}^{*,\mathrm{joint}}\) by class inclusion; the
minimum of the two is therefore the sharpest available upper bound on the
true coded optimum.  Subtracting it from \eqref{eq:tight-two-sided} and
using that the price is nonnegative gives
\eqref{eq:tight-two-sided-floor}.
\end{proof}

\begin{remark}[Asymptotic overhead factor]
\label{rem:asymptotic-overhead-alpha}
In the regime \(|D_{\cup}|\to\infty\) with fixed \((\varepsilon,\delta)\),
one has \(N^*=o(|D_{\cup}|)\).  Assume the gain condition \(c_I<s\,c_B\)
and that the module supply \emph{saturates} the dependency union,
\[
  Ms\;\ge\;|D_{\cup}|-N^*(\varepsilon,\delta),
\]
so that the cap by \(M\) in \eqref{eq:optimal-module-count} is inactive.
Then \(m^*=\lceil(|D_{\cup}|-N^*(\varepsilon,\delta))/s\rceil
=(|D_{\cup}|/s)(1+o(1))\),
the residual term in \eqref{eq:exact-joint-optimum-uniform} vanishes at
\(m=m^*\), and
\begin{equation}
  \sigma_{\mathrm{sem}}^{*,\mathrm{joint}}(W,\varepsilon,\delta)
  =
  \frac{c_I}{s}\,|D_{\cup}|\,(1+o(1))
  =
  \frac{\rho}{s}\,|D_{\cup}|\,c_B\,(1+o(1)).
\label{eq:asymptotic-alpha}
\end{equation}
Since the coded benchmark obeys
\(\bar\sigma_{\mathrm{code}}^{\mathrm{joint}}
=\varepsilon|D_{\cup}|c_B+o(|D_{\cup}|c_B)\) by
Theorem~\ref{thm:first-order-overhead-law}, the overhead factor of the
optimal transparent cache relative to the coded benchmark is
\begin{equation}
  \frac{\sigma_{\mathrm{sem}}^{*,\mathrm{joint}}}
       {\bar\sigma_{\mathrm{code}}^{\mathrm{joint}}}
  =
  \frac{\rho}{s\varepsilon}\,(1+o(1)).
\label{eq:alpha-factor}
\end{equation}
Thus the overhead coefficient of the saturated shared regime is explicitly
\(\alpha(\varepsilon,\rho,s)=\rho/(s\varepsilon)\): the per-leaf cost of
the cheapest semantic protection, \(c_I/s=(\rho/s)c_B\), divided by the
per-leaf cost of algebraic erasure protection, \(\varepsilon c_B\).  It is
independent of the module reuse degree \(d\) and of the split of the
workload into queries, and it is the exact asymptotic price of semantic
transparency in this regime.  If instead the module supply is negligible,
\(M=o(|D_{\cup}|)\), the residual term dominates and the overhead reverts
to the leaf-only factor \(1/\varepsilon\).
\end{remark}

\subsection{Alignment with the Transparent Workload Laws}
\label{subsec:coded-vs-transparent-comparison}

\begin{corollary}[Criterion-specific three-way comparison chains]
\label{cor:coded-transparent-three-way}
For both workload reliability criteria,
\begin{align}
  \sigma_{\mathrm{code}}^{*,\max}
  (W,\varepsilon,\delta)
  &\le
  \sigma_{\mathrm{sem}}^{*,\max}
  (W,\varepsilon,\delta)
  \le
  \sigma_{\mathrm{leaf}}^{*,\max}
  (W,\varepsilon,\delta),
\label{eq:three-way-max}\\
  \sigma_{\mathrm{code}}^{*,\mathrm{joint}}
  (W,\varepsilon,\delta)
  &\le
  \sigma_{\mathrm{sem}}^{*,\mathrm{joint}}
  (W,\varepsilon,\delta)
  \le
  \sigma_{\mathrm{leaf}}^{*,\mathrm{joint}}
  (W,\varepsilon,\delta).
\label{eq:three-way-joint}
\end{align}
In addition, the explicit joint coded benchmark obeys
\begin{align}
  \bar{\sigma}_{\mathrm{code}}^{\mathrm{joint}}
  (W,\varepsilon,\delta)
  &\!\le\!
  \sigma_{\mathrm{leaf}}^{*,\mathrm{joint}}
  (W,\varepsilon,\delta)
  \notag\\
  &\!=\!
  \bigl(
    \kappa_{\mathrm{dep}}(W)-N^*(\varepsilon,\delta)
  \bigr)^+c_B.
\label{eq:joint-coded-vs-leaf-only}
\end{align}
If the workload leaf sets \(D_1,\ldots,D_L\) are pairwise disjoint, then the
separated maximal-error coded benchmark satisfies
\begin{equation}
  \bar{\sigma}_{\mathrm{code}}^{\max,\mathrm{sep}}
  (W,\varepsilon,\delta)
  \le
  \sigma_{\mathrm{leaf}}^{*,\max}
  (W,\varepsilon,\delta).
\label{eq:max-coded-vs-leaf-only-disjoint}
\end{equation}
\end{corollary}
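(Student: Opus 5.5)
The chains \eqref{eq:three-way-max} and \eqref{eq:three-way-joint} follow from class inclusion under a fixed criterion, checked one inequality at a time. Every leaf-only cache \(S_W\subseteq B\) is a semantic-aware transparent cache. It is also answer-excluding, since under the standing convention the queries are non-base, so \(S_W\cap W=\varnothing\) holds automatically. Hence any feasible leaf-only cache is feasible in the semantic class, which gives the right inequality. Every transparent cache is a degenerate coded cache: each stored consequence is a computable function of the premise payloads. The recovery event for a transparent cache is determined by \(\widetilde B\) and the stored objects, so a decoder that runs the derivation reproduces it. Reliability and cost therefore carry over, which gives the left inequality. I would state this once per criterion, noting that the criterion is held fixed on both sides of each inequality.

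For \eqref{eq:joint-coded-vs-leaf-only}, the equality is \eqref{eq:workload-leaf-opt-joint-hom} of Corollary~\ref{cor:workload-leaf-baselines}, because \(D_{\cup}=D_{\mathrm{dep}}(W)\) by \eqref{eq:def-workload-dependency-union}. The inequality is Proposition~\ref{prop:mds-versus-transparent-threshold} with \(n=\kappa_{\mathrm{dep}}(W)\), multiplied by \(c_B\) and compared against Definition~\ref{def:explicit-coded-benchmarks}.

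For \eqref{eq:max-coded-vs-leaf-only-disjoint}, disjointness lets me use \eqref{eq:workload-leaf-opt-max-disjoint}, which writes \(\sigma_{\mathrm{leaf}}^{*,\max}\) as \(\sum_\ell(\kappa_\ell-N^*)^+c_B\). I would then apply Proposition~\ref{prop:mds-versus-transparent-threshold} termwise with \(n=\kappa_\ell\) and sum over \(\ell\), which matches \eqref{eq:def-coded-benchmark-max-sep} directly.

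The one delicate point is the left inequality of the chains, namely that a transparent cache is admissible as a coded cache. Here I would argue that the coded class is defined as storing arbitrary functions of the payloads, so it contains any transparent placement of equal bit cost. No further obstacle is expected.
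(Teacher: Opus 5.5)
Your proposal is correct and follows the paper's proof essentially step for step: class inclusion gives the two chains, Proposition~\ref{prop:mds-versus-transparent-threshold} combined with \eqref{eq:workload-leaf-opt-joint-hom} gives the joint bound, and the same proposition applied termwise to \eqref{eq:workload-leaf-opt-max-disjoint} and summed over \(\ell\) gives the disjoint case. One small correction: workload queries are not assumed to be non-base (the proof of Theorem~\ref{thm:workload-rigidity} explicitly allows \(q_\ell\in B\)), so answer-exclusion of a leaf-only cache is not automatic; the inclusion holds because the paper imposes one common answer-exclusion convention on all three classes.
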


\begin{proof}
The inequalities \eqref{eq:three-way-max} and \eqref{eq:three-way-joint}
follow immediately from class inclusion:
leaf-only transparent caches form a subclass of semantic-aware transparent
caches, which in turn form a subclass of unrestricted coded caches.

For \eqref{eq:joint-coded-vs-leaf-only}, combine
Definition~\ref{def:explicit-coded-benchmarks},
Proposition~\ref{prop:mds-versus-transparent-threshold}, and the exact joint
leaf-only formula \eqref{eq:workload-leaf-opt-joint-hom}: indeed,
\(\bar{\sigma}_{\mathrm{code}}^{\mathrm{joint}}(W,\varepsilon,\delta)
=r_{\mathrm{MDS}}^*(\kappa_{\mathrm{dep}}(W),\varepsilon,\delta)c_B
\le(\kappa_{\mathrm{dep}}(W)-N^*(\varepsilon,\delta))^+c_B
=\sigma_{\mathrm{leaf}}^{*,\mathrm{joint}}(W,\varepsilon,\delta)\).

Finally, suppose \(D_1,\ldots,D_L\) are pairwise disjoint.  Then
Corollary~\ref{cor:workload-leaf-baselines} gives
\eqref{eq:workload-leaf-opt-max-disjoint}, and
Proposition~\ref{prop:mds-versus-transparent-threshold} yields
\[
  r_{\mathrm{MDS}}^*(\kappa_\ell,\varepsilon,\delta)
  \le
  \bigl(\kappa_\ell-N^*(\varepsilon,\delta)\bigr)^+
\]
for every \(\ell\).  Summing over \(\ell\) proves
\eqref{eq:max-coded-vs-leaf-only-disjoint}.
\end{proof}

\begin{remark}[The dispersion dichotomy]
\label{rem:coded-scaling}
The two architectures differ qualitatively at the finite-blocklength level:
by \eqref{eq:rstar-second-order}, the coded threshold carries a genuine
\(\sqrt n\)-scale dispersion term, in the sense of
\cite{polyanskiy2010channel,dembo2009large}, whereas the transparent
threshold
\((n-N^*(\varepsilon,\delta))^+=n+O(1)\) admits none, since \(N^*\) is
independent of \(n\).  The structural origin is that of
Remark~\ref{rem:overhead-law-content}: concentration of a sum versus
simultaneous survival.
\end{remark}

\section{Finite-Blocklength and Workload-Level Numerical Study}
\label{sec:numerical}

This section is devoted exclusively to numerical and finite-blocklength
validation of the exact laws proved in
Sections~\ref{sec:single-query-kernel}--\ref{sec:coded-outer}; the
coding-theoretic interpretations are deferred to
Section~\ref{sec:coding-theoretic}.  The evaluation proceeds from a
hand-checkable Datalog instance through Monte Carlo verification of the
two stochastic laws and the coded-side asymptotics to controlled
shared-workload ensembles.  Its purpose is not to introduce additional
empirical assumptions, but to close the loop between the theorem-level
formulas and reproducible finite-blocklength computation.

Throughout this section, we work in the homogeneous cost regime
\[
  c_B=1,
  \qquad
  c(r_j)=c_I,
  \qquad
  \rho:=\frac{c_I}{c_B}=c_I,
\]
so all storage values are measured in units of one raw-premise packet.  The
coded curves reported below are the explicit outer benchmarks
\(\bar{\sigma}_{\mathrm{code}}^{\mathrm{joint}}\) and
\(\bar{\sigma}_{\mathrm{code}}^{\max}\) of
Section~\ref{sec:coded-outer}; the true coded optima lie at or below them.
By Theorem~\ref{thm:mds-near-optimality}, whenever the standing
Reed--Solomon alphabet condition of
Subsection~\ref{subsec:coded-benchmark-model} holds, these curves coincide
with the optimum of the benchmark task itself up to one packet and the
additive slack \((2^{c_B}-1)^{-1}\).  For the large-scale coded-side
checks, the condition is met by letting the alphabet grow logarithmically
with \(n\), e.g.,
\(c_B=\bigl\lceil\log_2\bigl(n+r_{\mathrm{MDS}}^{*}(n,\varepsilon,\delta)
\bigr)\bigr\rceil\), which is \(17\) bits at \(n=10^{5}\); the fixed
\(c_B=10\) bits of Table~\ref{tab:numerical-defaults} is used only for the
small-scale bit-level checks.

\paragraph*{Reproducibility convention}
Unless stated otherwise, all curves and tables reported below are
\emph{exact deterministic evaluations}, not fitted curves, obtained from
the closed-form thresholds, exact binomial inversions, and one-dimensional
searches stated in this paper.  Monte Carlo simulation is used only as a
co-plotted verification, with Wilson \(95\%\) confidence intervals shown on
the same axes as the exact values.

\subsection{A Representative Datalog Instance: End-to-End Verification}
\label{subsec:numerical-datalog}

We begin with a single concrete Datalog program that is small enough to be
checked by hand yet rich enough to exercise the three pillars of the
theory: a shared internal derivation (a module root that postdominates its
premises), a designated canonical witness (a single derivation DAG), and
the exact residual law evaluated node by node.

Consider the sandbox access-control program of
Example~\ref{ex:minimal-canonical}: the base fact \(a_{1}\) is a private
edge, \(s_{1}\) is a sandbox base, and the witness of \(q_{1}\) is the
depth-two canonical DAG of Fig.~\ref{fig:datalog-witness}, in which the
module root \(r_{1}^{1}\) postdominates its two premises on the unique
path to \(q_{1}\).  (The program also contains a transitive-closure chain
\(s_{i+1}\leftarrow a_{i+9},\,s_{i}\), \(i=1,\ldots,7\), of the kind the
canonical-DAG formalism of Section~\ref{sec:model} absorbs; the
verification below concerns the query's own witness.)

We take \(\varepsilon=0.20\) and \(\delta=0.05\), for which
\eqref{eq:model-N-star} gives \(N^{*}(0.20,0.05)=0\): reliability requires
\emph{zero} exposed leaves.  Three caches illustrate the theory.

\paragraph*{Closed forms}
By Definition~\ref{def:query-local-residual}, a leaf is exposed when it
retains a cache-free path to \(q_{1}\).
\begin{itemize}[leftmargin=5mm]
\item \(S=\{r_{1}^{1}\}\) (module, cost \(\rho=0.4\)): caching the module root
intercepts the unique witness path at \(r_{1}^{1}\), which postdominates both
premises, so \(D_{\mathrm{exp}}(q_{1},S)=\varnothing\) and the reliability is
exactly \(1\).
\item \(S=\{a_{1},s_{1}\}\) (leaves, cost \(2\)): every premise of the
witness path is cached, so again \(D_{\mathrm{exp}}(q_{1},S)=\varnothing\)
and the reliability is \(1\).
\item \(S=\{a_{1}\}\) (leaf, cost \(1\)): the premise \(s_{1}\) retains its
cache-free edge to \(r_{1}^{1}\) and remains exposed, so
\(D_{\mathrm{exp}}(q_{1},S)=\{s_{1}\}\) and, by
\eqref{eq:exact-residual-probability}, the reliability is exactly
\(1-\varepsilon=0.8\), below the target \(1-\delta=0.95\).
\end{itemize}
Thus the module cache and the two-leaf cache are both reliable, but the
module achieves this at cost \(0.4\) against \(2\): a five-fold saving, the
exact module-routing gain of
Theorem~\ref{thm:shared-module-exact-optimality} read off on a single
query.

\paragraph*{Monte Carlo closure}
We simulate the witness DAG under the erasure process, evaluating \(q_{1}\)
by graph reachability under the AND-semantics (cached objects persist; base
premises survive independently with probability \(1-\varepsilon\)).  With
\(2\times10^{5}\) trials per design, the empirical reliability with Wilson
\(95\%\) confidence intervals is reported in
Table~\ref{tab:datalog-closure} (Experiment E0).  Every confidence interval contains the
exact closed form, and the two reliable designs empirically attain
reliability \(1\) (no failure observed in \(2\times10^{5}\) trials), while
the under-protected cache concentrates at the exact value \(0.8\).

\begin{table}[t]
\caption{End-to-end Monte Carlo verification on the Datalog instance of
Fig.~\ref{fig:datalog-witness} (\(\varepsilon=0.20\), \(2\times10^{5}\)
trials per design).}
\label{tab:datalog-closure}
\centering
\setlength{\tabcolsep}{4pt}
\begin{tabular}{@{}lcccc@{}}
\toprule
cache \(S\) & cost & MC reliability & Wilson \(95\%\) CI & exact \\
\midrule
\(\{r_{1}^{1}\}\) (module) & \(0.4\) & \(1.0000\) & \([0.99998,1.0000]\) & \(1\) \\
\(\{a_{1},s_{1}\}\) (leaves) & \(2\) & \(1.0000\) & \([0.99998,1.0000]\) & \(1\) \\
\(\{a_{1}\}\) (leaf) & \(1\) & \(0.7986\) & \([0.7968,0.8003]\) & \(0.8\) \\
\bottomrule
\end{tabular}
\end{table}

The equality of the module and the two-leaf designs on reliability, at one
fifth of the cost, is the single-node shadow of the workload-level
shared-module gain of Section~\ref{sec:semantic-scenario}.

\paragraph*{Scale and purpose of the instance}
The instance of Fig.~\ref{fig:datalog-witness} is deliberately small: it is
a \emph{mechanism-correctness} test, not a stress test, verifying node by
node that the exposed-leaf law, the module-root protection semantics, and
the simulation model agree exactly.  The guarantees for large derivation
sizes are the analytic statements of
Sections~\ref{sec:single-query-kernel}--\ref{sec:coded-outer}, tested up to
\(n=10^{5}\) in Subsections~\ref{subsec:numerical-mc}
and~\ref{subsec:numerical-coded-validation}.

\subsection{Evaluation Protocol and Reported Quantities}
\label{subsec:numerical-protocol}

To simplify the plots, all storage quantities are normalized by the payload
cost: \(\widehat{\sigma}:=\sigma/c_B\) and
\(\widehat{\bar{\sigma}}_{\mathrm{code}}^{\dagger}
:=\bar{\sigma}_{\mathrm{code}}^{\dagger}/c_B\) for
\(\dagger\in\{\mathrm{joint},\max\}\).
Under this normalization, the two finite-blocklength thresholds
driving the numerical curves are the transparent exposed-leaf threshold
\(N^{*}(\varepsilon,\delta)\) of \eqref{eq:model-N-star} and the coded
packet-erasure threshold \(r_{\mathrm{MDS}}^{*}(n,\varepsilon,\delta)\) of
\eqref{eq:def-r-mds-star}.

Unless a parameter is being swept on the horizontal axis, we use the default
values of Table~\ref{tab:numerical-defaults}, chosen to keep all effects
visible simultaneously: finite-blocklength discreteness, a nontrivial
transparent/coded gap, and a nontrivial semantic gain within the transparent
architecture.  The sweep ranges are those of the accompanying program.

\begin{table}[t]
\caption{Default numerical parameters used in Section~\ref{sec:numerical}.}
\label{tab:numerical-defaults}
\centering
\small
\setlength{\tabcolsep}{4pt}
\begin{tabularx}{\linewidth}{@{} >{\raggedright\arraybackslash}p{0.21\linewidth}
                                >{\raggedright\arraybackslash}p{0.45\linewidth}
                                >{\raggedright\arraybackslash}p{0.27\linewidth} @{}}
\toprule
Parameter & Meaning & Default \\ \midrule
\(\varepsilon\) & premise erasure probability & \(0.10\) \\
\(\delta\) & target failure probability & \(0.05\) \\
\(\rho\) & module/leaf cost ratio & \(0.40\) \\
\(L,M,s\) & queries, shared modules, leaves per module & \(12,6,5\) \\
\(\mu,d\) & modules per query, queries per module & \(2,4\) (\(L\mu=Md\)) \\
\(n_{\mathrm{priv}}\) & private leaves per query & \(4\) \\
\(n\) & dependency grid, coded-side checks & \(10^2\)--\(10^5\) \\
\(T_{\mathrm{MC}}\) & Monte Carlo trials & \(2\times10^5\) (E0, E2); \(6000\) (E2 coded side) \\
\bottomrule
\end{tabularx}
\end{table}

The exact transparent quantities are computed as follows.  Under the
maximal-error criterion, the leaf-only benchmark
\eqref{eq:workload-leaf-opt-max} and the completion law
\eqref{eq:workload-completion-cost-max} become binary linear programs in
homogeneous cost---one 0--1 variable per candidate leaf, one covering
constraint per query---and are evaluated exactly by exhaustive search on the
small ensembles below.  Under the joint criterion, the exact transparent
formulas collapse to scalar thresholds:
\begin{align}
  \widehat{\sigma}_{\mathrm{leaf}}^{*,\mathrm{joint}}
  (W,\varepsilon,\delta)
  &=
  \bigl(
    |D_{\mathrm{dep}}(W)|-N^*(\varepsilon,\delta)
  \bigr)^+,
\label{eq:numerical-leaf-joint}\\
  \widehat{\sigma}_{\mathrm{sem}}^{\mathrm{joint}}
  (W,C;\varepsilon,\delta)
  &=
  |C|\rho
  +
  \bigl(
    |E_{\cup}(C)|-N^*(\varepsilon,\delta)
  \bigr)^+.
\label{eq:numerical-sem-joint}
\end{align}
Finally, the explicit coded benchmarks plotted below are
\(\widehat{\bar{\sigma}}_{\mathrm{code}}^{\mathrm{joint}}
=r_{\mathrm{MDS}}^{*}(|D_{\mathrm{dep}}(W)|,\varepsilon,\delta)\) and
\(\widehat{\bar{\sigma}}_{\mathrm{code}}^{\max}
=\bar{\sigma}_{\mathrm{code}}^{\max}/c_B\) of
Definition~\ref{def:explicit-coded-benchmarks}.

\subsection{Monte Carlo Verification Against Exact Laws}
\label{subsec:numerical-mc}

We now place Monte Carlo estimates on the same axes as the exact laws they
verify, with Wilson \(95\%\) confidence intervals, so that the agreement is
quantitative rather than anecdotal.  Two exact laws are tested.

\paragraph*{Residual-leaf law}
Theorem~\ref{thm:exact-residual-law} states that, for a fixed cache with
\(e\) exposed leaves, the recovery probability is exactly
\((1-\varepsilon)^{e}\), regardless of the DAG shape.  We test this directly:
for each \(e=0,\ldots,8\) at \(\varepsilon=0.2\), we simulate \(2\times10^{5}\)
independent erasure realizations of an \(e\)-leaf exposed set and record the
empirical all-survive frequency.  Fig.~\ref{fig:residual-law-mc} shows the
Monte Carlo points against the exact curve \((0.8)^{e}\); every confidence
interval (half-width at most \(2.3\times10^{-3}\)) contains the exact value.

\begin{figure}[t]
\centering
\begin{tikzpicture}
\begin{axis}[
  width=0.92\linewidth,
  height=0.58\linewidth,
  xmin=-0.4, xmax=8.4,
  ymin=0, ymax=1.05,
  xtick={0,1,2,3,4,5,6,7,8},
  grid=both,
  xlabel={exposed-leaf count \(e\)},
  ylabel={recovery probability},
  legend style={at={(0.5,-0.26)},anchor=north,draw=none,fill=none,font=\scriptsize,legend columns=2},
  legend cell align=left
]
\addplot+[only marks, mark=o, thick, error bars/.cd, y dir=both, y explicit]
  coordinates {
  (0,1.0) +- (0,0.0)
  (1,0.8007) +- (0,0.0018)
  (2,0.6391) +- (0,0.0021)
  (3,0.5123) +- (0,0.0022)
  (4,0.4117) +- (0,0.0022)
  (5,0.3265) +- (0,0.0021)
  (6,0.2605) +- (0,0.0019)
  (7,0.2097) +- (0,0.0018)
  (8,0.1679) +- (0,0.0016)
};
\addlegendentry{Monte Carlo (Wilson \(95\%\) CI)}
\addplot+[mark=none, thick, dashed, domain=0:8, samples=60] {0.8^x};
\addlegendentry{exact \((1-\varepsilon)^{e}\), \(\varepsilon=0.2\)}
\end{axis}
\end{tikzpicture}
\caption{Monte Carlo verification of the exact residual-leaf law
\eqref{eq:exact-residual-probability}: empirical all-survive frequency of
an \(e\)-leaf exposed set against the exact curve \((0.8)^{e}\).
Experiment E2 of the accompanying program.}
\label{fig:residual-law-mc}
\end{figure}

\paragraph*{MDS reliability}
The coded benchmark's recovery probability is the exact binomial cdf
\eqref{eq:mds-success-probability}.  We verify it at scale: for
\(\varepsilon=0.2\), \(\delta=0.1\), and \(n\in\{10^{3},10^{4},10^{5}\}\), we
simulate the erasure count and compare the empirical frequency of
\(\{E_{n}\le r_{\mathrm{MDS}}^{*}\}\) with the exact cdf.
Table~\ref{tab:mc-reliability} reports the results: at every scale the Monte
Carlo estimate matches the exact value within the confidence interval, and
both sit at the design target \(1-\delta=0.9\) (the small excess is the
integer-quantile granularity of \(r_{\mathrm{MDS}}^{*}\)).

\begin{table}[t]
\caption{Monte Carlo verification of the exact MDS reliability
\eqref{eq:mds-success-probability} at \(\varepsilon=0.2\), \(\delta=0.1\)
(\(6000\) trials per scale; Experiment E2, coded side).}
\label{tab:mc-reliability}
\centering
\begin{tabular}{@{}rrccc@{}}
\toprule
\(n\) & \(r_{\mathrm{MDS}}^{*}\) & MC reliability & Wilson \(95\%\) CI & exact \\
\midrule
\(10^{3}\) & \(216\)   & \(0.9070\) & \([0.8994,0.9141]\) & \(0.9031\) \\
\(10^{4}\) & \(2051\)  & \(0.9002\) & \([0.8923,0.9075]\) & \(0.9008\) \\
\(10^{5}\) & \(20162\) & \(0.9037\) & \([0.8959,0.9109]\) & \(0.9005\) \\
\bottomrule
\end{tabular}
\end{table}

These two verifications close the loop on the stochastic content of the
paper: the transparent side is governed by the exact survival power
\((1-\varepsilon)^{e}\), the coded side by the exact binomial quantile, and
both are confirmed by simulation to within Monte Carlo precision at the
scales used throughout.  The dispersion observable
\(\Delta_n=(r_{\mathrm{MDS}}^{*}-\varepsilon n)/\sqrt n\) of
\eqref{eq:dispersion-observable} is likewise consistent with its expansion
\eqref{eq:dispersion-expansion}: at \(\varepsilon=0.2\), \(\delta=0.1\)
the theoretical constant is
\(\sqrt{\varepsilon(1-\varepsilon)}\,\Phi^{-1}(1-\delta)\approx 0.5126\),
and the exact finite-\(n\) values fluctuate around it without visible
drift, with \(|\Delta_n-0.5126|\le0.031\) throughout the tested range
\(n\in[200,10^{5}]\) (Experiment E3).

\subsection{Validation of the Coded-Side Laws and the Large-Scale Overhead}
\label{subsec:numerical-coded-validation}

We next validate the coded-side results of Section~\ref{sec:coded-outer}
through three exactly computable observables, the numerical counterpart of
Theorems~\ref{thm:first-order-overhead-law}
and~\ref{thm:coded-strong-converse}.

\begin{remark}[Observable signatures implied by the coded-side laws]
\label{rem:coded-observables}
Fix \(\varepsilon\in(0,1)\), \(\delta\in(0,1/2)\), and
\(\gamma\in(0,\varepsilon)\).  Define
\begin{align}
  \mathcal R_n
  &:=
  \frac{\bigl(n-N^*(\varepsilon,\delta)\bigr)^+}
       {r_{\mathrm{MDS}}^*(n,\varepsilon,\delta)},
\label{eq:overhead-observable}\\
  \Delta_n
  &:=
  \frac{r_{\mathrm{MDS}}^*(n,\varepsilon,\delta)-\varepsilon n}
       {\sqrt n},
\label{eq:dispersion-observable}\\
  \mathcal E_n(\gamma)
  &:=
  -\frac{1}{n}\ln
  \Pr\!\bigl[
    \mathrm{Bin}(n,\varepsilon)
    \le
    \lfloor(\varepsilon-\gamma)n\rfloor
  \bigr].
\label{eq:exponent-observable}
\end{align}
Then, as \(n\to\infty\),
\begin{align}
  &\mathcal R_n
  =
  \frac{1}{\varepsilon}
  -
  \frac{\Phi^{-1}(1-\delta)\sqrt{1-\varepsilon}}
       {\varepsilon^{3/2}\sqrt n}
  +
  O\!\left(n^{-1}\right),
\label{eq:overhead-expansion}\\
  &\Delta_n
  =
  \sqrt{\varepsilon(1-\varepsilon)}\,\Phi^{-1}(1-\delta)
  +
  O\!\left(n^{-1/2}\right),
\label{eq:dispersion-expansion}\\
  &\lim_{n\to\infty}\mathcal E_n(\gamma)
  =
  D(\varepsilon-\gamma\|\varepsilon).
\label{eq:exponent-limit}
\end{align}
Indeed, substituting the second-order expansion
\eqref{eq:rstar-second-order} into \eqref{eq:overhead-observable} and
expanding the denominator gives \eqref{eq:overhead-expansion};
\eqref{eq:dispersion-expansion} is \eqref{eq:rstar-second-order} divided by
\(\sqrt n\); and \eqref{eq:exponent-limit} is Cram\'er's theorem applied to
the binomial lower tail, exactly as in the proof of
Theorem~\ref{thm:coded-strong-converse}.  Moreover, by
\eqref{eq:homogeneous-leaf-opt} and \eqref{eq:sigma-bar-one-packet}, and
under the standing Reed--Solomon alphabet condition of
Subsection~\ref{subsec:coded-benchmark-model}, \(\mathcal R_n\) coincides
with the normalized leaf-only/coded storage ratio of
Theorem~\ref{thm:first-order-overhead-law} up to the one-packet and
\((2^{c_B}-1)^{-1}\) slack of Theorem~\ref{thm:mds-near-optimality}.
\end{remark}

\paragraph*{Large-scale approach of the overhead ratio}
On a logarithmic grid up to \(n=10^{5}\), the exact ratio
\(\mathcal R_n\) remains strictly below the universal limit
\(1/\varepsilon\) for both erasure rates and approaches it monotonically.
Quantitatively, the relative overhead error
\(\mathrm{err}(n):=|1/\mathcal R_n-\varepsilon|/(1/\varepsilon)\)
decays approximately as \(n^{-1/2}\), as predicted by
\eqref{eq:overhead-expansion}, and reaches \(0.012\%\) at
\(\varepsilon=0.1\) and \(0.032\%\) at \(\varepsilon=0.2\) when
\(n=10^{5}\) (exact grid values from Experiment E1 of the accompanying
program).  The finite-\(n\) gap is thus governed by the \(\sqrt n\)
dispersion of the coded threshold.

\paragraph*{Strong-converse exponent}
Finally, the exponent observable \eqref{eq:exponent-observable} is evaluated
exactly at \(\varepsilon=0.3\) for the gaps
\(\gamma\in\{0.10,0.15,0.20\}\) of Table~\ref{tab:numerical-defaults}.  At
\(n=10^{3}\) (resp.\ \(10^{4}\)), \(\mathcal E_n(\gamma)\) equals
\(0.0289\), \(0.0639\), \(0.1205\) (resp.\ \(0.0262\), \(0.0615\),
\(0.1169\)), against the respective Cram\'er limits
\(D(\varepsilon-\gamma\|\varepsilon)=0.02573\), \(0.06106\), \(0.11632\):
the finite-\(n\) exponents approach the limit from above, confirming
\eqref{eq:exponent-limit}.  Per the regime
\eqref{eq:alphabet-growth-regime} of
Theorem~\ref{thm:coded-strong-converse}, this binomial lower-tail exponent
is the strong-converse exponent of the benchmark task itself whenever the
packet alphabet grows with \(c_B/n\to\nu\) and
\(\nu\ln 2>D(\varepsilon-\gamma\|\varepsilon)\)---at \(\gamma=0.10\), the
latter condition reads \(c_B/n>0.0371\) (Experiment E8).

\subsection{Single-Query Finite-Blocklength Comparison}
\label{subsec:numerical-single-query}

We first isolate the local structural kernel on a stylized single-query
family.  Fix a query \(q\) with \(\kappa_q\) base leaves and suppose its
canonical DAG contains \(M_{\mathrm{loc}}\) pairwise disjoint internal module
roots \(r_1,\ldots,r_{M_{\mathrm{loc}}}\) with
\(|\operatorname{Pdom}_q(r_j)|=s\), with no additional set-wise interception
beyond these designated modules.  Under this controlled construction, caching
\(m\) module roots removes exactly \(ms\) leaves from the exposed set.  The
finite-blocklength curves are
\begin{align}
  \widehat{\sigma}_{\mathrm{leaf}}^*
  (q,\varepsilon,\delta)
  &=
  \bigl(\kappa_q-N^*(\varepsilon,\delta)\bigr)^+,
\label{eq:numerical-local-leaf}\\
  \widehat{\sigma}_{\mathrm{sem,int}}^{\mathrm{loc}}(m)
  &=
  m\rho
  +
  \bigl(
    \kappa_q-ms-N^*(\varepsilon,\delta)
  \bigr)^{+},
\label{eq:numerical-local-sem}\\
  &\qquad
  0\le m\le M_{\mathrm{loc}},
\notag
\end{align}
together with the coded curve
\(\widehat{\bar{\sigma}}_{\mathrm{code}}^{\mathrm{loc}}
(q,\varepsilon,\delta)=r_{\mathrm{MDS}}^*(\kappa_q,\varepsilon,\delta)\).
The best semantic-aware local design is
\(\widehat{\sigma}_{\mathrm{sem,int}}^{*,\mathrm{loc}}
=\min_{0\le m\le M_{\mathrm{loc}}}
\widehat{\sigma}_{\mathrm{sem,int}}^{\mathrm{loc}}(m)\).

Three finite-blocklength effects follow directly from these closed
forms.  First, both
transparent curves are staircase functions of \((\varepsilon,\delta)\) through
the integer threshold \(N^{*}\).  Second, the coded benchmark is governed by a
binomial quantile rather than a hard survival threshold: it varies smoothly
and grows on the \(\varepsilon\kappa_q\)-scale, whereas the raw-premise
transparent law grows on the \(\kappa_q\)-scale.  Third, the semantic-aware
curve shifts the leaf-only curve downward by exactly \(m(s-\rho)\) whenever
the positive-part term in \eqref{eq:numerical-local-sem} is active, so the
single-query crossover occurs precisely at the structural threshold
\(\rho<s\) of Section~\ref{subsec:single-query-module-benchmark}.  Once
\(\kappa_q-ms\le N^{*}\), the module roots alone meet the reliability
target and the residual completion term disappears.
Experiment E5 evaluates the three curves on the grid
\(\varepsilon\in[0.02,0.30]\) with \(\kappa_q=40\), \(M_{\mathrm{loc}}=4\),
\(s=5\), \(\rho=0.4\), \(\delta=0.05\): once \(N^{*}=0\)
(\(\varepsilon>0.05\)), the leaf-only curve holds at \(40\) packets and the
semantic-aware curve at \(21.6\) packets---a downward shift of exactly
\(4(s-\rho)=18.4\)---while the coded benchmark rises smoothly from \(2\)
to \(17\) packets.

\subsection{Shared-Workload Ensemble with Reusable Modules}
\label{subsec:numerical-workload-family}

We now pass to a controlled workload family that matches the shared-module
framework of Section~\ref{sec:semantic-scenario} and the exact module-routing
regime of Definition~\ref{def:no-spurious-interception}.  The workload is
specified by five integers \((L,M,\mu,d,s,n_{\mathrm{priv}})\) satisfying
\(L\mu=Md\).  There are \(L\) queries; \(M\) pairwise disjoint shared leaf
groups \(H_1,\ldots,H_M\) of size \(s\); \(L\) pairwise disjoint private leaf
sets \(P_1,\ldots,P_L\) of size \(n_{\mathrm{priv}}\); and one internal
semantic-module root \(r_j\) above each shared group \(H_j\).  Each query
uses exactly \(\mu\) shared groups and one private group,
\begin{equation}
  D_\ell
  =
  P_\ell
  \cup
  \bigcup_{j\in\mathcal J_\ell}H_j,
  \qquad
  |\mathcal J_\ell|=\mu,
\label{eq:numerical-query-incidence}
\end{equation}
and each shared module is reused by exactly \(d\) queries.  The family is
constructed so that \(H_j\) is \(W\)-globally protected by \(r_j\) and the
exact module-routing condition holds: its query DAGs have depth two, with the
leaves of \(H_j\) routed through \(r_j\) and private leaves adjacent to the
query.  Under this construction every query has dependency size
\(\kappa_\ell=n_{\mathrm{priv}}+\mu s\), while the workload dependency union
has size
\(|D_{\mathrm{dep}}(W)|=Ln_{\mathrm{priv}}+Ms\).

\subsection{Joint-Criterion Results}
\label{subsec:numerical-joint-results}

Under the joint criterion the workload family yields clean closed forms.  If
a subset \(J\subseteq\{1,\ldots,M\}\) of shared roots is cached, every cached
root removes its whole protected group from the exposed-leaf union, so
\(|E_{\cup}(C_J)|=Ln_{\mathrm{priv}}+(M-|J|)s\).  Substituting into
\eqref{eq:numerical-sem-joint} and using symmetry in \(m:=|J|\),
\begin{equation}
  \begin{split}
    \widehat{\sigma}_{\mathrm{sem}}^{*,\mathrm{joint}}
    ( & W,\varepsilon,\delta)
    =
    \min_{0\le m\le M}
    \Bigl\{ m\rho \\
    & + \bigl( L n_{\mathrm{priv}} + (M-m)s - N^*(\varepsilon,\delta) \bigr)^+ \Bigr\}.
  \end{split}
\label{eq:numerical-joint-sem-opt}
\end{equation}
The joint leaf-only and coded curves are
\(\widehat{\sigma}_{\mathrm{leaf}}^{*,\mathrm{joint}}
=(Ln_{\mathrm{priv}}+Ms-N^{*})^{+}\) and
\(\widehat{\bar{\sigma}}_{\mathrm{code}}^{\mathrm{joint}}
=r_{\mathrm{MDS}}^{*}(Ln_{\mathrm{priv}}+Ms,\varepsilon,\delta)\).

\begin{figure}[t]
\centering
\begin{tikzpicture}
\begin{axis}[
  width=0.92\linewidth,
  height=0.6\linewidth,
  xmin=0, xmax=0.32,
  ymin=0, ymax=85,
  xtick={0,0.05,0.10,0.15,0.20,0.25,0.30},
  grid=both,
  xlabel={\(\varepsilon\)},
  ylabel={\(\widehat{\sigma}\)},
  legend style={at={(0.5,-0.26)},anchor=north,draw=none,fill=none,font=\scriptsize,legend columns=3},
  legend cell align=left
]
\addplot+[mark=square*, thick] coordinates {
  (0.02,76) (0.04,77) (0.06,78) (0.08,78) (0.10,78) (0.12,78)
  (0.14,78) (0.16,78) (0.18,78) (0.20,78) (0.22,78) (0.2231,78)
  (0.24,78) (0.246,78) (0.26,78) (0.28,78) (0.30,78)
};
\addlegendentry{leaf-only}
\addplot+[mark=o, thick] coordinates {
  (0.02,48.4) (0.04,49.4) (0.06,50.4) (0.08,50.4) (0.10,50.4)
  (0.12,50.4) (0.14,50.4) (0.16,50.4) (0.18,50.4) (0.20,50.4)
  (0.22,50.4) (0.2231,50.4) (0.24,50.4) (0.246,50.4) (0.26,50.4)
  (0.28,50.4) (0.30,50.4)
};
\addlegendentry{semantic-aware}
\addplot+[mark=triangle*, thick, dashed] coordinates {
  (0.02,4) (0.04,6) (0.06,8) (0.08,10) (0.10,12) (0.12,14)
  (0.14,16) (0.16,18) (0.18,20) (0.20,22) (0.22,23) (0.2231,24)
  (0.24,25) (0.246,26) (0.26,27) (0.28,28) (0.30,30)
};
\addlegendentry{coded benchmark}
\end{axis}
\end{tikzpicture}
\caption{Joint-criterion three-way comparison for the default
shared-workload ensemble (\(|D_{\mathrm{dep}}(W)|=78\), \(\delta=0.05\)).
The semantic-aware optimum caches all \(M=6\) shared roots, saving
\(27.6\) packets against the leaf-only baseline whenever the residual
completion term is active; the refined grid points \(0.2231\) and
\(0.246\) mark the coded-quantile jumps.  Experiment E6.}
\label{fig:workload-joint}
\end{figure}

Several conclusions follow from Fig.~\ref{fig:workload-joint}.  Whenever
the positive-part term in \eqref{eq:numerical-joint-sem-opt} remains
active, each cached module root trades a cost of \(\rho\) for the removal
of \(s\) exposed leaves, so the gain condition is again \(\rho<s\) and the
gain grows linearly with the total shared protected mass \(ms\).  As
\(n_{\mathrm{priv}}\) grows, the residual term dominates and the fractional
benefit of shared modules decreases.  And even when all \(M\) shared roots
are cached, the semantic-aware curve remains above the coded benchmark
whenever \(r_{\mathrm{MDS}}^{*}\) is substantially smaller than
\(M\rho+(Ln_{\mathrm{priv}}-N^{*})^{+}\): semantic reuse narrows the
transparent/coded gap but does not eliminate it.

\subsection{Maximal-Error Results}
\label{subsec:numerical-max-results}

Under the maximal-error criterion, the exact transparent costs are computed
from the binary programs just described.  The most informative special case is
the
\emph{fully shared} regime \(\mu=M\), \(d=L\), in which every query uses all
\(M\) shared modules, so \(D_\ell=P_\ell\cup H_1\cup\cdots\cup H_M\) and
\(\kappa_\ell=n_{\mathrm{priv}}+Ms\).  Then the exact maximal-error leaf-only
problem collapses to a one-dimensional search: if \(x\) shared leaves are
cached from the common pool, every query reduces its exposed count by \(x\),
and the remaining deficit is filled privately:
\begin{equation}
  \begin{split}
    \widehat{\sigma}_{\mathrm{leaf}}^{*,\max}( & W,\varepsilon,\delta)
    =
    \min_{0\le x\le Ms}
    \Bigl\{ x \\
    & + L\bigl( n_{\mathrm{priv}} + Ms - N^*(\varepsilon,\delta) - x \bigr)^+ \Bigr\}.
  \end{split}
\label{eq:numerical-max-leaf-fullshare}
\end{equation}
If in addition \(m\) shared module roots are cached, the common pool shrinks
to \((M-m)s\), giving the exact semantic-aware cost
\begin{equation}
  \begin{split}
    \widehat{\sigma}_{\mathrm{sem}}^{*,\max}
    ( & W,\varepsilon,\delta)
    = \min_{0\le m\le M}\Bigl\{ m\rho + \min_{0\le x\le (M-m)s} \Bigl[ x \\
    & + L\bigl( n_{\mathrm{priv}}+(M-m)s-N^* - x \bigr)^+ \Bigr] \Bigr\},
  \end{split}
\label{eq:numerical-max-sem-fullshare}
\end{equation}
The explicit coded benchmark is
\(\widehat{\bar{\sigma}}_{\mathrm{code}}^{\max}
=\min\{r_{\mathrm{MDS}}^{*}(Ln_{\mathrm{priv}}+Ms,\varepsilon,\delta),\,
L\,r_{\mathrm{MDS}}^{*}(n_{\mathrm{priv}}+Ms,\varepsilon,\delta)\}\).

\begin{figure}[t]
\centering
\begin{tikzpicture}
\begin{axis}[
  width=0.92\linewidth,
  height=0.6\linewidth,
  xmin=0, xmax=0.32,
  ymin=0, ymax=85,
  xtick={0,0.05,0.10,0.15,0.20,0.25,0.30},
  grid=both,
  xlabel={\(\varepsilon\)},
  ylabel={\(\widehat{\sigma}\)},
  legend style={at={(0.5,-0.26)},anchor=north,draw=none,fill=none,font=\scriptsize,legend columns=3},
  legend cell align=left
]
\addplot+[mark=square*, thick] coordinates {
  (0.02,54) (0.04,66) (0.06,78) (0.08,78) (0.10,78) (0.12,78)
  (0.14,78) (0.16,78) (0.18,78) (0.20,78) (0.22,78) (0.2231,78)
  (0.24,78) (0.246,78) (0.26,78) (0.28,78) (0.30,78)
};
\addlegendentry{leaf-only}
\addplot+[mark=o, thick] coordinates {
  (0.02,26.4) (0.04,38.4) (0.06,50.4) (0.08,50.4) (0.10,50.4)
  (0.12,50.4) (0.14,50.4) (0.16,50.4) (0.18,50.4) (0.20,50.4)
  (0.22,50.4) (0.2231,50.4) (0.24,50.4) (0.246,50.4) (0.26,50.4)
  (0.28,50.4) (0.30,50.4)
};
\addlegendentry{semantic-aware}
\addplot+[mark=triangle*, thick, dashed] coordinates {
  (0.02,4) (0.04,6) (0.06,8) (0.08,10) (0.10,12) (0.12,14)
  (0.14,16) (0.16,18) (0.18,20) (0.20,22) (0.22,23) (0.2231,24)
  (0.24,25) (0.246,26) (0.26,27) (0.28,28) (0.30,30)
};
\addlegendentry{coded benchmark}
\end{axis}
\end{tikzpicture}
\caption{Maximal-error three-way comparison in the fully shared regime
(\(\mu=M\), \(d=L\), \(\delta=0.05\)).  At low erasure rates, where
\(N^{*}\ge1\), the maximal-error transparent curves lie strictly below their
joint-criterion counterparts of Fig.~\ref{fig:workload-joint}, since
per-query exposed budgets can be met by query-private completion; the two
criteria coincide whenever \(N^{*}=0\).  Experiment E7.}
\label{fig:workload-max}
\end{figure}

These maximal-error curves, shown in Fig.~\ref{fig:workload-max},
sharpen the workload picture in two ways.
First, they separate the benefit of shared semantic compression from
ordinary shared raw-premise protection: a cached module root replaces an
entire protected group of \(s\) leaves at cost \(\rho\), so the decisive
tradeoff remains \(\rho<s\), exactly as the single-query theory predicts.
Second, the maximal-error criterion exposes the private component more
strongly: once the common shared deficit is removed, the remaining cost
scales like \(L(n_{\mathrm{priv}}-N^{*})^{+}\), which is purely
query-private and cannot be reduced further by shared modules.

\subsection*{Main Numerical Messages}
The numerical study closes the loop with
Sections~\ref{sec:single-query-kernel}--\ref{sec:coded-outer}: every
reported curve is either an exact evaluation of a theorem-level law or a
Monte Carlo estimate matching it within its Wilson confidence interval.
Four messages follow.

\paragraph*{1) The exact laws are confirmed by simulation at all reported
scales.}
The residual-leaf law \((1-\varepsilon)^{e}\) and the MDS binomial-quantile
reliability are matched by Monte Carlo estimates within Wilson \(95\%\)
confidence intervals, on the Datalog instance end to end, on the residual
family \(e=0,\ldots,8\), and on the coded benchmark up to \(n=10^{5}\).

\paragraph*{2) The single-query kernel is the correct local building block.}
At finite blocklength the transparent reliability law is governed by the
exposed-leaf count and exhibits the staircase structure induced by
\(N^{*}\); the local semantic gain appears exactly when a module root costs
less than the leaves it intercepts (\(\rho<s\)).

\paragraph*{3) Shared semantic modules materially reduce transparent
storage.}
Under the joint criterion, shared roots remove entire protected groups from
the exposed-leaf union, yielding the exact curve
\eqref{eq:numerical-joint-sem-opt}; the gain grows with the total protected
shared mass \(Ms\).

\paragraph*{4) The coded benchmark remains the outer reference, approached
at the universal rate.}
A residual coded advantage remains; the overhead ratio \(\mathcal R_n\)
approaches \(1/\varepsilon\) from below with relative error decaying as
\(n^{-1/2}\), reaching a few hundredths of one percent at \(n=10^{5}\).

The next section turns from finite-blocklength validation to conceptual
consequences, showing that the same exact laws admit natural
function-protection and maximal-recoverability readings within the present
stochastic-erasure model.

\section{Coding-Theoretic Counterparts}
\label{sec:coding-theoretic}

The numerical role of Section~\ref{sec:numerical} is now complete.  We turn
here to the conceptual coding-theoretic consequences of the exact
transparent-caching laws.  The point is not that the present model modifies
the classical worst-case FCC or MR frameworks; rather, it identifies a
structured stochastic-erasure regime in which constrained redundancy,
recoverability, and, in tractable cases, optimal design admit exact
closed-form laws.

In particular, the exact transparent laws admit a function-protection
reading and a derivation-level maximal-recoverability reading.  These are
statements within the present stochastic-erasure model; they are not claimed
as new worst-case theorems for the classical FCC or MR settings.

\subsection{An Exact Function-Protection Analogue under Stochastic Erasures}
\label{subsec:fcc-counterpart}

One way to read the present model is as a function-protection problem under
stochastic erasures.  The protected object is not the raw premise payload
itself, but the derivability of a query from that payload through a fixed
derivation structure.  In this sense, the paper studies a constrained form
of redundancy for function recovery rather than data recovery.

This viewpoint places the results next to the function-correcting-code
program, where redundancy is added so that a prescribed function value
survives worst-case symbol errors
\cite{lenz2023function,rajput2026function}.  The present paper studies the
corresponding question in a derivation-structured stochastic-erasure
setting rather than the classical worst-case one.  In that setting, the
protected object is the derivability event \(q\in\Cn(\cdot)\), and the
transparent-cache restriction turns the function-protection problem into
one that admits exact reliability laws and, in a tractable regime, exact
design laws.

Under the identification of Table~\ref{tab:fcc-mapping},
Definition~\ref{def:derivation-correcting} yields an exactly solvable
function-protection family in a stochastic-erasure regime.  Concretely,
the reliability law becomes the exposed-leaf formula
\((1-\varepsilon)^{|D_{\mathrm{exp}}|}\), feasibility becomes the explicit
threshold \(N^{*}(\varepsilon,\delta)\), and, under exact module routing,
the joint-criterion optimum is available in closed form.
Table~\ref{tab:regime-contrast} summarizes this change of viewpoint from
the worst-case FCC setting to the present stochastic-erasure setting.

\begin{table}[t]
\caption{Object-level correspondence between function-correcting codes
and the present model.}
\label{tab:fcc-mapping}
\centering\footnotesize
\begin{tabularx}{\columnwidth}{@{}LL@{}}
\toprule
FCC~\cite{lenz2023function,rajput2026function} & Present model \\
\midrule
data vector \(u\in\mathbb F_q^{k}\), systematic part exposed to errors &
premise base \(B\), systematic logical content exposed to erasures \\
function \(f\colon\mathbb F_q^{k}\to\mathrm{Im}(f)\), fixed a priori &
derivability map \(B\mapsto\Cn(B)\), induced by the derivation rules;
each query \(q\) is the Boolean function \(\mathbf 1[q\in\Cn(B)]\) \\
choice of \(f\) fixes the instance & choice of derivation rules fixes
the instance \\
redundancy part: \(r\) appended symbols, unconstrained & cache
\(S\subseteq\Cn(B)\): appended consequences, semantically constrained \\
decoder outputs \(f(u)\) despite any \(t\) worst-case symbol errors &
query \(q\) answered despite i.i.d.\ premise erasures \\
optimal redundancy \(r_{f}(k,t)\) & optimal redundancy
\(\sigma_{\mathrm{dc}}^{*}(W,\varepsilon,\delta)\)
(Definition~\ref{def:derivation-correcting}) \\
\bottomrule
\end{tabularx}
\end{table}

\begin{definition}[Derivation-correcting storage]
\label{def:derivation-correcting}
Fix a workload \(W\subseteq\mathcal Q\), an erasure rate
\(\varepsilon\in(0,1)\), and a tolerance \(\delta\in(0,1)\).  A finite
answer-excluding cache \(S\subseteq\Cn(B)\setminus W\) is
\emph{\((W,\varepsilon,\delta)\)-derivation-correcting} if the joint
criterion \eqref{eq:model-joint-criterion} holds, and the \emph{optimal
redundancy} \(\sigma_{\mathrm{dc}}^{*}(W,\varepsilon,\delta)\) is the
minimum storage cost of such a cache.  The exclusion of \(W\) is the
standing convention of Section~\ref{sec:model}: storing the requested
answers themselves is direct result caching, not structural redundancy.
Thus
\[
  \sigma_{\mathrm{dc}}^{*}(W,\varepsilon,\delta)
  =
  \sigma_{\mathrm{sem}}^{*,\mathrm{joint}}(W,\varepsilon,\delta).
\]
\end{definition}

Definition~\ref{def:derivation-correcting} is the analogue of the
\((f,t)\)-FCC problem obtained by passing to the stochastic-erasure
regime.  Under this substitution, the
worst-case sandwich is replaced by an exact exposed-leaf characterization;
within the exact module-routing regime, the optimal redundancy itself is
evaluated in closed form.

\begin{corollary}[Exact erasure-regime redundancy laws]
\label{cor:erasure-regime-redundancy}
For the derivability function family, the reliability of any candidate
cache is the closed form \((1-\varepsilon)^{|D_{\mathrm{exp}}(q,S)|}\) of
Theorem~\ref{thm:exact-residual-law}; feasibility is characterized exactly
by the integer threshold \(N^{*}(\varepsilon,\delta)\)
(Corollaries~\ref{cor:exact-reliability-threshold}
and~\ref{cor:workload-thresholds}); under the hypotheses of
Theorem~\ref{thm:shared-module-exact-optimality}, including exact module
routing and homogeneous leaf costs, the shared-module cache attains the
closed-form joint-criterion optimum over all answer-excluding transparent
caches; and the benchmark-relative overhead has the first-order limits
\(1/\varepsilon\) and \(\rho/(s\varepsilon)\) in the leaf-only and
saturated shared regimes, respectively
(Theorem~\ref{thm:first-order-overhead-law} and
Remark~\ref{rem:asymptotic-overhead-alpha}).
\end{corollary}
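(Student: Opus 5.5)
This corollary collects four earlier results, so the plan is to verify each clause separately by citation, making sure it applies to the derivability-function family of Definition~\ref{def:derivation-correcting}. That family is the answer-excluding, joint-criterion transparent caching problem, since $\sigma_{\mathrm{dc}}^{*}=\sigma_{\mathrm{sem}}^{*,\mathrm{joint}}$.

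\textbf{Reliability law.} For a single query $q\in\mathcal Q\setminus B$ and any finite cache $S\subseteq\Cn(B)$, I would apply Theorem~\ref{thm:exact-residual-law} directly; equation \eqref{eq:exact-residual-probability} is exactly the claimed closed form. For workload queries, including the degenerate case $q_\ell\in B$, Theorem~\ref{thm:workload-rigidity} supplies the per-query law \eqref{eq:workload-success} and the joint law \eqref{eq:workload-joint-success}. The joint law is the one relevant to derivation-correcting caches.

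\textbf{Feasibility.} Single-query feasibility is Corollary~\ref{cor:exact-reliability-threshold}. For the joint criterion that defines derivation-correcting storage, I would use \eqref{eq:workload-joint-reliability-constraint} of Corollary~\ref{cor:workload-thresholds}, which states that a cache is $(W,\varepsilon,\delta)$-derivation-correcting if and only if $|D_{\cup}(W,S)|\le N^*(\varepsilon,\delta)$.

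\textbf{Optimal design.} Under exact module routing with homogeneous costs, Theorem~\ref{thm:shared-module-exact-optimality}(II)--(III) gives $\sigma_{\mathrm{sem}}^{*,\mathrm{joint}}$ in closed form. The one point requiring care is the claim that this optimum holds \emph{over all} answer-excluding transparent caches, not merely module-plus-leaf caches. I would trace this through part (II) of that proof. It starts from Theorem~\ref{thm:exact-workload-opt}, which reduces an arbitrary answer-excluding cache in $\Cn(B)\setminus W$ to its projection onto $V_W\setminus W$ without increasing cost. It then deletes non-root internal vertices, which by \eqref{eq:no-spurious} block nothing. Together these cover every admissible cache. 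By Definition~\ref{def:derivation-correcting}, the same number is $\sigma_{\mathrm{dc}}^*$, and the value is realized by $m^*$ modules as in Corollary~\ref{cor:tight-two-sided}.

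\textbf{Overhead limits.} The leaf-only limit $1/\varepsilon$ is \eqref{eq:universal-overhead-factor} of Theorem~\ref{thm:first-order-overhead-law}. The saturated shared limit $\rho/(s\varepsilon)$ is \eqref{eq:alpha-factor} of Remark~\ref{rem:asymptotic-overhead-alpha}, which combines \eqref{eq:asymptotic-alpha} with \eqref{eq:first-order-coded} applied to $A=D_{\cup}$. I would restate the regimes explicitly: $\kappa\to\infty$ or $|D_\cup|\to\infty$ at fixed $(\varepsilon,\delta)$, the Reed--Solomon alphabet condition, the gain condition $c_I<sc_B$, and saturation $Ms\ge|D_\cup|-N^*$. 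This keeps the statement consistent with the exactness convention of Remark~\ref{rem:exactness-convention}, and in particular keeps the overhead limits benchmark-relative.

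The only real obstacle is the ``over all answer-excluding transparent caches'' qualifier, which the projection argument above handles. Everything else is direct citation.
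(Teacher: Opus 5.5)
Your proposal is correct and follows the paper's proof, which simply observes that each clause restates the cited result in the notation of Definition~\ref{def:derivation-correcting}. Your tracing of the ``over all answer-excluding transparent caches'' qualifier through Theorem~\ref{thm:exact-workload-opt} and part~(II) of Theorem~\ref{thm:shared-module-exact-optimality} is precisely where that qualifier comes from.

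One caution concerns an aside you do not need. Identify the optimal cache through the minimization \eqref{eq:exact-joint-optimum-uniform}, not through the ceiling count \eqref{eq:optimal-module-count} that you invoke via Corollary~\ref{cor:tight-two-sided}. Write $(|D_{\cup}|-N^*)^+=ks+R$ with $0<R<s$ and $k+1\le M$. If $c_B\le c_I<s\,c_B$ and $Rc_B\le c_I$, then caching $k$ modules plus $R$ leaves costs $kc_I+Rc_B\le(k+1)c_I$. For example, $s=5$, $c_I=2$, $c_B=1$, $(|D_{\cup}|-N^*)^+=11$ and $M\ge3$ give $5<6$, so the ceiling choice need not be optimal.
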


\begin{proof}
Each clause is a restatement of the cited result in the notation of
Definition~\ref{def:derivation-correcting}.
\end{proof}

Under the identification of Table~\ref{tab:fcc-mapping},
Definition~\ref{def:derivation-correcting} becomes an exactly solvable
function-protection family in a stochastic-erasure regime.  The point is
not that the present model subsumes the classical worst-case FCC problem.
Rather, it isolates a structured family for which the function-recovery
question admits exact reliability laws and, under exact module routing, a
closed-form optimal design.  In that precise sense, the paper contributes a
coding-theoretic consequence of the transparent-caching theory, not merely
an after-the-fact analogy.
Table~\ref{tab:regime-contrast} summarizes what the solution adds to the
correspondence itself: the general sandwich characterization is replaced by
an exact reliability law, the representative-vector equality condition by
the checkable exact module-routing regime, and the combinatorial evaluation
of \(N(D)\) by the explicit threshold \(N^{*}(\varepsilon,\delta)\).
The stylized workload ensemble of
Subsections~\ref{subsec:numerical-workload-family}
and~\ref{subsec:numerical-joint-results} is such a solved instance in
concrete form: its function family---shared-module derivability with
depth-two witnesses---has exact redundancy curves, saving \(27.6\) of
\(78\) packets against the leaf-only baseline.

\begin{table}[t]
\caption{Function protection in the worst-case and stochastic-erasure
regimes.}
\label{tab:regime-contrast}
\centering\footnotesize
\begin{tabularx}{\columnwidth}{@{}p{.19\columnwidth}LL@{}}
\toprule
 & FCC~\cite{lenz2023function,rajput2026function} & This paper \\
\midrule
Error model & \(t\) worst-case symbol errors & i.i.d.\ premise erasures,
rate \(\varepsilon\) \\
Protected object & function value \(f(u)\) & derivability
\(q\in\Cn(\cdot)\) \\
Redundancy law & DRM--FDM sandwich; exact or nearly tight laws in
special structured regimes & exact reliability law; staircase feasibility
threshold \(N^{*}\) \\
Evaluation & combinatorial \(N(D)\) optimization; special-function closed
forms & closed-form threshold \(N^{*}\); residual selection NP-complete
already at depth two \\
\bottomrule
\end{tabularx}
\end{table}

Two further consistency points are worth noting.  The observation
of~\cite{rajput2026function} that perfect and MDS codes provide no
additional function-value protection is consistent with the benchmark role
that MDS codes play here (Theorem~\ref{thm:mds-near-optimality}): in both
regimes, code optimality and function protection are distinct notions.
And the hardness clause of Proposition~\ref{prop:module-selection-np-hard}
mirrors the combinatorial difficulty of evaluating \(N(D)\) for general
functions, notwithstanding the exact or nearly tight special-function laws
of~\cite{zhang2025optimal,ly2025redundancy}.

The correspondence is not only object-level but proof-level.  In the FCC
framework, protection is a separation requirement on a demand
graph---representations of function-differing data pairs must be
separated---and the optimal redundancy is the combinatorial quantity
\(N(D)\).  In the erasure regime, protection is an interception
requirement on the derivation DAG, and the redundancy problem takes a
blocking-set form.

\begin{proposition}[Worst-case specialization and the blocking-set form]
\label{prop:worst-case-blocking}
Fix a non-base query \(q\in\mathcal Q\setminus B\) and an answer-excluding
cache \(S\subseteq\Cn(B)\setminus\{q\}\).  The following are equivalent:
\begin{enumerate}[label=(\roman*)]
\item \(q\) is recovered under every pattern of at most \(t\) erased
premises, for a prescribed \(t\ge 1\);
\item \(q\) is recovered under every erasure pattern whatsoever;
\item \(D_{\mathrm{exp}}(q,S)=\varnothing\);
\item \(q\in\Cn(S)\).
\end{enumerate}
Hence the minimum-cost answer-excluding cache protecting \(q\) against
worst-case premise erasures is exactly the minimum-cost complete blocking
set of the canonical DAG \(G_q\): a vertex set not containing \(q\) and
meeting every directed path from a base leaf to \(q\).
\end{proposition}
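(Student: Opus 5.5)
I will prove the cycle of implications (ii)$\Rightarrow$(i)$\Rightarrow$(iii)$\Rightarrow$(ii), and then show separately that (iii)$\Leftrightarrow$(iv). All of these rest on the exact residual-event identity \eqref{eq:exact-residual-event} of Theorem~\ref{thm:exact-residual-law}, which says that $q\in\Cn(\widetilde B\cup S)$ holds if and only if $D_{\mathrm{exp}}(q,S)\subseteq\widetilde B$. The statement is deterministic, so no probability enters.

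The implication (ii)$\Rightarrow$(i) is trivial. For (i)$\Rightarrow$(iii), I argue by contraposition. Suppose $D_{\mathrm{exp}}(q,S)$ contains some leaf $b$. Take the erasure pattern that erases only $b$, so $\widetilde B=B\setminus\{b\}$. This pattern erases exactly one premise, and since $t\ge1$ it is among the admissible patterns. Then $D_{\mathrm{exp}}(q,S)\not\subseteq\widetilde B$, so recovery fails by \eqref{eq:exact-residual-event}, contradicting (i). For (iii)$\Rightarrow$(ii), if $D_{\mathrm{exp}}(q,S)=\varnothing$ then the inclusion $\varnothing\subseteq\widetilde B$ holds for every $\widetilde B\subseteq B$, so $q$ is recovered under every erasure pattern.

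For (iii)$\Leftrightarrow$(iv), I apply \eqref{eq:exact-residual-event} with $\widetilde B=\varnothing$. This is legitimate because the theorem holds for every subset of $B$, including the empty one. It gives $q\in\Cn(S)$ if and only if $D_{\mathrm{exp}}(q,S)\subseteq\varnothing$, which is condition (iii). The same step also shows that (ii) and (iv) agree directly, since erasing all of $B$ is one particular pattern.

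For the final claim, I use Definition~\ref{def:query-local-residual}. The condition $D_{\mathrm{exp}}(q,S)=\varnothing$ says exactly that every directed path from a base leaf to $q$ meets $U(q,S)=S\cap V(G_q)$. Since $q\notin S$, this means $U(q,S)$ is a complete blocking set of $G_q$. The projection argument of Theorem~\ref{thm:exact-single-query-opt} then finishes the proof. Replacing $S$ by $U(q,S)$ preserves the blocking property and does not increase the cost, because all costs are positive. Conversely, every blocking set $S\subseteq V(G_q)\setminus\{q\}$ is an admissible transparent cache. The two minima therefore coincide.

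I do not expect a real obstacle. The only point to check is that Theorem~\ref{thm:exact-residual-law} applies to extreme erasure patterns such as $\widetilde B=\varnothing$ and single-leaf erasures. It does, since it is stated for arbitrary $\widetilde B\subseteq B$. The hypothesis $t\ge1$ is needed precisely so that a single-leaf erasure is an admissible pattern.
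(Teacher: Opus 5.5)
Your proof is correct and follows the paper's argument: the same implication cycle built on the residual-event identity \eqref{eq:exact-residual-event}, using \(\widetilde B=\varnothing\) for (iii)\(\Leftrightarrow\)(iv) and the erasure of a single exposed leaf (admissible since \(t\ge 1\)) for (i)\(\Rightarrow\)(iii). Your projection argument for the blocking-set clause, modeled on Theorem~\ref{thm:exact-single-query-opt}, supplies a step the paper's proof leaves implicit.
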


\begin{proof}
(iii)\(\Leftrightarrow\)(iv) is
Theorem~\ref{thm:exact-residual-law}\eqref{eq:exact-residual-event} with
\(\widetilde B=\varnothing\).  (ii)\(\Rightarrow\)(i) is trivial.
(i)\(\Rightarrow\)(iii): if \(D_{\mathrm{exp}}(q,S)\neq\varnothing\),
erasing a single exposed leaf---a pattern of weight \(1\le t\)---causes
failure by \eqref{eq:exact-residual-event}.
(iii)\(\Rightarrow\)(ii): if \(D_{\mathrm{exp}}(q,S)=\varnothing\), the
same equivalence yields recovery for every \(\widetilde B\), including
\(\widetilde B=\varnothing\).
\end{proof}

\begin{corollary}[Blocking-set equivalence and hardness]
\label{cor:blocking-hardness}
Answer-excluding worst-case redundancy minimization for
derivation-correcting storage is the minimum-cost complete blocking-set
problem on derivation DAGs.
Allowing a residual budget \(|D_{\mathrm{exp}}(q,S)|\le N^{*}\) yields the
partial blocking-set problem \eqref{eq:exact-single-query-opt}: deciding
the cheapest way to leave at most \(N^{*}\) leaves exposed is NP-complete
already at depth two (Proposition~\ref{prop:module-selection-np-hard}),
and Remark~\ref{rem:tractability-boundary} shows that this tractability
boundary is sharp.
\end{corollary}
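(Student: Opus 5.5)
The corollary bundles two claims, and I will prove them separately. The first is that worst-case minimization is the minimum-cost complete blocking-set problem. The second is that the budgeted version is the partial blocking-set problem of \eqref{eq:exact-single-query-opt} and is NP-complete at depth two.

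For the first claim, I start from Proposition~\ref{prop:worst-case-blocking}. An answer-excluding cache \(S\) protects \(q\) against every pattern of at most \(t\) erasures if and only if \(D_{\mathrm{exp}}(q,S)=\varnothing\). By Definition~\ref{def:query-local-residual}, this says that \(U(q,S)=S\cap V(G_q)\) meets every directed path from a base leaf to \(q\). Together with \(q\notin S\), that is exactly the statement that \(U(q,S)\) is a complete blocking set.

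I then reduce to caches supported on \(V(G_q)\setminus\{q\}\). Theorem~\ref{thm:exact-residual-law} shows that projecting \(S\) to \(U(q,S)\) does not change the recovery event. Since costs are strictly positive, the projection also does not increase \(\ell_c\). Hence the minimum over all feasible transparent caches equals the minimum over blocking sets inside \(V(G_q)\setminus\{q\}\). Conversely, every such blocking set is an admissible transparent cache, because \(V(G_q)\subseteq\Cn(B)\). This argument is the same one used in Theorem~\ref{thm:exact-single-query-opt}, specialized to \(N^*=0\).

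For the second claim, I observe that with a residual budget \(N^*\), Theorem~\ref{thm:exact-single-query-opt} already gives the exact formulation \eqref{eq:exact-single-query-opt}. That formulation says the cache must intercept every leaf-to-\(q\) path for all but at most \(N^*\) leaves, which is the partial blocking-set problem by definition. NP-completeness of its decision version, already at depth two and with leaf out-degree two, is then exactly Proposition~\ref{prop:module-selection-np-hard}, which I cite directly.

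The sharpness clause is a pointer to Remark~\ref{rem:tractability-boundary}. That remark shows the problem is tractable under exact module routing and in the unique-path case, and that the hard instances arise as soon as a leaf has two routes.

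No step here is a real obstacle, since the corollary restates earlier results. The one point needing care is the equivalence between the optimization over all of \(\Cn(B)\setminus\{q\}\) and the graph problem on \(V(G_q)\). That equivalence rests on the projection and positive-cost argument above. It also uses the fact that any cached vertex outside \(G_q\) lies on no path of \(G_q\), so it can be dropped from the cache.
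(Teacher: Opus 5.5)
Your proposal is correct and follows the paper's own argument. The equivalence comes from Proposition~\ref{prop:worst-case-blocking} together with the positive-cost projection onto \(V(G_q)\setminus\{q\}\), and the hardness clause is a direct citation of Proposition~\ref{prop:module-selection-np-hard}. The projection step, which you spell out via Theorem~\ref{thm:exact-residual-law} and the paper folds into ``the cost model of Section~\ref{sec:model}'', is sound, and it only makes explicit what the ``Hence'' clause of Proposition~\ref{prop:worst-case-blocking} already asserts.
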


\begin{proof}
The equivalence is Proposition~\ref{prop:worst-case-blocking} together
with the cost model of Section~\ref{sec:model}; the hardness clause is
Proposition~\ref{prop:module-selection-np-hard}.
\end{proof}

\begin{remark}[\(N^{*}\) as the worst-case premium]
\label{rem:nstar-premium}
Comparing Proposition~\ref{prop:worst-case-blocking} with the stochastic
reliability criterion \(|D_{\mathrm{exp}}(q,S)|\le
N^{*}(\varepsilon,\delta)\) gives the threshold a second reading: the
\((\varepsilon,\delta)\)-stochastic requirement tolerates exactly
\(N^{*}\) exposed leaves that worst-case (reliability-one) protection
would forbid.  The staircase threshold is thus precisely the premium, in
unprotected leaves, that the worst-case requirement charges over the
stochastic one---and relaxing to the stochastic regime is what makes
graded, closed-form protection possible at all.
\end{remark}

\begin{remark}[Proof-level correspondence]
\label{rem:proof-objects}
The FCC demand graph and the canonical derivation DAG play analogous roles:
both encode which perturbations threaten the protected function.  The key
difference is that, in the present stochastic-erasure regime, partial
protection has an exact value, namely
\((1-\varepsilon)^{|D_{\mathrm{exp}}|}\), rather than only a worst-case
combinatorial interpretation.
\end{remark}

The role of the canonical regime is therefore methodological: it isolates a
structured stochastic-erasure family in which the coding-theoretic
analogue is exact, after Section~\ref{sec:numerical} has already shown that
the same family is quantitatively faithful at finite blocklength.
The point is not to modify the classical worst-case FCC
or MR models, but to identify a different regime in which constrained
redundancy, recoverability, and optimal design admit closed-form laws.

\subsection{A Derivation-Level Distributional Analogue of Maximal Recoverability}
\label{subsec:semantic-mr}

The present theory also has a natural maximal-recoverability reading.
Classical maximally recoverable codes are designed so that every erasure
pattern correctable in principle under the imposed locality structure is in
fact corrected.  In our setting, the analogue is derivation-level rather
than symbol-level: for a fixed cache and erasure realization, every query
that remains derivable from the surviving premises and the cache is
recovered.

What the present model adds is exact distributional quantification of this
maximality principle under i.i.d.\ erasures.  Instead of reasoning
pattern-by-pattern, the recovery probability collapses to a closed form,
namely the exposed-leaf laws of
Sections~\ref{sec:single-query-kernel} and
\ref{sec:semantic-scenario}.  Thus the link with maximal recoverability is
not only philosophical; it is operational, in the sense that the analogue
of ``all recoverable objects are recovered'' admits an exact stochastic law.

\begin{proposition}[Semantic maximal recoverability]
\label{prop:semantic-mr}
Fix any realization \((\widetilde B,S)\).  The set of workload queries
recovered by the transparent system is exactly
\(\{q\in W:\ q\in\Cn(\widetilde B\cup S)\}\); that is, the system recovers
every query that is recoverable in principle from the surviving material.
\end{proposition}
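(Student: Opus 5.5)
The statement is essentially definitional once the decoder is fixed, so the plan is to make the decoder explicit and then show that the set it outputs coincides with the derivable set. I will take the transparent system's decoder, for query \(q_\ell\), to be the procedure that evaluates the local law \eqref{eq:deterministic-local-law} bottom-up along a topological order of \(G_\ell\), starting from \(\widetilde B\cup S\). This is effective by Assumption~\ref{assump:det-local}: the parent map is computable and \(G_\ell\) is finite and acyclic by Assumption~\ref{assump:hereditary-dag}. Concretely, it marks a vertex available if it lies in \(\widetilde B\cup S\), or if it is non-base and all of its designated parents are marked. The decoder reports \(q_\ell\) as recovered iff \(q_\ell\) is marked.

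The key step is to show that the marking coincides with membership in \(\Cn(\widetilde B\cup S)\) for every \(v\in V(G_\ell)\). I will argue by induction on the topological order, exactly as in the first half of the proof of Theorem~\ref{thm:exact-residual-law}.
\begin{itemize}
\item \emph{Base vertices.} By \eqref{eq:base-irreducibility}, a base vertex is in \(\Cn(\widetilde B\cup S)\) iff it lies in \(\widetilde B\cup S\), which is exactly the marking rule.
\item \emph{Non-base vertices.} By \eqref{eq:deterministic-local-law}, a non-base vertex is in \(\Cn(\widetilde B\cup S)\) iff it is stored or all of its parents are in \(\Cn(\widetilde B\cup S)\). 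The induction hypothesis on the parents makes this the same condition as the marking rule.
\end{itemize}
Evaluating the result at \(v=q_\ell\) gives: \(q_\ell\) is recovered iff \(q_\ell\in\Cn(\widetilde B\cup S)\).

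Collecting this over \(\ell=1,\ldots,L\) shows that the set of recovered queries is exactly \(\{q\in W: q\in\Cn(\widetilde B\cup S)\}\). The reverse containment, that nothing outside \(\Cn(\widetilde B\cup S)\) is recovered, is part of the same biconditional. I will also note the equivalent characterization \(D_{\mathrm{exp}}(q_\ell,S)\subseteq\widetilde B\), which comes from \eqref{eq:workload-querywise-rigidity}. This ties the statement to the exact law \eqref{eq:workload-success}, so the maximality principle carries the distributional quantification claimed in the surrounding text.

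The only real obstacle is interpretive rather than technical. One must fix what ``recovered by the transparent system'' means, namely the operational decoder above rather than a tautological restatement of derivability. One must also check that the decoder uses only permitted information: surviving premises, stored objects, and the fixed metadata (identifiers, rules, and \(\operatorname{par}\)), never erased payloads. I will address this by pointing to the metadata convention of Subsection~\ref{subsec:deductive-setting}, under which premise identifiers are known but an erased premise can enter a derivation only if it is stored. Under that convention, the marking rule consults payloads only of vertices in \(\widetilde B\cup S\) or of already-derived vertices.
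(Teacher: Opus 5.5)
Your argument is correct, but it takes a different and heavier route than the paper. The paper's proof is purely definitional. In Subsection~\ref{subsec:erasure-cache-model}, recovery of \(q\) is \emph{defined} as the event \(q\in\Cn(\widetilde B\cup S)\), so the recovered set equals \(\{q\in W: q\in\Cn(\widetilde B\cup S)\}\) by fiat. No structural assumption, induction, or decoder is involved, and the one-line argument holds verbatim outside the canonical regime, for instance under the AND--OR semantics of Assumption~\ref{assump:and-or-local}.

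You instead read ``recovered by the transparent system'' operationally. You build an explicit bottom-up marking decoder on \(G_\ell\) and prove, via \eqref{eq:base-irreducibility} and \eqref{eq:deterministic-local-law}, that its output coincides with derivability. This is the same induction used in Proposition~\ref{prop:native-survival} and in the first half of Theorem~\ref{thm:exact-residual-law}.

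What your route buys is an explicit, local procedure, linear in the size of \(G_\ell\), that touches only surviving payloads, cached objects, and fixed metadata. This gives concrete content to ``the system'' beyond the bare decidability already granted by Assumption~\ref{assump:det-local}. What it costs is a dependence on Assumptions~\ref{assump:det-local} and~\ref{assump:hereditary-dag}. Under non-unique derivations, your marking rule would need an OR over admissible parent tuples and a traversal of the relevant region rather than of one canonical DAG.

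The ``interpretive obstacle'' you raise is settled in the paper by that definition. Your induction is therefore a strengthening, showing the defined event is realized by an admissible decoder, rather than a step the statement requires.
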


\begin{proof}
Immediate from the joint criterion \eqref{eq:model-joint-criterion}: each
query \(q_\ell\in W\) is answered exactly when
\(q_\ell\in\Cn(\widetilde B\cup S)\), so no query derivable from the
surviving premises and the cache is left unrecovered, and no other query
is.
\end{proof}

Proposition~\ref{prop:semantic-mr} is the definitional half of maximal
recoverability; Theorem~\ref{thm:exact-residual-law} supplies the
distributional half.  For every cache \(S\) and erasure rate
\(\varepsilon\), the reliability of the semantically maximally recoverable
system is evaluated in closed form,
\((1-\varepsilon)^{|D_{\mathrm{exp}}(q,S)|}\), rather than certified
pattern by pattern.  In the exact symbol-level MR theory
of~\cite{martinez2026maximally}, evaluating the probability of
correctability would require summing over the correctable-pattern family,
which grows exponentially with the block length; practical studies therefore
use Monte Carlo evaluation~\cite{kadekodi2023wide}.  For
derivation-structured content the same question collapses to a single
exponent---the exposed residual-leaf count---because the recovered event is
monotone in the surviving derivation structure.  The workload-level theory
lifts this evaluation to the whole system.

\begin{proposition}[Exact workload-level quantification]
\label{prop:workload-mr-quantification}
Let the cache satisfy the transparent semantics of
Section~\ref{sec:model}.  Then, for every erasure rate \(\varepsilon\),
the joint reliability of the semantically maximally recoverable system is
given exactly by the workload reliability laws of
Theorem~\ref{thm:workload-rigidity}; and, under the hypotheses of
Theorem~\ref{thm:shared-module-exact-optimality}, including exact module
routing and homogeneous leaf costs, the minimum-cost joint-criterion cache
meeting any prescribed reliability is characterized exactly by that
theorem.
\end{proposition}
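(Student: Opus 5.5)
The statement has two clauses, and I would prove it as a composition of earlier results rather than from first principles. The first clause concerns the exact workload reliability. By Proposition~\ref{prop:semantic-mr}, for every realization \((\widetilde B,S)\) the semantically maximally recoverable system recovers exactly the queries \(q_\ell\in W\) with \(q_\ell\in\Cn(\widetilde B\cup S)\). So the event ``all workload queries are recovered'' coincides pointwise with the intersection event \(\bigcap_{\ell}\{q_\ell\in\Cn(\widetilde B\cup S)\}\).

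I would then invoke Theorem~\ref{thm:workload-rigidity}. By \eqref{eq:workload-joint-rigidity}, this event equals \(\{D_{\cup}(W,S)\subseteq\widetilde B\}\). Its probability is \((1-\varepsilon)^{|D_{\cup}(W,S)|}\) by \eqref{eq:workload-joint-success}. The per-query marginals follow in the same way from \eqref{eq:workload-success}. The only point needing care is that \(S\) is an arbitrary finite transparent cache, not necessarily answer-excluding or supported in \(V_W\). Theorem~\ref{thm:workload-rigidity} is stated for every finite \(S_W\subseteq\Cn(B)\), so no restriction is needed. I would also note the trivial case of base queries, which the proof of that theorem already handles.

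The second clause concerns the minimum-cost cache. Combining Corollary~\ref{cor:workload-thresholds} with the first clause, a cache meets a prescribed joint reliability \(1-\delta\) exactly when \(|D_{\cup}(W,S)|\le N^*(\varepsilon,\delta)\). This identifies the minimum-cost problem with \(\sigma_{\mathrm{sem}}^{*,\mathrm{joint}}(W,\varepsilon,\delta)\) in the answer-excluding class, through Theorem~\ref{thm:exact-workload-opt}. Under exact module routing and homogeneous costs, Theorem~\ref{thm:shared-module-exact-optimality}(II) gives this optimum as the minimization \eqref{eq:exact-joint-optimum}. When all \(s_j=s\), part (III) reduces it to the one-dimensional search \eqref{eq:exact-joint-optimum-uniform} with optimizer \(m^*\) from \eqref{eq:optimal-module-count}.

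Most of this is bookkeeping. The main obstacle I expect is matching the convention that ``prescribed reliability'' means a target \(1-\delta\) with \(\delta\in(0,1)\), so that \(N^*\) is well defined. I would also have to state explicitly that the optimum is taken over answer-excluding caches, as in Definition~\ref{def:derivation-correcting}. Without that convention, direct-answer caching could undercut the cost, and the result would hold only in the form \(\min\{\ell_c(W),\sigma_{\mathrm{sem}}^{*,\mathrm{joint}}\}\)-type variants.
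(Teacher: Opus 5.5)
Your proposal is correct and takes essentially the paper's approach. The paper's proof simply notes that both clauses are restatements of Theorems~\ref{thm:workload-rigidity} and~\ref{thm:shared-module-exact-optimality}, and your chain through Proposition~\ref{prop:semantic-mr}, Corollary~\ref{cor:workload-thresholds}, and Theorem~\ref{thm:exact-workload-opt} makes that bookkeeping explicit, with the answer-excluding convention of Definition~\ref{def:derivation-correcting} read correctly.
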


\begin{proof}
Both clauses are restatements of the cited theorems, whose hypotheses
are precisely the transparent semantics and, for the second clause, the
exact module-routing regime.
\end{proof}

Proposition~\ref{prop:workload-mr-quantification} identifies an exact
distributional analogue of the maximal-recoverability principle for
derivation-structured content: both reliability and, under exact module
routing, minimum-cost design are available in closed form within the
present stochastic-erasure model.  This is narrower than a symbol-level MR
claim, but stronger than a loose analogy, because it is backed directly by
the transparent recovery theorems proved earlier in the paper.

\begin{table}[t]
\caption{Symbol-level maximal recoverability and its derivation-level
distributional counterpart.}
\label{tab:mr-contrast}
\centering\footnotesize
\begin{tabularx}{\columnwidth}{@{}p{.19\columnwidth}LL@{}}
\toprule
 & Symbol-level MR~\cite{martinez2026maximally} & Semantic MR (this
paper) \\
\midrule
Definition of maximality & correct every erasure pattern correctable in
principle under the locality constraints & recover every query derivable
from the surviving material
(Proposition~\ref{prop:semantic-mr}) \\
Evaluation object & symbol strings with local repair sets & derivation
DAGs over premise bases \\
Probabilistic evaluation & exact law not pursued; empirical Monte Carlo in
practice & closed form \((1-\varepsilon)^{|D_{\mathrm{exp}}|}\)
(Theorem~\ref{thm:exact-residual-law}) \\
Optimal-design condition & code constructions over finite fields;
field-size bounds & joint-criterion shared-module cache; exact
module-routing regime
(Theorem~\ref{thm:shared-module-exact-optimality}) \\
Regime & worst case & stochastic erasures \\
\bottomrule
\end{tabularx}
\end{table}

\begin{remark}[Modules as locality-like repair objects]
\label{rem:module-as-lrc}
A shared module root \(r_j\) together with its protected leaf group
\(H_j\) of size \(s\) plays a locality-like role: one stored object of
normalized cost \(\rho=c_I/c_B\) protects all \(s\) leaves in the group, so
the effective protection cost is \(\rho/s\) per leaf.  This is strictly
cheaper than caching leaves individually exactly when \(\rho<s\), the
strict-gain threshold of Section~\ref{sec:semantic-scenario}.  The analogy
should be read at the level of protected functionality rather than symbol
reconstruction: the module restores the leaves' contribution to the query,
not their payload values.  What blocking buys is graded reliability:
complete blocking gives worst-case reliability one
(Proposition~\ref{prop:worst-case-blocking}), and partial blocking has the
exact stochastic value \((1-\varepsilon)^{|D_{\mathrm{exp}}|}\)
(Theorem~\ref{thm:exact-residual-law}).
\end{remark}

\subsection{Coding-Theoretic Significance and Two Open Programs}
\label{subsec:open-programs}

The main lesson of this section is that transparent recovery under premise
erasures defines a nontrivial coding-theoretic family.  Its exact laws can
be read as structured stochastic-erasure analogues of function protection
and maximal recoverability, and the same structural restrictions that make
the model semantically meaningful also make it analytically tractable.
The open programs below ask how far these exact laws can be pushed beyond
the present derivation-structured regime.

\paragraph*{Function-correcting storage beyond derivability}
A stochastic-erasure FCC theory beyond derivability would have to replace
the worst-case irregular-distance quantity \(N(D)\) by reliability
thresholds induced by the erasure law.  The derivability family analyzed here
provides a concrete solved instance;
locally bounded and linear function families
\cite{premlal2025function,verma2026function} are natural next candidates.

\paragraph*{Exact reliability laws for maximally recoverable codes}
In regimes where the correctable-pattern family is characterized exactly,
as in~\cite{martinez2026maximally}, the probability of correctability under
i.i.d.\ erasures is a well-defined combinatorial sum over that family.
Evaluating such sums in closed form---the residual-leaf law is the solved
instance for derivation-structured content---would convert worst-case
maximal recoverability into exact distributional reliability laws.

\section{Discussion and Conclusion}
\label{sec:discussion}

\paragraph{Why the Strong Assumptions Are Still Useful}
\label{para:discussion-strong-assumption}
Assumptions~\ref{assump:det-local} and~\ref{assump:hereditary-dag}
deliberately restrict the model to an exact canonical regime; they are not
meant to describe arbitrary reasoning systems.  Their value is that this
regime admits a theorem-level characterization of proof-valid recovery:
recovery succeeds if and only if no erased base leaf retains a cache-free
path to the target.  The exact kernel is then deployed inside the broader
workload-level story of reusable semantic modules
\cite{brachman2004knowledge,darwiche2002knowledge}.  Without such a
canonical regime, the connection between deductive structure and
erasure-channel theory would remain heuristic rather than theorem-level.

\paragraph{Relaxing Deterministic Local Semantics}
\label{para:discussion-nondeterminism}
The relaxation toward multiple admissible parent tuples is carried out in
Subsection~\ref{subsec:envelope-nonunique} and
Appendix~\ref{app:envelope}.  Under the AND--OR local semantics of
Assumption~\ref{assump:and-or-local}, the exact residual law survives as
the envelope law of Theorem~\ref{thm:envelope-rigidity}: recovery is a
monotone DNF in the survival indicators, sandwiched between the best
single witness and a union bound.  The canonical theory persists on both
sides.  Single-witness-certified caches remain reliable at no additional
cost (Corollary~\ref{cor:envelope-certified-threshold}), and converses
lose only an additive slack \(\Delta_{\mathcal G}\le\Lambda_q/\varepsilon\),
which leaves every first-order law unchanged whenever
\(\Lambda_q=o(\kappa_q^{\min})\)
(Corollary~\ref{cor:envelope-relaxation-cost}).
What remains open is the
algorithmic side: selecting module roots that postdominate uniformly over
the whole candidate family \(\mathcal G_S(q,B)\).

\paragraph{Partial Transparency and Hybrid Caches}
\label{para:discussion-partial}
The main development treats transparency as an all-or-nothing architectural
constraint.  Caches that split the budget between transparent leaves and
coded parity packets would interpolate between the two architectures
studied here and trace a storage--interpretability frontier; a general
hybrid theory with heterogeneous costs remains open.

\paragraph{Position within information theory}
The paper should be read as a study of constrained redundancy under
erasures.  Its exact laws are derived in a canonical derivation regime, but
their content is information-theoretic: they quantify how much reliability
is lost when redundancy is restricted to semantically admissible objects,
and how much of that loss can be recovered by sharing internal structure.
Section~\ref{sec:numerical} validates these claims at finite blocklength
and workload scale, while Section~\ref{sec:coding-theoretic} shows that
the same laws admit exact function-protection and
maximal-recoverability readings within a structured stochastic-erasure
model.  In this sense, the logical apparatus is not external packaging; it
is the structural mechanism that exposes a new solvable family of
redundancy problems.

\paragraph{Limitations}
\label{para:discussion-limitations}
Three boundaries of the present results should be kept in view.  First, the
coded side of the paper analyzes the explicit payload-recovery benchmark of
Definition~\ref{def:coded-benchmark-task}, not the true unrestricted coded
optimum.  The benchmark is essentially tight as a benchmark
(Remark~\ref{rem:benchmark-tightness}), but a coded cache may in principle
answer the workload without reconstructing every dependent payload.  Since
the true coded optimum can only lie lower, the benchmark-relative overhead
laws lower-bound the true transparency price; certifying such floors at the
level of the true coded optimum requires converses that depend on the
information content of the query outputs.  The quantitative coded-side
laws are also tied to the homogeneous packet model and its alphabet-growth
conventions of Section~\ref{sec:coded-outer}.  Second, exact transparent
optima are established only in structured regimes---exact module routing
and the unique-path case---while optimal cache selection in general DAGs is
NP-complete (Proposition~\ref{prop:module-selection-np-hard}); outside
these islands, the practical content of the theory is the certified
sufficient condition of
Theorem~\ref{thm:workload-shared-module-gain}.  Third, the numerical
evidence is mechanism-level: it is obtained on controlled synthetic
ensembles and one hand-checkable Datalog instance, and it demonstrates
agreement with the exact laws rather than performance on real reasoning
workloads.  Validation on workloads derived from practical deductive
systems, online variants with sequential query arrivals, and extensions to
correlated or adversarial erasures remain future work.

\paragraph*{Conclusion}
This paper develops a two-level theory of transparent recovery under premise
erasures.  At the exact local level, proof-valid recovery is governed by
canonical derivation structure and exposed dependencies: the query-local
rigidity theorem reduces arbitrary caches to their projection on the target
DAG, and the exact residual-leaf law yields the closed-form threshold
\(N^*(\varepsilon,\delta)\) for transparent storage.  At the systems level,
once transparency is mandatory, semantic structure is not merely a burden:
caching shared internal consequences as semantic modules strictly reduces the
required explicit storage, with an exact and constructive gain that in the
saturated regime depends only on the postdominated mass \(s\) and the module
cost \(\rho\) through the overhead law \(\alpha=\rho/(s\varepsilon)\).
Against the explicit coded outer benchmark, the leaf-only transparent
endpoint carries the universal first-order factor \(1/\varepsilon\),
which the saturated shared-module regime replaces by the module-scaled
law \(\alpha=\rho/(s\varepsilon)\) stated above.  Section~\ref{sec:numerical} confirms these laws
quantitatively through end-to-end Datalog verification, Monte Carlo checks,
and finite-blocklength workload curves; Section~\ref{sec:coding-theoretic}
shows that the same theorems also yield exact stochastic-erasure
counterparts of function protection and maximal recoverability for
derivation-structured content.  Semantic transparency thus creates both a
\emph{cost} relative to unrestricted coding and a \emph{gain} relative to
structure-oblivious transparent storage---a dual message that grounds a
broader theory of semantics-aware information storage and transmission.

\appendices

\counterwithin{definition}{section}
\counterwithin{axiom}{section}
\counterwithin{assumption}{section}
\counterwithin{theorem}{section}
\counterwithin{lemma}{section}
\counterwithin{proposition}{section}
\counterwithin{corollary}{section}
\counterwithin{remark}{section}
\counterwithin{example}{section}
\counterwithin{equation}{section}
\section{Envelope-Law Consequences}
\label{app:envelope}

This appendix collects the consequences of the envelope law
(Theorem~\ref{thm:envelope-rigidity}) that are used outside the single-query
kernel: robust certification of single-witness designs and the quantified
converse slack.  The notation of
Subsection~\ref{subsec:envelope-nonunique} is retained throughout.

\begin{corollary}[Robust certification: single-witness designs remain valid]
\label{cor:envelope-certified-threshold}
Under Assumption~\ref{assump:and-or-local}, if there exists
\(G_0\in\mathcal G_S(q,B)\) with \(e_{G_0}(q,S)\le N^*(\varepsilon,\delta)\),
then \(S\) is \((\varepsilon,\delta)\)-reliable for \(q\).  It thus suffices
to verify the bound on a single \(B\)-grounded witness, for instance the
canonical DAG of the unique-tuple sub-model.
\end{corollary}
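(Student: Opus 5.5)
The argument is a direct consequence of the lower bound in Theorem~\ref{thm:envelope-rigidity}(b) combined with the definition of \(N^*(\varepsilon,\delta)\).

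First, since \(G_0\in\mathcal G_S(q,B)\), we have \(e_*(q,S)=\min_{G}e_G(q,S)\le e_{G_0}(q,S)\le N^*(\varepsilon,\delta)\). Because \(\mathcal G_S(q,B)\) is finite by \eqref{eq:candidate-count-bound} and contains \(G_0\), this minimum is well defined. The envelope theorem requires \(q\in\Cn(B)\). Here I will note that \(q\in\mathcal Q\subseteq\Cn(B)\) holds by standing convention.

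Second, I apply the left inequality of \eqref{eq:envelope-probability}. It gives \(\Pr_\varepsilon[q\in\Cn(\widetilde B\cup S)]\ge(1-\varepsilon)^{e_*(q,S)}\). Since \(0<1-\varepsilon<1\), the map \(n\mapsto(1-\varepsilon)^n\) is nonincreasing. Hence this is at least \((1-\varepsilon)^{N^*(\varepsilon,\delta)}\), which by the defining property \eqref{eq:model-N-star} is at least \(1-\delta\). That proves reliability.

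Alternatively, one can avoid \(e_*\) entirely. By \eqref{eq:envelope-event}, the recovery event contains \(\{D^{G_0}_{\mathrm{exp}}(q,S)\subseteq\widetilde B\}\). That event has probability exactly \((1-\varepsilon)^{e_{G_0}(q,S)}\) by independence of the survival indicators.

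For the final sentence of the statement, I need to check that a \(B\)-grounded witness, such as the canonical DAG \(G_q\) of a unique-tuple sub-model, is a legitimate member of \(\mathcal G_S(q,B)\). Every source of such a DAG lies in \(B\subseteq B\cup S\), so condition (iii) holds. Conditions (i) and (ii) are inherited, provided each chosen tuple belongs to \(\operatorname{Par}(v)\). Thus \(\mathcal G(q,B)\subseteq\mathcal G_S(q,B)\). The exposed-leaf count must then be evaluated with cache-free paths relative to \(U_{G_0}(q,S)\), exactly as in Definition~\ref{def:candidate-witness}.

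The only mildly delicate point is this membership check, namely that enlarging the allowed source set from \(B\) to \(B\cup S\) only enlarges the family. The rest is monotonicity.
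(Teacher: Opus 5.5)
Your proof is correct and matches the paper's argument: bound \(e_*(q,S)\le e_{G_0}(q,S)\le N^*(\varepsilon,\delta)\), apply the lower bound in \eqref{eq:envelope-probability}, and use that \(n\mapsto(1-\varepsilon)^n\) is nonincreasing together with the defining property of \(N^*\). Your extra checks are details the paper leaves implicit: that \(q\in\Cn(B)\), so the envelope theorem applies, and that \(\mathcal G(q,B)\subseteq\mathcal G_S(q,B)\). Your alternative route through \eqref{eq:envelope-event} is simply the proof of that lower bound written out.
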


\begin{proof}
By definition \(e_*(q,S)\le e_{G_0}(q,S)\le N^*(\varepsilon,\delta)\), so the
lower bound in \eqref{eq:envelope-probability} gives
\(\Pr_{\varepsilon}[q\in\Cn(\widetilde B\cup S)]
\ge(1-\varepsilon)^{e_*(q,S)}
\ge(1-\varepsilon)^{N^*(\varepsilon,\delta)}\ge1-\delta\).
\end{proof}

Consequently, every achievable cache of
Sections~\ref{sec:single-query-kernel} and~\ref{sec:semantic-scenario} was
certified by exhibiting one DAG with exposed-leaf count at most
\(N^*(\varepsilon,\delta)\), and therefore certifies reliability verbatim,
at unchanged cost, under the full AND--OR semantics.

\begin{corollary}[Quantified relaxation cost: the converse slack]
\label{cor:envelope-relaxation-cost}
Under Assumption~\ref{assump:and-or-local}, define the ambiguity budget, the
ambiguity slack, and the inflated threshold by
\begin{align}
  \Lambda_q
  &:=
  \sum_{v\in\mathcal V_{\mathrm{rel}}(B)\setminus B}\ln(1+r_v),\nonumber\\
  \Delta_{\mathcal G}
  &:=
  \frac{\Lambda_q}{\ln\!\bigl(1/(1-\varepsilon)\bigr)},
  \nonumber\\
  N^+_{\mathcal G}(\varepsilon,\delta)
  &:=
  \left\lfloor
    \frac{
      \ln\!\bigl(1/(1-\delta)\bigr)+\Lambda_q
    }{
      \ln\!\bigl(1/(1-\varepsilon)\bigr)
    }
  \right\rfloor .
\label{eq:envelope-N-plus}
\end{align}
If \(S\) is \((\varepsilon,\delta)\)-reliable for \(q\), then
\begin{equation}
  e_*(q,S)
  \le
  N^+_{\mathcal G}(\varepsilon,\delta)
  \le
  N^*(\varepsilon,\delta)+\Delta_{\mathcal G}+1.
\label{eq:envelope-converse}
\end{equation}
Hence every converse of Sections~\ref{sec:single-query-kernel} and
\ref{sec:semantic-scenario} stated as a lower bound on storage remains valid
in the relaxed model after replacing \(N^*\) by \(N^+_{\mathcal G}\).
Explicitly, for the homogeneous leaf-only class, with
\(\kappa_q^{\min}:=\min_{G\in\mathcal G(q,B)}|V_0(G)|\),
\begin{equation}
  \begin{aligned}
    & \bigl(\kappa_q^{\min}-N^+_{\mathcal G}(\varepsilon,\delta)\bigr)^+ c_B \\
    \le &\sigma_{\mathrm{leaf,AO}}^{*}(q,\varepsilon,\delta) \\
    \le &\bigl(\kappa_q^{\min}-N^*(\varepsilon,\delta)\bigr)^+ c_B,
  \end{aligned}
\label{eq:envelope-leaf-sandwich}
\end{equation}
where \(\sigma_{\mathrm{leaf,AO}}^{*}\) is the leaf-only optimum under the
AND--OR semantics; the two sides differ by at most
\((\Delta_{\mathcal G}+1)c_B\).  Finally,
\begin{equation}
  \Delta_{\mathcal G}
  \le
  \frac{\Lambda_q}{\varepsilon},
\label{eq:envelope-slack-bound}
\end{equation}
so the slack is logarithmic in the per-vertex derivation multiplicities and
only ambiguous vertices contribute genuine derivation diversity.
\end{corollary}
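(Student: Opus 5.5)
The plan is to read the converse off the union-bound side of the envelope law, Theorem~\ref{thm:envelope-rigidity}(b), and to control the size of the witness family with the counting bound \eqref{eq:candidate-count-bound}.

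\textbf{Step 1: the threshold bound \eqref{eq:envelope-converse}.} Suppose \(S\) is \((\varepsilon,\delta)\)-reliable for \(q\). The last inequality of \eqref{eq:envelope-probability} then gives
\[
  1-\delta\le|\mathcal G_S(q,B)|\,(1-\varepsilon)^{e_*(q,S)}.
\]
By \eqref{eq:candidate-count-bound}, \(\ln|\mathcal G_S(q,B)|\le\sum_{v}\ln(1+r_v)=\Lambda_q\). Taking logarithms yields
\[
  e_*(q,S)\,\ln\!\bigl(1/(1-\varepsilon)\bigr)\le\ln\!\bigl(1/(1-\delta)\bigr)+\Lambda_q .
\]
Since \(e_*\) is an integer, this is exactly \(e_*(q,S)\le N^+_{\mathcal G}(\varepsilon,\delta)\).

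For the second inequality, put \(a:=\ln(1/(1-\delta))/\ln(1/(1-\varepsilon))\), so that \(N^*=\lfloor a\rfloor\) by \eqref{eq:model-N-star}. Then \(N^+_{\mathcal G}=\lfloor a+\Delta_{\mathcal G}\rfloor\le a+\Delta_{\mathcal G}<N^*+1+\Delta_{\mathcal G}\).

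The slack bound \eqref{eq:envelope-slack-bound} follows from \(\ln(1/(1-\varepsilon))\ge\varepsilon\).

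The meta-claim about converses is a reading of how those converses were proved. Every storage lower bound in Sections~\ref{sec:single-query-kernel}--\ref{sec:semantic-scenario} uses reliability only through the implication ``reliable \(\Rightarrow\) exposed count \(\le N^*\)''. In the relaxed model that implication is replaced by ``reliable \(\Rightarrow\) \(e_*\le N^+_{\mathcal G}\)''.

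\textbf{Step 2: the leaf-only sandwich \eqref{eq:envelope-leaf-sandwich}.} The key observation, and the step needing most care, is that for a leaf-only cache \(S\subseteq B\) the condition ``sources in \(B\cup S\)'' is the same as ``sources in \(B\)''. Hence \(\mathcal G_S(q,B)=\mathcal G(q,B)\). Moreover, in any witness \(G\) every base vertex is a source, so a path leaving one leaf cannot pass through another leaf. Exactly as in Corollary~\ref{cor:leaf-only-opt}, this gives
\[
  D^G_{\mathrm{exp}}(q,S)=V_0(G)\setminus S,
  \qquad
  e_G(q,S)\ge|V_0(G)|-|S|\ge\kappa_q^{\min}-|S| .
\]

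\emph{Lower bound.} If \(S\) is reliable, Step~1 gives \(\kappa_q^{\min}-|S|\le e_*\le N^+_{\mathcal G}\). Hence \(|S|\ge(\kappa_q^{\min}-N^+_{\mathcal G})^+\), and the cost is at least \((\kappa_q^{\min}-N^+_{\mathcal G})^+c_B\).

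\emph{Upper bound.} Take \(G_0\in\mathcal G(q,B)\) with \(|V_0(G_0)|=\kappa_q^{\min}\), and cache any \((\kappa_q^{\min}-N^*)^+\) of its leaves. Then \(e_{G_0}(q,S)\le N^*\), and Corollary~\ref{cor:envelope-certified-threshold} certifies reliability at cost \((\kappa_q^{\min}-N^*)^+c_B\).

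\emph{Gap.} The two sides differ by at most \((N^+_{\mathcal G}-N^*)c_B\), which is at most \((\Delta_{\mathcal G}+1)c_B\) by Step~1.
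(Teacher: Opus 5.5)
Your proposal is correct and follows the paper's proof essentially step for step. The converse comes from the union-bound side of Theorem~\ref{thm:envelope-rigidity}(b), combined with $|\mathcal G_S(q,B)|\le e^{\Lambda_q}$ and integer rounding, and the leaf-only sandwich uses $e_G(q,S)=|V_0(G)\setminus S|$ for $S\subseteq B$ together with Corollary~\ref{cor:envelope-certified-threshold} applied to a witness attaining $\kappa_q^{\min}$ (your gap estimate implicitly uses $N^+_{\mathcal G}\ge N^*$, which holds because $\Lambda_q\ge 0$).
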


\begin{proof}
By \eqref{eq:candidate-count-bound},
\(|\mathcal G_S(q,B)|\le e^{\Lambda_q}\).  Reliability and the upper bound in
\eqref{eq:envelope-probability} give
\(1-\delta\le\Pr_{\varepsilon}[q\in\Cn(\widetilde B\cup S)]
\le e^{\Lambda_q}(1-\varepsilon)^{e_*(q,S)}\),
and solving for the integer \(e_*(q,S)\) yields
\eqref{eq:envelope-converse}; the comparison with \(N^*(\varepsilon,\delta)\)
is immediate from \eqref{eq:model-N-star}.  For
\eqref{eq:envelope-leaf-sandwich}, let \(S\subseteq B\).  Then every
\(S\)-relative candidate is \(B\)-grounded, and a path from a base leaf is
cache-free if and only if its first vertex is not cached, so
\(e_G(q,S)=|V_0(G)\setminus S|=\kappa_G-|S\cap V_0(G)|\) with
\(\kappa_G:=|V_0(G)|\).  The upper side of
\eqref{eq:envelope-leaf-sandwich} follows by caching any
\((\kappa_q^{\min}-N^*)^+\) leaves of a \(B\)-grounded witness attaining
\(\kappa_q^{\min}\) and applying
Corollary~\ref{cor:envelope-certified-threshold}.  For the lower side,
reliability forces \(e_*(q,S)\le N^+_{\mathcal G}\), i.e.\
\(|S\cap V_0(G)|\ge\kappa_G-N^+_{\mathcal G}\) for some \(G\), hence
\(|S|\ge(\kappa_q^{\min}-N^+_{\mathcal G})^+\).
Finally \eqref{eq:envelope-slack-bound} uses
\(\ln(1/(1-\varepsilon))\ge\varepsilon\).
\end{proof}

\section*{Reproducibility Statement}
All numerical values reported in Section~\ref{sec:numerical} are generated from
explicit formulas stated in the paper together with Monte Carlo simulations
based on independent premise erasures.  The scripts used to generate the reported numerical
values and verification plots fix all random seeds and implement the exact
binomial inversions in log space.  The author will release the complete source
code, experiment data, and plotting scripts upon publication.


\bibliographystyle{IEEEtran}
\bibliography{ref}

\vfill
\end{document}